\newcommand{\ER}{Erdős–Rényi}
\newcommand{\nul}{\mathcal{N}}
\newcommand{\pla}{\mathcal{P}}
\newcommand{\adv}{\textup{R}}
\newcommand{\algadv}{\textup{Adv}}
\newcommand{\F}{\mathcal{F}}
\newcommand{\alg}{\textup{alg}}
\newcommand{\Span}{\textup{Span}}
\newcommand{\sym}{\textup{sym}}
\newcommand{\Var}{\textup{Var}}
\newcommand{\rad}{\textup{Rad}}
\newcommand{\clique}{\textup{clique}}
\title{On optimal distinguishers for Planted Clique}
\author{Ansh Nagda\thanks{UC Berkeley, \url{anshnagda@gmail.com}, Research supported by NSF CCF-2342192.
}\and Prasad Raghavendra\thanks{UC Berkeley, \url{raghavendra@berkeley.edu}, Research supported by NSF CCF-2342192.
}}
\date{}
\begin{document}
\maketitle 
\begin{abstract}

In a {\it distinguishing problem}, the input is a sample drawn from one of two distributions and the algorithm is tasked with identifying the source distribution.
The performance of a distinguishing algorithm is measured by its advantage, i.e., its incremental probability of success over a random guess.

A classic example of a distinguishing problem is the Planted Clique problem, where the input is a graph sampled from either $G(n,1/2)$ -- the standard Erdős–Rényi model, or $G(n,1/2,k)$ -- the Erdős–Rényi model with a clique planted on a random subset of $k$ vertices. 
The \emph{Planted Clique Hypothesis} asserts that efficient algorithms cannot achieve advantage better than some absolute constant, say $1/4$, whenever $k=n^{1/2-\Omega(1)}$.

In this work, we aim to precisely understand the optimal distinguishing advantage achievable by efficient algorithms on Planted Clique.
We show the following results under the Planted Clique hypothesis:
\begin{itemize}
    \item \textbf{Optimality of low-degree polynomials:} No efficient algorithm can beat the advantage the optimal low-degree polynomial. Concretely, this means that the advantage of any efficient algorithm is at most $(1+o(1))\cdot k^2/(\sqrt{\pi}n)$, which is optimal in light of a simple edge-counting algorithm achieving this bound.

    \item \textbf{Harder planted distributions:} There is an efficiently sampleable distribution $\pla^*$ supported on graphs containing $k$-cliques such that no efficient algorithm can distinguish $\pla^*$ from $G(n,1/2)$ with advantage $n^{-d}$ for an arbitrarily large constant $d$. In other words, there exist alternate planted distributions that are much harder than $G(n,1/2,k)$.
\end{itemize}
Along the way, we prove a constructive hard-core lemma for a broad class of distributions with respect to low-degree polynomials. This result is applicable much more widely beyond Planted Clique and might be of independent interest.

\end{abstract}
\thispagestyle{empty}

\newpage

\clearpage
\setcounter{page}{1}

\section{Introduction}

In a {\it distinguishing problem}, one receives a sample drawn from one of two distributions, which we will refer to as the {\it ``planted"} and {\it ``null''} distributions. The task of the algorithm is to determine the source distribution.

The archetypal example of a {\it distinguishing problem} is the Planted Clique problem. Here the null distribution $\nul=G(n,1/2)$ is the \ER\ model over $n$-vertex graphs, where each edge is included independently with probability $\frac{1}{2}$. The planted distribution $\pla=G(n,1/2,k)$ is parametrized by a number $0\leq k\leq n$, and is sampled as follows.\footnote{This planted distribution has been referred to as the {\it binomial} $k$ model \cite{hajek2015computational,bresler2023detection}. See \cite{hirahara2024planted} for a comparison with other models.}
\begin{itemize}
    \item Sample a random subset $S\subseteq [n]$ of vertices by including each vertex independently with probability $\frac{k}{n}$.
    \item Generate an \ER\  random graph $G\sim G(n,1/2)$.
    \item Plant a clique amongst the vertices in $S$. That is, set $G_{ij}=1$ for all $i,j\in S$.
\end{itemize}

We will focus on the computationally hard regime where $k = n^{1/2-\alpha}$ for some $0<\alpha<1/2$. In this regime, the Planted Clique problem is information theoretically easy, but has long been hypothesized to be intractable for polynomial-time algorithms \cite{jerrum1992large,kuvcera1995expected, alon1998finding,krivelevich2002approximating}.  

The intractability of distinguishing problems is quantified in terms of the \emph{distinguishing advantage}. Let $A$ be a \emph{test}, that is, a randomized mapping $A$ from the set of $n$-vertex graphs to $\{\pm 1\}$. We will define the \emph{advantage} achieved by $A$ as
\[\algadv^{(\pla,\nul)} (A):=|\E_{\pla}[A] - \E_{\nul}[A]| = 2\cdot |\Pr_{x\sim\pla}[A(x)=1] - \Pr_{x\sim\nul}[A(x)=1]|\in [0,2]\footnote{Our choice of normalization is unconventional -- advantage is usually defined as \emph{half} of this quantity. Nevertheless, this will be more notationally convenient.}.\]

Above, the expectation is taken over both $x$ and the internal randomness of $A$. In this terminology, the Planted Clique Hypothesis can be stated as follows.
\begin{conjecture} (Planted Clique Hypothesis) \label{conj:planted}
Let $\alpha>0$ and set $k=n^{1/2-\alpha}$. Let $\nul=G(n,1/2)$ and $\pla = G(n,1/2,k)$. For every randomized polynomial time test $A$ and all sufficiently large $n$,
\[ \algadv^{(\pla,\nul)}(A) \leq 1/4.\]
\end{conjecture}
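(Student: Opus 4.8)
The plan is to establish Conjecture~\ref{conj:planted} by \emph{reduction} from a better-understood intractability assumption: a direct unconditional proof is out of the question, since ruling out every randomized polynomial-time test for a concrete detection task would already yield strong circuit lower bounds, so the realistic target is a conditional proof — deriving the $\le 1/4$ bound from, say, the hardness of refuting random $3$-SAT at a suitable density, or from a worst-case assumption in $\mathrm{NP}$ — supplemented by matching \emph{unconditional} bounds against structured algorithm classes, which is the quantitative content the paper's low-degree optimality theorem is built to make meaningful.

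The core argument is a contrapositive reduction. Suppose a randomized polynomial-time test $A$ has $\algadv^{(\pla,\nul)}(A) > 1/4$ for $\nul = G(n,1/2)$ and $\pla = G(n,1/2,k)$ with $k = n^{1/2-\alpha}$. First I would \emph{amplify}: convert $A$ into a near-perfect distinguisher by evaluating $A$ on a carefully chosen family of correlated sub-instances extracted from a single input graph — random vertex subsamples, random edge-resamplings, and the disjointness/tensoring manipulations standard for planted problems — and combining the outputs into a test of advantage $1 - n^{-\omega(1)}$. Second, I would bootstrap this strong distinguisher into a \emph{recovery} procedure for the planted set $S$ via the usual search-to-decision self-reduction (repeatedly querying whether a clique survives conditioning on a fixed vertex lying in or out of $S$). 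Third, I would feed that recovery routine into a worst-case-to-average-case reduction for clique-type problems — the kind of reduction whose existence is itself a central open question — to contradict the assumed hardness of the base problem, thereby forcing $\algadv^{(\pla,\nul)}(A) \le 1/4$.

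The complementary, and currently the only executable, component is to prove the bound against restricted algorithms and to read Conjecture~\ref{conj:planted} as the assertion that these lower bounds extend to all of polynomial time. Concretely: show unconditionally that the low-degree advantage is at most $\approx k^2/(\sqrt{2}\, n) = o(1) < 1/4$ for $k = n^{1/2-\alpha}$ — a direct cumulant/Fourier computation against polynomials of degree $n^{o(1)}$ — and establish analogous $\le 1/4$-type bounds for sum-of-squares relaxations, statistical-query algorithms, and local/message-passing algorithms. Since the paper's main theorem shows (conditionally) that no efficient algorithm beats the low-degree advantage, these matching unconditional low-degree lower bounds are precisely the quantity one would hope to promote to arbitrary efficient tests.

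The main obstacle is the amplification step. Unlike a distinguishing task with i.i.d.\ samples, the algorithm here sees a single graph, and the low-degree-optimality phenomenon suggests the attainable advantage is pinned near $k^2/n$ and is \emph{not} freely boostable: reusing randomness does not yield independent trials, and the correlated subsampling one is forced to use perturbs the effective clique size and planting probability, so keeping the induced null and planted marginals faithful while driving the advantage to $1$ needs a genuinely new idea — as does producing the worst-case-to-planted-clique reduction in the last step. Absent both, Conjecture~\ref{conj:planted} remains a hypothesis, which is why the remainder of the paper derives its consequences rather than the conjecture itself.
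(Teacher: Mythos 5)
The statement you were asked about is a \emph{conjecture}, not a theorem: the paper takes the Planted Clique Hypothesis as its base assumption and derives consequences from it (via \cref{thm:hirahara}, \cref{thm:main-1}, etc.); it offers no proof, and none is known. Your proposal correctly recognizes this and is, by its own admission, not a proof but a survey of why a proof is out of reach. On that meta-level you are right, and your closing paragraph accurately identifies the two fatal obstructions in your own sketch.

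To be concrete about why the sketched reduction cannot be completed: the amplification step is not merely difficult but provably limited in the direction you would need. The Hirahara--Shimizu self-reductions cited in the paper (\cref{thm:hirahara}) amplify a constant advantage \emph{downward} in the problem parameters --- they show that advantage $1/4$ at clique size $k$ implies advantage $\Omega(k^2/n)$ cannot be beaten by more than $n^{\gamma}$ --- but there is no known way to take a test with advantage slightly above $1/4$ on a \emph{single} graph and boost it to advantage $1-o(1)$, precisely because the input is one sample rather than i.i.d.\ trials, and the paper's \cref{cor:ldlr-value} indicates the true optimal advantage is pinned near $k^2/\sqrt{2}n$. The third step, a worst-case-to-average-case reduction targeting $G(n,1/2,k)$, is a long-standing open problem with known barriers. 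Your unconditional component (the low-degree bound $\approx k^2/(\sqrt{2}n)$) is correct and is exactly \cref{claim:ldlr-exact} in the paper, but it establishes hardness only against $\F_{\leq d}$, which is evidence for the conjecture rather than a proof of it. In short: there is nothing in the paper to compare your proof against, and your proposal should be read as a (largely accurate) explanation of why the hypothesis must remain a hypothesis.
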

In light of the seeming intractability of the Planted Clique problem, it is natural to ask the following question.
\begin{quote}
    \centering {\it What is the optimal (i.e., smallest) advantage achievable by any efficient algorithm?}
\end{quote}
In an elegant sequence of works, Hirahara and Shimizu \cite{hirahara2023hardness, hirahara2024planted} demonstrated that random self-reductions can be used to amplify the hardness of the Planted Clique problem.  In particular, these works showed that  \cref{conj:planted} implies the following stronger bound on the advantage.

\begin{theorem}\label{thm:hirahara} (Theorem 1.3, (7) in \cite{hirahara2024planted})
    Assume the Planted Clique Hypothesis. Let $\alpha,\gamma>0$ and set $k=n^{1/2-\alpha}$. Let $\nul=G(n,1/2)$ and $\pla = G(n,1/2,k)$. For every randomized polynomial time test $A$, 
\[ \algadv^{(\pla,\nul)}(A) = n^{\gamma}\cdot O\left(\frac{k^2}{n}\right).\]
\end{theorem}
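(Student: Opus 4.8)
The plan is to read \cref{thm:hirahara} as a hardness-amplification statement and bootstrap it out of the weak hypothesis \cref{conj:planted}: starting from ``no efficient test beats advantage $1/4$'', I want to conclude ``no efficient test beats advantage $n^{\gamma}\cdot O(k^2/n)=n^{\gamma-2\alpha}$''. The lever is a random self-reduction using the fact that restrictions of planted-clique instances are again planted-clique instances. Concretely, fix a partition of $[N]$ into $M$ blocks of $n$ vertices and set $K=kM$, so that the planting probabilities $K/N$ and $k/n$ agree; then for $H\sim G(N,1/2)$ the $M$ induced block-subgraphs are independent copies of $G(n,1/2)$, and for $H\sim G(N,1/2,K)$ they are independent copies of $G(n,1/2,k)$ (the sets $S\cap B_i$ are disjoint, mutually independent, and $\mathrm{Bin}(n,k/n)$-distributed, and the edges within distinct blocks are disjoint fair coins). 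Thus one large sample delivers $M$ independent small samples.

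Given an efficient test $A$ with advantage $\delta$ at scale $n$, I would build $A'$ at scale $N$ that runs $A$ on each block, averages to get $Z=\tfrac1M\sum_i A(H[B_i])$, and outputs whether $Z$ exceeds a threshold placed between its two conditional means (tunable by sampling, since $G(n,1/2,k)$ is efficiently samplable). Since the blocks are i.i.d., $\E_{\pla}[Z]-\E_{\nul}[Z]=\pm\delta$ exactly and $\mathrm{Var}(Z)\le 1/M$ under both distributions, so by Chebyshev $A'$ has advantage $\Omega(\min(1,\sqrt{M}\,\delta))$, which is a constant $>1/4$ once $M\approx 1/\delta^2$. For the contradiction I need \cref{conj:planted} to apply at $(N,K)$, i.e.\ $K=N^{1/2-\Omega(1)}$; with $N=nM$, $K=kM$ this holds ($K=N^{1/2-\alpha'}$ with $\alpha'>0$) exactly when $M$ is at most about $n^{2\alpha}$, which with $M\approx 1/\delta^2$ is the range $\delta\ge n^{-\alpha}$ (up to lower-order factors). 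Hence $\delta\le n^{-\alpha+o(1)}$.

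The hard part will be closing the factor-of-two gap between this $n^{-\alpha+o(1)}$ and the target $n^{-2\alpha+o(1)}=k^2/n\cdot n^{o(1)}$. The loss is precisely the $\binom N2-M\binom n2$ cross-block edges the block reduction discards (sanity check: edge-counting at scale $N$ has advantage $K^2/N=M\cdot(k^2/n)$, but the block reduction applied to edge-counting at scale $n$ recovers only $\sqrt M\cdot(k^2/n)$). To recover them I would instead run $A$ on $M=n^{2\tau}$ \emph{overlapping} uniformly random $n$-subsets of $[N]$ with $N=n^{1+\tau}$, so that any two subsets overlap in only an $\approx n^{-\tau}$ fraction of vertices; a conditional-independence argument — given the edges (and, in the planted model, the clique-membership flags) inside the overlap $W$, the two induced subgraphs are independent — bounds the pairwise covariance of $A$ across two subsamples by $\mathrm{Var}\bigl(\E[A(H[V])\mid\Pi_W]\bigr)$, where $\Pi_W$ is the structure on the $|W|\approx n^{1-\tau}$ overlap vertices. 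Under the null measure a short Fourier computation gives $\mathrm{Var}(\E[A\mid\Pi_W])=O(n^{-2\tau})$ — every nonconstant monomial spans at least two vertices, all of which lie in the restriction with probability $O(n^{-2\tau})$ — and this covariance bound is exactly strong enough that thresholding the $M$-run average amplifies $\delta$ by a factor $n^{\tau}$ (not $n^{\tau/2}$), giving $\delta\le n^{-\tau+o(1)}$ for every $\tau<2\alpha$, hence the theorem. The obstacle is to establish the same $O(n^{-2\tau})$ covariance bound under the \emph{planted} measure $G(N,1/2,K)$, where the monomials are no longer orthogonal and $\E[A\mid\Pi_W]$ could a priori fluctuate more (for instance if $A$ behaves like a clique-membership oracle); one must show that any such extra fluctuation is suppressed by the rarity of the planted structure inside the small set $W$, perhaps via a hard-core-type decomposition of $A$ or by a careful comparison to the null.
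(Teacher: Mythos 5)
First, a point of comparison: the paper does not prove \cref{thm:hirahara} at all — it imports it from Hirahara–Shimizu \cite{hirahara2024planted}, whose argument is a random self-reduction in the same spirit as yours (their ``embedding''/``shrinking'' maps play the role of your block and overlapping-subsample constructions). So your attempt cannot be checked against an in-paper proof; it has to stand on its own, and as written it does not.

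The gap is the one you yourself flag, and it is not a technicality — it is the crux. Your Phase~1 (disjoint blocks) is sound but only yields $\delta \leq n^{-\alpha+o(1)}$, which is quadratically short of the claimed $n^{\gamma-2\alpha}$; the theorem therefore rests entirely on Phase~2. There, thresholding the average $Z$ of $A$ over $M=n^{2\tau}$ overlapping subsamples needs two things: (i) $Z$ concentrates at the null mean under $G(N,1/2)$, which your Fourier/overlap computation plausibly gives, and (ii) $Z$ exceeds the threshold with probability bounded away from the null success probability under $G(N,1/2,K)$. For (ii) a mean gap of $\delta$ alone is useless, since $\delta=o(1)$ and $Z\in[-1,1]$; you need a variance (or anticoncentration) bound for $Z$ under the \emph{planted} measure, i.e.\ a bound on $\mathrm{Cov}\bigl(A(H[V_1]),A(H[V_2])\bigr)$ where the two subsamples share both the overlap edges and the global planted set $S$. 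Your null argument does not transfer: conditioning on $(x_W,H[W])$ the monomials are no longer orthogonal, and worse, the shared global clique creates a common component across \emph{all} subsamples (e.g.\ the fluctuation of $|S\cap V_i|$ induced by the fluctuation of $|S|$ is of the same order $\sqrt{k}$ as the per-subsample fluctuation), so for a general test $A$ the correlations need not decay with the overlap size at the rate you need. Controlling exactly this kind of planted-side behaviour is where the real work lies — it is what Hirahara–Shimizu's proof labors over, and it is the analogue of the anticoncentration machinery (\cref{lemma:planted-clique-hyp}) this paper builds for its own reduction. Until you supply a planted-measure covariance or anticoncentration bound (your ``hard-core-type decomposition'' suggestion is only a direction, not an argument), the contradiction with \cref{conj:planted} is not established and the theorem is not proved.
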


Notice that on average, graphs sampled from the planted distribution $\pla$ have $\binom{k}{2}/2$ more edges than an \ER\  graph sampled from $\nul$, owing to the clique planted among $k$ vertices.  This slight mismatch in the expected number of edges can be used to efficiently distinguish the planted and null distributions with an advantage of $\Omega\left(\frac{k^2}{n}\right)$.  
Thus the above bound on the distinguishing advantage is correct up to an arbitrary polynomial factor in $n$.

This leaves the question of whether we can characterize the optimal advantage for Planted Clique?
Subsequently, one may ask which class of algorithms are optimal distinguishers for the problem? Conversely, can we construct alternate planted distributions where the optimal advantage is significantly smaller than $\frac{k^2}{n}$, say $o(n^{-100})$?
This work make progress towards answering all these questions. In order to explain our results, we will first give an introduction to the \emph{low-degree method}.

\paragraph{Low-degree polynomials.}

Starting with the works of Hopkins and Steurer \cite{Hopkins18thesis,hopkins2017efficient},  the low-degree method has emerged as an effective heuristic to determine whether a distinguishing problem is computationally tractable or not.  Roughly speaking, the low-degree method is based on the hypothesis that if all degree $D$ polynomials fail on a distinguishing task, then all algorithms running in time $2^D$ fail too.

To elucidate further, we will first set up some terminology on real-valued test functions.
Fix a distinguishing problem between a planted distribution $\pla$ and a null distribution $\nul$, both over a set $\Omega$.
For a real-valued test function $f:\Omega\to \R$, let us define $\adv^{(\pla,\nul)}(f)$ as the ratio
\[\adv^{(\pla,\nul)}(f):=\frac{\E_{\pla}[f] - \E_{\nul}[f]}{\|f\|_{2,\nul}} = \frac{\E_{\pla}[f] - \E_{\nul}[f]}{\sqrt{\E_\nul[f^2]}}\in \R.\]
Informally, one can think of $\adv^{(\pla,\nul)}(f)$ measuring the ``performance'' of $f$ at distinguishing between $\pla$ and $\nul$ -- note that if the output of $f$ is in $\{\pm 1\}$, then one recovers the advantage
\[|\adv^{(\pla,\nul)}(f)| = \algadv^{(\pla,\nul)}(f).\]
$\adv^{(\pla,\nul)}(f)$ is not as readily interpreted if the output of $f$ is \emph{not} restricted to $\{\pm 1\}$, but we will nevertheless treat it as a sort of distinguishing advantage.
For a family of test functions $\F\subseteq L^2(\nul)$, let $\adv^{(\pla,\nul)}[\F]$ denote the optimal advantage achieved by a function in $\F$, i.e.,
\[\adv^{(\pla,\nul)}[\F] := \sup_{f\in \F}\adv^{(\pla,\nul)}(f).\]
%
%
For simplicity, let us restrict our attention to the case where $\Omega=\{\pm 1\}^N$ is the $N$-dimensional hypercube for some $N \in \N$. In our running example of Planted Clique, one should think of $N = \binom{n}{2}$; a graph $G$ will be represented by its flattened $\{\pm 1\}$-valued adjacency matrix in $\{\pm 1\}^{\binom{n}{2}}$.
%
Let $\F_{\leq d}$ denote the subspace of polynomials of degree at most $d$ over $\Omega$.
The quantity $\adv^{(\pla,\nul)}[\F_{\leq d}]$ denotes the ``advantage'' achieved by the best polynomial of degree at most $d$, and we will refer to it as the \emph{low-degree advantage}.
A particularly nice feature of the low-degree advantage is that it can often be explicitly computed for many distinguishing problems using analytic techniques.
%
%
The central thesis behind the low-degree method is the following conjecture:
\begin{conjecture}[Low-degree conjecture, informal]
    For many  ``natural'' distinguishing tasks between $\pla$ and $\nul$, $\adv^{(\pla,\nul)}[\F_{\leq O(\log N)}] = o(1)$ implies that for every polynomial time test $A$, $\algadv^{(\pla,\nul)}(A) = o(1)$.
\end{conjecture}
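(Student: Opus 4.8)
The plan is to prove a quantitative strengthening of the conjecture for distinguishing tasks with enough structure — Planted Clique being the prototype — namely that \emph{every} polynomial-time test $A$ satisfies $\algadv^{(\pla,\nul)}(A) = O\big(\adv^{(\pla,\nul)}[\F_{\le d}]\big)$ for $d = O(\log N)$, \emph{conditional on} a hardness hypothesis for the base problem (\cref{conj:planted} in the running example); this in particular forces the asserted $o(1)\Rightarrow o(1)$ statement. Since an unconditional proof would carry sweeping complexity-theoretic consequences, the phrase ``many natural'' is read here as a structural restriction on $(\pla,\nul)$: $\nul$ a product measure on $\{\pm1\}^N$, and $\pla$ efficiently samplable and admitting random self-reductions of Hirahara--Shimizu type. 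In the regime of interest the low-degree advantage $\delta := \adv^{(\pla,\nul)}[\F_{\le d}]$ is small but non-negligible (for Planted Clique, $\delta \approx k^2/(\sqrt{2}\,n)$), and I argue by contraposition: an efficient $A$ with $\algadv^{(\pla,\nul)}(A) \ge C\delta$ for a large constant $C$ will be shown to contradict the base hypothesis.

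The first and main ingredient is a \textbf{constructive hard-core lemma for low-degree polynomials}: if $\adv^{(\pla,\nul)}[\F_{\le d}] = \delta$, then $\pla$ decomposes as $\pla = (1-\mu)\,\pla^* + \mu\,\mathcal{E}$ with $\mu = O(\delta)$, where the ``hard core'' $\pla^*$ has \emph{negligibly small} low-degree advantage against $\nul$, is efficiently samplable, and inherits the structural features of $\pla$ (e.g., it is supported on graphs containing $k$-cliques) — this is precisely the harder planted distribution $\pla^*$ of the second bullet. The construction is a boosting / LP-duality argument carried out inside the function space $\F_{\le d}$: one repeatedly locates a degree-$\le d$ polynomial $f$ (normalized so $\E_\nul[f^2]=1$) with $\E_{\pla_i}[f]-\E_\nul[f]$ large and reweights the current residual $\pla_i$ by a bounded clipping of $f$ so as to suppress the mass where $f$ is large; after $\mathrm{poly}(1/\delta)$ rounds no such $f$ remains. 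Because every reweighting factor is an efficiently evaluable, bounded low-degree polynomial, the cumulative weight is efficiently computable and $\pla^*$ is efficiently samplable.

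The second ingredient routes the hard core back to the base problem. Given efficient $A$ with $\algadv^{(\pla,\nul)}(A) \ge C\delta$, linearity over the mixture yields $\algadv^{(\pla^*,\nul)}(A) \ge (C-O(1))\,\delta = \Omega(\delta)$; since $\pla^*$ is low-degree-indistinguishable from $\nul$, $A$ strictly \emph{beats} all low-degree polynomials on $(\pla^*,\nul)$. Now exploit the self-reducible structure of $\pla^*$: because it is efficiently samplable and still ``planted-clique-shaped,'' a Hirahara--Shimizu-style random self-reduction lifts $A$ to a polynomial-time test for the base problem with advantage amplified all the way to $\Omega(1)$, contradicting \cref{conj:planted}. (In the general template: self-reducibility of $\pla$ amplifies an $\Omega(\delta)$-advantage distinguisher for the hard core into an $\Omega(1)$-advantage distinguisher for the base distribution.) Chaining these implications gives $\algadv^{(\pla,\nul)}(A) = O\big(\adv^{(\pla,\nul)}[\F_{\le d}]\big)$, hence $\adv^{(\pla,\nul)}[\F_{\le O(\log N)}] = o(1)$ forces $\algadv^{(\pla,\nul)}(A) = o(1)$.

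The step I expect to fight hardest is making the hard-core lemma simultaneously \emph{constructive} and \emph{structure-preserving}. Mere existence of a hard core is a minimax statement, but I need the reweighting to stay polynomial-time evaluable and efficiently samplable (low-degree polynomials are unbounded, so the clipping thresholds, the number of boosting rounds $\mathrm{poly}(1/\delta)$, and the total reweighting magnitude all have to be controlled — this is where $\delta$ being non-negligible, as it is for $k=n^{1/2-\alpha}$, is essential), and I must verify that $\pla^*$ genuinely retains the planted structure (contains $k$-cliques, supports the self-reduction). The secondary obstacle is checking that the random self-reduction (approximately) commutes with the reweighting, so that a distinguisher for $(\pla^*,\nul)$ really does yield one for the base $(\pla,\nul)$ with the claimed amplification; this is also precisely where the \emph{constant} (rather than $n^{o(1)}$) factor in the final bound is gained, improving on \cref{thm:hirahara}.
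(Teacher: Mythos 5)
First, note that the statement you are addressing is stated in the paper only as a conjecture (with pointers to \cite{Hopkins18thesis,KWB19} for precise formulations); the paper never proves it, and what you are actually attempting is its conditional instantiation for Planted Clique, i.e.\ \cref{thm:main-1}. Read that way, your architecture --- build a low-degree hard core $\pla^*$, transfer the assumed distinguisher to $(\pla^*,\nul)$ by linearity over the mixture, then amplify --- inverts the paper's logic: the paper proves \cref{thm:main-1} directly, with no hard core, by decomposing the test as $A=A_++A_-$ with $A_+\in\F_{\le d}$ and $A_-\perp\F_{\le d}$; the hard-core lemma is used only for the separate result \cref{thm:main-clique}. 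The inversion is not fatal by itself, but it exposes the gaps below.

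The central gap is the amplification step, which you delegate entirely to ``a Hirahara--Shimizu-style random self-reduction [that] lifts $A$ \ldots with advantage amplified all the way to $\Omega(1)$.'' No such generic amplification can exist: the edge-counting test also achieves advantage $\Theta(\delta)=\Theta(k^2/n)$ on the base problem and obviously cannot be amplified to $\Omega(1)$, so whatever amplifies your $A$ must exploit the specific fact that $A$'s correlation with the planted signal lives outside $\F_{\le d}$ --- and making that exploitable is the entire technical content of the paper. Concretely one needs (i) a randomized map $M$ (vertex resampling) whose noise operator $T$ contracts $\F_{\le d}^{\perp}$ by $p^{2\sqrt d}$ while shrinking the planted correlation by only $n^{-2\beta}$ (\cref{fact:eigen} and the computation around \cref{eqn:Tf-bound}), (ii) an efficiently computable surrogate for $TA_-$ (\cref{claim:approx-high-degree-part}), and (iii) a rounding step turning the real-valued statistic into a Boolean test, which rests on the anticoncentration result \cref{lemma:planted-clique-hyp} --- the most delicate part of the paper. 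None of these appear in your outline. There is also a structural incompatibility: to run the noise-based reduction the hard distribution must have the form $M(\pla')$ (\cref{thm:perturbation-full}), but $M(\pla')$ is supported on graphs with cliques of size $pk$ and is \emph{not} dominated by $\pla$ with bounded density, so your mixture decomposition $\pla=(1-\mu)\pla^*+\mu\,\mathcal{E}$ and the self-reduction cannot both be applied to the same $\pla^*$; the paper keeps $\pla'$ and $M(\pla')$ carefully separate for exactly this reason. Finally, your boosting-with-clipping construction of the hard core is a plausible alternative to the paper's LP-duality/SGD route, but the paper needs hypercontractivity of $V^{\odot 4}$ precisely to control the non-low-degree function $\max(0,1+\alpha)$ arising in the dual --- the same clipping issue you flag --- so that ingredient would have to be supplied in your version as well.
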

We refer the reader to \cite{Hopkins18thesis,KWB19} for a more precise statement.  
\paragraph{Computational hardness based on low-degree advantage.}
In light of the low-degree conjecture, it may be natural to hypothesize that low-degree polynomials yield the optimal advantage among all efficient algorithms.
For Planted Clique, our results support this hypothesis by proving that under \cref{conj:planted}, no efficient algorithm can achieve advantage better than the low-degree advantage. Specifically, we show the following.
\begin{theorem}\label{thm:main-1}
 Assume the Planted Clique Hypothesis (\cref{conj:planted}). Let $\alpha>0$ and set $k=n^{1/2-\alpha}$. Let $d\in \N$ be a sufficiently large constant. Let $\nul=G(n,1/2)$ and $\pla = G(n,1/2,k)$. For every randomized polynomial time test $A$,
\[ \algadv^{(\pla,\nul)}(A) \leq (1+ o_{n}(1)) \cdot \adv^{(\pla,\nul)} [ \F_{ \leq d}].\]
\end{theorem}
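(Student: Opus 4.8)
Write $\epsilon^{\star}:=\adv^{(\pla,\nul)}[\F_{\le d}]$; as recorded above, $\epsilon^{\star}=(1+o_n(1))\,\tfrac{k^2}{\sqrt 2\,n}$. The plan is to first subtract off the dominant (edge-count) component of the low-degree signal by replacing $\nul$ with a \emph{corrected null} $\nul^{\star}:=G(n,\tfrac12+p)$, where $p:=\tfrac{k^2}{2n^2}$ is chosen so that $\pla$ and $\nul^{\star}$ have the same expected edge density, and then to argue that the residual instance $(\pla,\nul^{\star})$ is hard for \emph{all} efficient tests, not merely for low-degree ones. Two elementary observations get the argument going. First, because $\nul^{\star}$ and $\nul$ are both product distributions, Cauchy--Schwarz gives, for \emph{every} test $A$,
\[
\algadv^{(\nul^{\star},\nul)}(A)\;\le\;\bigl\|\tfrac{d\nul^{\star}}{d\nul}-1\bigr\|_{2,\nul}\;=\;\sqrt{(1+4p^2)^{\binom{n}{2}}-1}\;=\;(1+o_n(1))\,\epsilon^{\star}.
\]
Second, a direct Fourier/moment computation gives $\adv^{(\pla,\nul^{\star})}[\F_{\le d}]=o_n(\epsilon^{\star})$: the degree-$\le d$ Fourier coefficients of $\pla$ and of $\nul^{\star}$ agree on matchings, and the remaining small graphs (paths, triangles, and so on) carry total squared mass only $o_n(\epsilon^{\star 2})$ once the edge direction has been absorbed into $\nul^{\star}$. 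Since $\algadv^{(\pla,\nul)}(A)\le\algadv^{(\pla,\nul^{\star})}(A)+\algadv^{(\nul^{\star},\nul)}(A)$, it now suffices to show that every efficient test obeys $\algadv^{(\pla,\nul^{\star})}(A)=o_n(\epsilon^{\star})$.

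For this residual bound I would invoke the constructive hard-core lemma, applied to the (efficiently sampleable) pair $(\pla,\nul^{\star})$, which by the above has low-degree advantage $o_n(\epsilon^{\star})$. The lemma should produce, for any target degree $D$, a decomposition $\pla=(1-\theta)\,\pla^{\flat}+\theta\,\pla^{\sharp}$ in which: (a) $\pla^{\flat}$ is ``low-degree explained'', meaning its likelihood ratio relative to $\nul^{\star}$ is, up to an $o_n(\epsilon^{\star})$ perturbation, a degree-$\le d$ polynomial, so that $\algadv^{(\pla^{\flat},\nul^{\star})}(A)=o_n(\epsilon^{\star})$ for \emph{all} $A$; and (b) $\pla^{\sharp}$ is efficiently sampleable, is of a form still amenable to the Hirahara--Shimizu self-reduction, and satisfies $\adv^{(\pla^{\sharp},\nul^{\star})}[\F_{\le D}]=o_n(\epsilon^{\star})$. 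The ``constructive'' content is that $\pla^{\sharp}$ arises from an Impagliazzo-style boosting whose correctors are degree-$\le d$ polynomials, hence polynomial-time evaluable, so the resulting sampler runs in polynomial time. Choosing $D=\omega(\log n)$, the Hirahara--Shimizu self-reduction together with \cref{conj:planted} then yields $\algadv^{(\pla^{\sharp},\nul^{\star})}(A)=o_n(\epsilon^{\star})$ for every efficient $A$: informally, any larger advantage would amplify --- via the self-reduction, under which $\nul^{\star}$ is stable since $p$ is scale-invariant and the clique parameter scales with the vertex count, together with a tight aggregation of $A$'s outputs over the resulting nearly independent sub-instances --- into a polynomial-time distinguisher for genuine Planted Clique with advantage exceeding $\tfrac14$, a contradiction. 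Combining (a) and (b), $\algadv^{(\pla,\nul^{\star})}(A)\le(1-\theta)\,o_n(\epsilon^{\star})+\theta\,o_n(\epsilon^{\star})=o_n(\epsilon^{\star})$, as required.

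The main obstacle, I expect, is keeping the whole pipeline essentially lossless, which is at stake in two places. (i) \emph{Inside the hard-core lemma:} the boosting that produces $\pla^{\sharp}$ composes a fresh degree-$\le d$ corrector each round over a possibly super-constant number of rounds, so the degree --- and the dynamic range of the composite reweighting --- grows; one must show the reweighting stays polynomially bounded (so $\pla^{\sharp}$ is polynomial-time sampleable) while the degree-$d$ advantage bound of $(\pla,\nul^{\star})$ is inherited by both $\pla^{\flat}$ and $\pla^{\sharp}$ with only a $(1+o_n(1))$ degradation, even though the guarantee for $\pla^{\sharp}$ is demanded at the far larger degree $D$. (ii) \emph{Inside the amplification:} for the leading constant to come out as $1/\sqrt 2$ rather than an $n^{o(1)}$ factor worse, the self-reduction must yield sub-instances independent enough that the combined low-degree advantage equals $(1+o_n(1))$ times that of $m$ independent copies, and the concentration of the aggregated statistic on the planted side must match its concentration under $\nul^{\star}$ with no constant-factor slippage. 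The remaining ingredients --- the $\chi^2$ and Fourier estimates behind the two observations above, the scale-invariance of $\nul^{\star}$ under subsampling, and the conversion of ``advantage $>1/\mathrm{poly}(n)$ on $\pla^{\sharp}$'' into a contradiction with \cref{conj:planted} --- should be comparatively routine.
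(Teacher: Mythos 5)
Your outer reductions are fine as far as they go: correcting the null to $\nul^{\star}=G(n,\tfrac12+p)$, the $\chi^2$/Cauchy--Schwarz bound $\algadv^{(\nul^{\star},\nul)}(A)\le(1+o(1))\,\epsilon^{\star}$, and the Fourier computation showing $\adv^{(\pla,\nul^{\star})}[\F_{\le d}]=o(\epsilon^{\star})$ are all plausible and correctly reduce the theorem to showing that \emph{every} efficient test has $\algadv^{(\pla,\nul^{\star})}(A)=o(\epsilon^{\star})$. But that residual statement is the entire content of the theorem, and the tools you invoke for it do not deliver it. \cref{thm:hirahara} only excludes advantage $n^{\gamma}\cdot\Omega(k^2/n)$, for the standard pair $(\pla,\nul)$: it says nothing about a modified null $\nul^{\star}$, nothing about a reweighted planted distribution $\pla^{\sharp}$, and in particular it cannot exclude advantage of order $c\,k^2/n$ or $o(k^2/n)$ --- indeed against the standard null the edge-count test achieves $\Theta(k^2/n)$, so no self-reduction argument of that type can be pushed below that scale without exactly the kind of lossless amplification machinery you flag as an ``obstacle'' and do not supply. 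Amplifying an advantage of size $o(\epsilon^{\star})$ into a constant-advantage distinguisher for genuine Planted Clique via subsampling/aggregation generically costs $\mathrm{poly}(1/\epsilon^{\star})$ correlated sub-instances and loses polynomial factors; handling this loss is precisely what is hard, and asserting it as ``comparatively routine'' assumes the theorem.

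The hard-core step has the same problem one level down. A hard-core/regularity decomposition (including \cref{thm:general-2} of this paper) only gives indistinguishability against the \emph{same} class used in its construction, i.e., low-degree polynomials; it does not give a component $\pla^{\flat}$ whose likelihood ratio is genuinely a degree-$d$ polynomial, hence statistically close to $\nul^{\star}$ and immune to \emph{all} tests, nor does it make $\pla^{\sharp}$ hard for all efficient algorithms. Converting low-degree hardness into hardness for all polynomial-time tests, with only a $(1+o(1))$ multiplicative loss, is exactly what the paper's reduction (\cref{thm:general-1}, instantiated as \cref{thm:perturbation-full}) accomplishes, and it does so by a mechanism absent from your proposal: given a test $A$ beating the low-degree advantage, decompose $f=\E[A]$ as $f_{+}+f_{-}$ with $f_{-}\perp\F_{\le d}$, apply the vertex-resampling noise operator $T$ (which maps a larger-clique planted distribution to the given one, fixes $\nul$, and contracts $f_{-}$ by $p^{\Theta(\sqrt d)}$ while preserving its excess mean under the planted law), and then round $|Tf_{-}|$ using an anticoncentration lemma (\cref{lemma:planted-clique-hyp}, proved via hypercontractivity for the lifted planted measure) to produce a constant-advantage distinguisher contradicting \cref{conj:planted} through \cref{thm:hirahara}. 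Without some substitute for this low-degree-to-computational conversion, your step (b) --- and hence the proof --- has a genuine gap.
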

As stated before, the low-degree advantage $\adv^{(\pla,\nul)} [ \F_{ \leq d}] $ can be computed explicitly, giving a concrete upper bound of about $k^2/(\sqrt{\pi}n)$ on the advantage. In fact, our proof can be refined slightly to yield the following nearly optimal bound.

\begin{corollary}\label{cor:ldlr-value}
Assume the Planted Clique Hypothesis (\cref{conj:planted}). Let $\alpha>0$ and set $k=n^{1/2-\alpha}$. Let $\nul=G(n,1/2)$ and $\pla = G(n,1/2,k)$. For every randomized polynomial time test $A$,
\[ \algadv^{(\pla,\nul)}(A) \leq (1+o_n(1))\cdot \frac{k^2}{\sqrt{\pi} n}.\]
\end{corollary}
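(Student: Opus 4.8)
The plan is to read \cref{cor:ldlr-value} off of \cref{thm:main-1}: once we show that the low-degree advantage satisfies $\adv^{(\pla,\nul)}[\F_{\leq d}] = (1+o_n(1))\cdot\frac{k^2}{\sqrt2\,n}$, the inequality $\algadv^{(\pla,\nul)}(A)\le (1+o_n(1))\cdot\adv^{(\pla,\nul)}[\F_{\leq d}]$ from \cref{thm:main-1} gives the corollary immediately. So the only real content is an explicit evaluation of the low-degree advantage for the Planted Clique pair $(\pla,\nul)$.

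First I would recall the standard linear-algebraic formula for the degree-$d$ advantage. Working in $L^2(\nul)$ over $\Omega=\{\pm1\}^{\binom n2}$ with the Fourier basis $\{\chi_S\}$ indexed by sets $S$ of potential edges, and writing $L=\mathrm d\pla/\mathrm d\nul$ for the likelihood ratio, one may assume $\E_\nul[f]=0$ (this leaves the numerator unchanged and only shrinks $\|f\|_{2,\nul}$), and then for $f=\sum_{1\le|S|\le d}\widehat f(S)\chi_S$ one has $\adv^{(\pla,\nul)}(f) = \big(\sum_S\widehat f(S)\widehat L(S)\big)/\big(\sum_S\widehat f(S)^2\big)^{1/2}$, which Cauchy–Schwarz maximizes at $\widehat f\propto \widehat L$. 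Hence $\adv^{(\pla,\nul)}[\F_{\leq d}]=\big(\sum_{1\le|S|\le d}\widehat L(S)^2\big)^{1/2}$ with $\widehat L(S)=\E_\nul[L\chi_S]=\E_\pla[\chi_S]$.

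Next I would compute the Fourier coefficients $\E_\pla[\chi_S]$. Conditioning on the planted vertex set $C$ (each vertex independently in $C$ with probability $k/n$), the monomial $\prod_{e\in S}G_e$ is identically $1$ when every vertex incident to $S$ lies in $C$, and otherwise some edge of $S$ has an endpoint outside $C$ and behaves as an independent fair $\pm1$ coin, making the conditional expectation $0$. Thus $\widehat L(S)=\Pr[V(S)\subseteq C]=(k/n)^{|V(S)|}$, where $V(S)$ is the vertex set spanned by $S$. Grouping edge sets by $v=|V(S)|$ and $m=|S|$, the single-edge term ($v=2$, $m=1$) contributes $\binom n2(k/n)^4=\frac{k^4}{2n^2}(1-o_n(1))$, while every other edge set spans $v\ge3$ vertices, so its total contribution is at most $\sum_{3\le v\le d}n^v\,2^{\binom v2}(k/n)^{2v}=O_d\big((k^2/n)^3\big)=O_d(n^{-6\alpha})$ (only $O_d(1)$ terms, since $d$ is constant), which is $o_n(k^4/n^2)$ because $k^2/n=n^{-2\alpha}$. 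Therefore $\adv^{(\pla,\nul)}[\F_{\leq d}]^2=\frac{k^4}{2n^2}(1+o_n(1))$, and taking square roots and plugging into \cref{thm:main-1} gives the claimed bound $(1+o_n(1))\cdot\frac{k^2}{\sqrt2\,n}$.

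I expect essentially no obstacle here: the hard part is entirely upstream in \cref{thm:main-1}, which we take as given, and this corollary is just the combinatorial bookkeeping above. The one point requiring a little care is checking that no higher-arity monomial (a triangle, a longer path or cycle, a disjoint union of two edges, etc.) matches the single-edge monomial in magnitude — this is exactly what pins down both the scaling $k^2/n$ and the constant $1/\sqrt2$, the latter arising from $\binom n2\sim n^2/2$.
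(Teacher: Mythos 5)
Your proposal is correct and matches the paper's route: the paper likewise derives the corollary from \cref{thm:main-1} together with an explicit evaluation of the low-degree advantage (\cref{claim:ldlr-exact} in \cref{sec:a3}), using the same formula $\adv^{(\pla,\nul)}[\F_{\leq d}]^2=\sum_{1\le|S|\le d}(k/n)^{2|V(S)|}$ with the single-edge term $\binom{n}{2}(k/n)^4$ dominating and all $|V(S)|\ge 3$ terms contributing $O(k^6/n^3)=o(k^4/n^2)$. The only cosmetic difference is that you re-derive the Fourier/likelihood-ratio formula via Cauchy--Schwarz, while the paper cites it as standard (and has \cref{claim:ldlr-expression} for the same purpose).
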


\begin{remark}[Optimality]
Conversely, one can show that a simple algorithm based on a low-degree polynomial achieves this bound -- testing whether the input graph has more than $\binom{n}{2}/2$ edges achieves advantage $\approx \frac{k^2}{\sqrt{\pi}n}$, so the above hardness result is tight up to $1+o_n(1)$ factors.
\end{remark}

The above results fall out as a special case of a much more general result that can be applied beyond the Planted Clique problem.
 In particular, we prove the above result for any distinguishing problems where the underlying planted measure satisfies a few natural properties such as hypercontractivity of low-degree polynomials; we expect that these properties hold in many settings where the planted distribution is obtained by adding a sparse signal onto the null distribution.  We defer the full statement of the general theorem to \cref{thm:general-1}.

Another immediate consequence of our general result (\cref{thm:general-1}) is that the optimality of low-degree polynomials holds even for perturbations of the planted distribution. 
Suppose $\pla'$ is a small perturbation of the planted distribution $\pla=G(n,1/2,k)$ in that the relative density $\frac{d\pla'}{d\pla}$ is bounded pointwise by $\poly(n)$. Let $\pla^*$ be a ``noisy'' version of $\pla'$ in the sense that it is obtained by adding a small amount of additional randomness to $\pla'$.

\begin{theorem}[Informal version of \cref{thm:perturbation-full}] \label{thm:perturbation}
Assume the Planted Clique hypothesis (\cref{conj:planted}). Let $\pla'$ and $\pla^*$ be as above, and
let $d$ be a sufficiently large constant. For any efficiently computable test $A$,
\[ \algadv^{(\pla^*, \nul)}(A) \leq \adv^{(\pla^*,\nul)}[\F_{\leq d}] + n^{-\Omega(\sqrt{d})} \cdot \left\| \frac{d \pla'}{d \pla} \right\|_{\infty} \]
\end{theorem}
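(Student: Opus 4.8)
The plan is to obtain \cref{thm:perturbation} by specializing the general theorem, \cref{thm:general-1}, to the distinguishing problem $(\pla^*,\nul)$, viewing $\pla^*$ as a noise‑smoothed, pointwise‑density‑bounded perturbation of $\pla = G(n,1/2,k)$. Two facts must be supplied. First, that $\pla^*$ meets the analytic hypotheses of \cref{thm:general-1} — chiefly, that low‑degree polynomials are hypercontractive with respect to $\pla^*$, together with the auxiliary regularity conditions feeding the constructive hard‑core lemma. Second, that the computational hardness of Planted Clique (\cref{conj:planted}) transfers to $(\pla^*,\nul)$; as explained below this is essentially automatic. Granting these, \cref{thm:general-1} yields an upper bound of the form $\adv^{(\pla^*,\nul)}[\F_{\leq d}]$ plus an additive error term, and the only remaining work is to see that this error is at most $n^{-\Omega(\sqrt d)}\cdot\|d\pla'/d\pla\|_\infty$.

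For the hardness transfer, write $M := \|d\pla'/d\pla\|_\infty$ and let $N$ be the efficiently sampleable noise channel with $\pla^* = N_*\pla'$. The key observation is that $N$ fixes the null distribution, $N_*\nul=\nul$; hence precomposing any randomized polynomial‑time test $A$ with $N$ turns a distinguisher for $(\pla^*,\nul)$ into one for $(\pla',\nul)$ of exactly the same advantage, so hardness of $(\pla',\nul)$ for efficient algorithms implies hardness of $(\pla^*,\nul)$. In turn $(\pla',\nul)$ inherits the hardness of $(\pla,\nul)$ up to the factor $M$, by a standard density‑reweighting reduction exploiting the pointwise bound $d\pla'/d\pla \le M$. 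Thus the hardness input that \cref{thm:general-1} requires is just \cref{conj:planted} itself — combined with the Hirahara--Shimizu amplification (\cref{thm:hirahara}) — with $M$ entering only as the parameter quantifying how far $\pla^*$ is allowed to stray from $\pla$; it is also here that $M$ appears in the error term.

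For the analytic hypotheses, the noise is essential and is precisely why one passes to $\pla^*$ rather than working with $\pla'$ directly: an arbitrary bounded‑density perturbation $\pla'$ can concentrate its excess mass on atypical graphs where low‑degree polynomials fail to be hypercontractive, whereas applying the noise channel to a sample from $\pla'$ is the same as applying a hypercontractive smoothing operator $T$ to the density, $d\pla^*/d\nul = T\,(d\pla'/d\nul)$, which restores hypercontractivity of degree‑$\leq d$ polynomials under $\pla^*$ — and, in the same way, the remaining regularity conditions — uniformly in the choice of $\pla'$. As for the shape of the error: the quantity \cref{thm:general-1} charges is morally the part of the $\pla^*$ signal that degree‑$d$ polynomials cannot exploit; for a planted‑clique‑like signal a degree‑$d$ polynomial already uses clique structure on $\Theta(\sqrt d)$ vertices, so the residual ``deep'' part of the signal is carried by cliques on $\Omega(\sqrt d)$ vertices and contributes at the scale $n^{-\Omega(\sqrt d)}$, while the perturbation can inflate this by at most a factor $M$, together giving $n^{-\Omega(\sqrt d)}\cdot M$.

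The step I expect to be the real obstacle is \cref{thm:general-1} itself in the quantitative form needed here: arranging the constructive hard‑core lemma so that the additive error is a genuinely polynomially small multiple of $M$ rather than merely the $1+o_n(1)$ multiplicative slack of \cref{thm:main-1}, and, in particular, handling the fact that the ``deep'' residual left after the degree‑$d$ peeling carries essentially all of the $\chi^2$‑weight relative to $\nul$ and so must be controlled by computational hardness rather than by any $L^2$ estimate. Given \cref{thm:general-1} in that form, the derivation of \cref{thm:perturbation} is bookkeeping: instantiate it at $\pla^*$, verify the regularity hypotheses through the smoothing operator $T$, invoke the noise‑precomposition reduction for hardness, and read off the error.
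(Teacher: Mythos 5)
There is a genuine gap, and it is the heart of the proof: the direction and strength of the hardness argument. Your plan derives hardness of $(\pla^*,\nul)$ by pushing hardness \emph{forward} — density reweighting from $(\pla,\nul)$ to $(\pla',\nul)$ using $d\pla'/d\pla\le M$, then precomposition with the noise channel to reach $(\pla^*,\nul)$. Both steps are sound, but they can only transfer the bound already available for $(\pla,\nul)$, which under \cref{conj:planted} and \cref{thm:hirahara} is of order $n^{\gamma}k^2/n$ (and the reweighting step even incurs an additive loss of order $M-1$, since $\E_{\pla'}[a]-\E_{\nul}[a]\le M(\E_{\pla}[a]-\E_{\nul}[a])+(M-1)\E_{\nul}[a]$). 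The statement to be proved is far stronger: advantage at most $\adv^{(\pla^*,\nul)}[\F_{\leq d}]+n^{-\Omega(\sqrt d)}\cdot M$, where the first term can be made as small as $n^{-2D}$ — indeed this is exactly how \cref{thm:main-clique} is obtained — so no forward transfer of a $k^2/n$-scale bound can ever reach it. The paper's proof runs the reduction in the contrapositive direction, and the mechanism you are missing is an \emph{amplification}: given an efficient $A$ with $\algadv^{(\pla^*,\nul)}(A)$ exceeding $\adv^{(\pla^*,\nul)}[\F_{\leq d}]+\delta\cdot M$, set $f=\E[A(\cdot)]$, split $f=f_++f_-$ with $f_+\in\F_{\leq d}$ and $f_-\perp\F_{\leq d}$, and apply the noise operator $T$ of the permuted vertex-resampling chain. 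Under $\nul$, $T$ contracts $f_-$ by $\eps=p^{\sqrt d}=n^{-(\beta-\alpha)\sqrt d}$ (any $\chi_S$ with $|S|>d$ has $|V(S)|>\sqrt d$), while $\|Tf_-\|_{2,\pla}\ge\|Tf_-\|_{1,\pla}\ge M^{-1}\E_{\pla'}[Tf_-]=M^{-1}\E_{\pla^*}[f_-]\ge\delta$ — this is where the density ratio actually enters, inside the amplification, not in a hardness transfer. Thresholding (an efficiently computable approximation of) $Tf_-$, and using anticoncentration of noise-resistant functions under the unperturbed $\pla$ (\cref{lemma:planted-clique-hyp}), yields a constant-advantage test for $(\pla,\nul)$, contradicting \cref{conj:planted} via \cref{thm:hirahara}. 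Your own closing remark — that the obstacle is getting \cref{thm:general-1} ``in the quantitative form needed'' — is precisely the point: that quantitative form is the content of the theorem, and your proposal supplies no mechanism for it.

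Two secondary misreadings compound this. First, the analytic hypothesis that must be verified is not hypercontractivity of low-degree polynomials under $\pla^*$ (nothing about $\pla^*$ or $\pla'$ is assumed beyond $\|d\pla'/d\pla\|_\infty<\infty$); it is an anticoncentration condition under the unperturbed $\pla$ for functions that survive the noise operator, which the paper proves by lifting $\pla$ to the product measure $\ber(k/n)^{\otimes n}\times G(n,1/2)$ and combining Bonami's lemma with a symmetric $p$-biased hypercontractive inequality — uniformity in $\pla'$ is automatic because the condition never refers to $\pla'$. Second, the constructive hard-core lemma (\cref{thm:general-2}) plays no role in \cref{thm:perturbation}; it is a separate ingredient used only afterwards to manufacture the particular $\pla'$ in \cref{thm:main-clique}, so there are no ``regularity conditions feeding the constructive hard-core lemma'' for this proof to arrange. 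Relatedly, the error term $n^{-\Omega(\sqrt d)}\cdot M$ is not a bound on any residual high-degree signal of $\pla^*$ (for adversarial $\pla'$ that signal need not be small at all); it is the advantage threshold beyond which the reduction manufactures a Planted Clique distinguisher, with the exponent coming from the spectral gap $p^{\sqrt d}$ of the vertex-resampling chain on $\F_{\leq d}^\perp$.
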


For the sake of readability, here we do not elaborate on the details of the aforementioned noise-addition process, other than to remark that it is done in a way that approximately preserves clique size. That is, $\pla^*$ is still essentially supported on graphs containing large cliques.

\paragraph{A Hard-Core Distribution.}

As noted earlier, the natural planted distribution $\pla$ can be distinguished from the null distribution with an advantage of $\Omega(\frac{k^2}{n}) \gg \frac{1}{n}$ simply by counting the number of edges. 
Motivated by recent applications of Planted Clique-like problems to cryptography \cite{abram2023cryptography,bogdanov2024low}, it is natural to ask whether there exist other planted distributions over graphs with $k$-cliques that make the distinguishing problem much harder. For instance, does there exist a planted distribution $\pla^*$ such that \[\algadv^{(\pla^*,\nul)}(A)=o(1/n^{100})\] for efficient tests $A$?

This is analogous to Impagliazzo's hard-core lemma \cite{I95} in complexity theory, wherein starting with the assumption that a function $f$ is hard to compute with probability $0.99$ for circuits of size $s$, one can construct a distribution $\mathcal{D}$ over inputs under which the function cannot be computed with probability $\frac{1}{2}+\delta$ for slightly smaller circuits.

Our general theorem on perturbations of the planted distribution (see \cref{thm:perturbation}) paves the way towards constructing a hard distribution $\pla^*$ by giving a way to turn low-degree hardness into computational hardness. 
We begin by constructing a small perturbation $\pla'$ of $\pla$ so that the low-degree advantage $\adv^{(\pla',\nul)}[\F_{\leq d}]$ is at most a desired bound $\delta$.
Specifically, we will show the following.
\begin{theorem}\label{thm:intro-moment-match}
     For any constants $\alpha>0$ and $d\in \N$, and any $\delta=\delta(n) > 0$, there is a $\poly(n^d, 1/\delta)$ time algorithm to sample from a distribution $\pla'$ satisfying the following:
    \begin{enumerate}
        \item The low-degree advantage for the $(\pla',\nul)$ distinguishing problem is at most $\delta$, i.e., $\adv^{(\pla',\nul)}[\F_{\leq d}] \leq \delta$.
        \item $\pla'$ is a small perturbation of $\pla$, i.e., $\left\|\frac{d\pla'}{d\pla}\right\|_\infty\leq 1+o(1)$.
    \end{enumerate}
\end{theorem}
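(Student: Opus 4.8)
# Proof Proposal for Theorem \ref{thm:intro-moment-match}

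The plan is to construct $\pla'$ by reweighting $\pla$ so as to kill the low-degree components of the likelihood ratio. Recall that the low-degree advantage $\adv^{(\pla,\nul)}[\F_{\leq d}]$ equals the norm $\|(L^{\leq d}-1)\|_{2,\nul}$ of the projection of the likelihood ratio $L = \frac{d\pla}{d\nul}$ onto degree-$\leq d$ polynomials (minus the constant term), and that this norm is dominated by low-degree terms — for Planted Clique, the degree-$j$ Fourier weight of $L$ scales like $(k^2/n)^j$ (up to combinatorial factors), so the bottleneck is really the low-order moments of the planted distribution. The idea is therefore: identify the "bad" low-degree Fourier coefficients of $L$, and perturb $\pla$ to a nearby distribution $\pla'$ whose likelihood ratio $L' = \frac{d\pla'}{d\nul}$ agrees with $\nul$ on all these coefficients — i.e., moment-match $\pla$ against $\nul$ up to degree $d$.

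Concretely, I would look for $\pla'$ of the form $d\pla' = (1 + g)\, d\pla$ where $g$ is a bounded function (so that $\|\frac{d\pla'}{d\pla}\|_\infty = \|1+g\|_\infty \leq 1+o(1)$) chosen so that $\E_{\pla'}[\chi_T] = \E_{\nul}[\chi_T] = 0$ for every nonempty $T$ with $|T| \leq d$. The number of such constraints is $M := \sum_{j=1}^d \binom{N}{j} = \poly(n^d)$. This is a feasibility problem: find $g$ with $\|g\|_\infty$ small satisfying $M$ linear equality constraints $\E_{\pla}[(1+g)\chi_T] = 0$. One clean way to guarantee feasibility with a very small bound on $\|g\|_\infty$ is to take $g$ itself to be a degree-$\leq d$ polynomial, $g = \sum_{1\le|T|\le d} c_T \chi_T$; then the constraints become a linear system in the coefficients $c_T$, and the required magnitude of the $c_T$ is controlled by how small the relevant moments of $\pla$ already are. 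Since for Planted Clique these moments are already tiny (polynomially small in $n$ for each fixed degree), one can solve for $c_T$ with $\|g\|_\infty = o(1)$, giving property (2). For the algorithm: the moments $\E_\pla[\chi_T]$ are explicitly computable in $\poly(n^d)$ time (they are simple functions of $|T|$ and the clique structure), so solving the linear system and then sampling from $\pla'$ via rejection sampling against $\pla$ (accepting with probability $\propto 1+g$, which is bounded) runs in $\poly(n^d, 1/\delta)$ time — the $1/\delta$ enters because to push the \emph{residual} low-degree advantage below $\delta$ we may need to match moments slightly beyond degree $d$ or control higher-order error terms, which I address next.

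After moment-matching through degree $d$, the degree-$\leq d$ part of $L' - 1$ is exactly zero, so naively $\adv^{(\pla',\nul)}[\F_{\leq d}] = 0$. The subtlety is that matching finitely many Fourier coefficients exactly may force $g$ to have a small but nonzero tail, and more importantly one must verify that no degree-$\leq d$ \emph{polynomial} (not just a character) achieves positive advantage — but since $\adv^{(\pla',\nul)}[\F_{\leq d}] = \|L'^{\leq d} - 1\|_{2,\nul}$ and we have arranged $L'^{\leq d} = 1$ identically, this is automatic once the moment conditions hold exactly. Thus the real work is (i) proving the linear system is solvable with $\|g\|_\infty$ small — here I would either invoke a direct computation using the explicit Planted Clique moments, or, if a general argument is wanted, use an $\ell_\infty$-duality / LP argument showing infeasibility would contradict the smallness of the low-degree advantage of $\pla$ itself; and (ii) bookkeeping the $\poly$ running time of computing moments and rejection sampling. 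The main obstacle I anticipate is step (i): controlling $\|g\|_\infty$ (rather than $\|g\|_2$, which is easy) while satisfying all $M$ constraints simultaneously. I expect this is handled by exploiting the product-like structure of $\pla$ — the planted clique only touches a random $O(k)$-subset of coordinates, so conditionally the distribution is a product measure, and one can build $g$ coordinate-block-wise, or alternatively observe that the Gram matrix of the characters under $\pla$ is close to the identity (since $\pla \approx \nul$ in low-degree moments), making the linear system well-conditioned and the solution $\ell_\infty$-small. Setting $\delta$ as small as desired then costs only the stated $\poly(1/\delta)$ overhead in the sampler's precision.
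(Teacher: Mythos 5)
Your construction takes the new density to be $d\pla' \propto (1+g)\,d\pla$ with $g$ itself a degree-$\leq d$ polynomial, and argues that because the moments $\E_\pla[\chi_T]$ are tiny and the Gram matrix $\bigl(\E_\pla[\chi_S\chi_T]\bigr)$ is close to the identity, the solution of the linear system has $\|g\|_\infty=o(1)$. This is the step that fails, and it is not a repairable bookkeeping issue: once you fix the parametric form $g=\sum_{0<|T|\leq d}c_T\chi_T$, the system $\E_\pla[(1+g)\chi_T]=0$ has a \emph{unique} solution, and that solution has $\|g\|_\infty\to\infty$. Concretely, the single-edge constraints force $c_{\{i,j\}}\approx -(k/n)^2$ for every edge (the cross terms $\E_\pla[\chi_S\chi_T]=(k/n)^{|V(S\triangle T)|}$ only contribute lower-order corrections), and similarly $c_S\approx-(k/n)^{|V(S)|}$ for all $S$; evaluating at the complete graph gives $g\approx-\sum_{0<|S|\leq d}(k/n)^{|V(S)|}\leq -(1-o(1))\,k^2/2$, which diverges. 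The well-conditioning you invoke only yields $\ell_2$ control, $\|g\|_{2,\nul}=O(k^2/n)$; the $\ell_2$-to-$\ell_\infty$ gap for degree-$d$ polynomials is $n^{\Theta(d)}$ and is genuinely saturated here already at degree one. (You can see the same failure in a toy model: $\nul$ uniform on $\{\pm1\}^N$, $\pla$ a product of $\mu$-biased bits with $\sqrt{N}\mu=o(1)$ but $N\mu\to\infty$; the unique degree-$1$ reweighting has $\|g\|_\infty=N\mu\to\infty$.) Intuitively, the reweighting must cancel an excess of $\approx k^2/2$ in the expected centered edge count while moving the density by only $1\pm o(1)$ pointwise; a bounded function can do this, but a degree-$d$ polynomial density of your prescribed form cannot, so $1+g$ is not even nonnegative, let alone within $1+o(1)$ of $1$.

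The paper avoids this by never asking the perturbation itself to be low-degree. It introduces acceptance probabilities $p:\Omega\to[0,1]$ (an arbitrary bounded function), writes the task as the LP $\max\E_\pla[p]$ subject to $\langle p,\chi_S\rangle_\pla=0$ for $0<|S|\leq d$, and dualizes: the dual value is $\inf_{\alpha}\E_\pla[\max(0,1+\alpha)]$ over low-degree $\alpha$, which is shown to be $\geq 1-o(1)$ using hypercontractivity together with the mild low-degree indistinguishability of $\pla$ and $\nul$ (this is where $\adv^{(\pla,\nul)}[\F_{\leq 4d}]$ being small enters, not a well-conditioned linear system). Algorithmically it minimizes a smoothed dual $\E_\pla[\sigma(1+g)]$ by stochastic gradient descent and sets $p=\sigma'(1+g^*)$ -- a clamped, nonlinear function of a low-degree polynomial -- so the moment constraints are met only up to $\delta$ (the gradient norm), which is exactly why the theorem's conclusion is $\adv^{(\pla',\nul)}[\F_{\leq d}]\leq\delta$ rather than $0$, while $\E_\pla[p]\geq 1-o(1)$ gives $\|\tfrac{d\pla'}{d\pla}\|_\infty\leq 1+o(1)$. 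Your parenthetical mention of an ``$\ell_\infty$-duality / LP argument'' points in this direction, but as written your primary route (exact moment matching with a low-degree density and rejection sampling against $1+g$) does not establish either conclusion of the theorem.
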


The above result can be viewed as a constructive hard-core lemma against low-degree polynomials.
Let $\pla^*$ be a noisy version of $\pla'$ as in the setup for \cref{thm:perturbation}. Appealing to \cref{thm:perturbation}, we conclude the following computational hardness result, affirmatively answering the above question.
\begin{theorem}\label{thm:main-clique}
    Assume the Planted Clique Hypothesis (\cref{conj:planted}). For any $0<\alpha<1/2$ and $D\in \N$, there is a $\poly(n)$ time algorithm to sample from a distribution $\pla^*$ supported on graphs containing a clique of size at least $k=n^{1/2-\alpha}$ such that for any polynomial time test $A$,
        \[\algadv^{(\pla^*,\nul)}(A) = o(n^{-D}).\]
\end{theorem}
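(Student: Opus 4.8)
The plan is to combine the constructive hard-core lemma (\cref{thm:intro-moment-match}) with the perturbation transfer theorem (\cref{thm:perturbation}). Fix $\alpha \in (0,1/2)$ and $D \in \N$, and let $d$ be the large constant required by \cref{thm:perturbation}; since we will want the final error to be $o(n^{-D})$, we may need to take $d$ larger than some function of $D$ (e.g. so that $\Omega(\sqrt d) > D + 2$), which is fine as $d$ is a constant. First I would apply \cref{thm:intro-moment-match} with this $d$ and with the target $\delta = \delta(n) := n^{-(D+1)}$; this yields a $\poly(n^d, 1/\delta) = \poly(n)$ time sampler for a distribution $\pla'$ with $\adv^{(\pla',\nul)}[\F_{\leq d}] \le n^{-(D+1)}$ and $\|d\pla'/d\pla\|_\infty \le 1 + o(1)$. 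Next I would pass to the noised version $\pla^*$ of $\pla'$ described in the setup preceding \cref{thm:perturbation}; the noise level should be chosen small enough that (i) the low-degree advantage of $\pla^*$ is not much larger than that of $\pla'$ — at most $2n^{-(D+1)}$, say — and (ii) the clique-size guarantee is preserved up to the stated slack, so that $\pla^*$ remains supported on graphs with a clique of size at least $k = n^{1/2-\alpha}$, and (iii) the noising step is still $\poly(n)$ time. Combining these, the overall sampler for $\pla^*$ runs in $\poly(n)$ time.

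Now I would invoke \cref{thm:perturbation}: for any efficiently computable test $A$,
\[ \algadv^{(\pla^*,\nul)}(A) \;\le\; \adv^{(\pla^*,\nul)}[\F_{\leq d}] \;+\; n^{-\Omega(\sqrt d)} \cdot \left\| \frac{d\pla'}{d\pla} \right\|_\infty. \]
The first term is at most $2n^{-(D+1)} = o(n^{-D})$ by our choice of $\delta$ and the noise analysis; the second term is at most $n^{-\Omega(\sqrt d)} \cdot (1 + o(1))$, which is $o(n^{-D})$ provided $d$ was chosen large enough that $\Omega(\sqrt d) > D$. Hence $\algadv^{(\pla^*,\nul)}(A) = o(n^{-D})$, which is the claim. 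One subtlety to handle carefully is that \cref{thm:perturbation} requires $\pla'$ to be a pointwise-bounded perturbation of the \emph{original} planted distribution $\pla = G(n,1/2,k)$, and separately requires $\pla^*$ to be its noised version; I would make sure the $\pla'$ produced by \cref{thm:intro-moment-match} and the noising procedure are exactly the objects $\pla', \pla^*$ that \cref{thm:perturbation} is stated for — this is essentially bookkeeping, since both theorems are stated against the same construction.

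The main obstacle, and the place where I expect the real work to sit, is not in this final splicing argument — which is a short chain of estimates — but rather upstream, in verifying that the hypotheses of \cref{thm:perturbation} (i.e. of its formal version \cref{thm:perturbation-full}) actually hold here, in particular the hypercontractivity / structural conditions on the planted measure $\pla = G(n,1/2,k)$ that the general theorem \cref{thm:general-1} requires, and in controlling the effect of the noising step on both the low-degree advantage and the clique size simultaneously. Quantitatively, the delicate point is balancing the three small parameters: the moment-matching target $\delta$, the noise level, and the constant $d$, so that each contribution to the advantage is genuinely $o(n^{-D})$ while the sampler stays polynomial-time and the clique survives. Once \cref{thm:intro-moment-match} and \cref{thm:perturbation} are in hand as black boxes, though, Theorem \ref{thm:main-clique} follows immediately by the computation above.
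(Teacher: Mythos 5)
Your overall route is the same as the paper's: apply \cref{thm:intro-moment-match} with a polynomially small $\delta$ (the paper uses $\delta=n^{-2D}$ and $d=10D^2/\alpha$), feed the resulting $\pla'$ into \cref{thm:perturbation-full}, and note that the noising cannot hurt, so $\algadv^{(\pla^*,\nul)}(A)\le \adv^{(\pla',\nul)}[\F_{\le d}]+O(n^{-\Omega(\sqrt d)})=o(n^{-D})$. (One small simplification you missed: since $\pla^*=M^{\sym}(\pla')$ and the noise operator $T$ is a contraction on $L^2(\nul)$ that preserves $\F_{\le d}$ and fixes $\nul$, one has $\adv^{(\pla^*,\nul)}[\F_{\le d}]\le \adv^{(\pla',\nul)}[\F_{\le d}]$ outright, so your worry (i) about the noise inflating the low-degree advantage is vacuous.)

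The genuine gap is in how you handle the support condition, i.e. that $\pla^*$ is \emph{supported} on graphs with a $k$-clique. You propose to "choose the noise level small enough that the clique-size guarantee is preserved," but there is no such freedom: in \cref{thm:perturbation-full} the noising is the permuted vertex-resampling chain with resampling probability $p=n^{\alpha-\beta}$, and $\beta-\alpha$ must be a positive constant (with $d$ large relative to it) for the error term $n^{-(\beta-\alpha)\sqrt d}$ to beat $n^{-D}$. This noise is not small: it destroys all but a $p$-fraction of the planted clique, so the clique in $\pla^*$ has size roughly $n^{1/2-\beta}$, which forces you to start from a planted distribution with a larger clique than the target $k$ (a relabeling of parameters you never do), and even then the surviving clique has the stated size only with probability $1-\exp(-\Omega(n))$, not probability $1$, so $\pla^*$ as produced is simply not supported on graphs with large cliques. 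The paper closes this by a concentration argument for the clique size under $M^{\sym}(\pla)$, transferring it to $\pla^*$ via the contiguity bound $\|d\pla^*/dM^{\sym}(\pla)\|_\infty\le\|d\pla'/d\pla\|_\infty=1+o(1)$, and then having the sampler track the planted set and reject samples where the clique did not survive; this conditioning perturbs $\pla^*$ by only $\exp(-\Omega(n))$ in TV distance and hence changes every advantage by a negligible amount. Without some version of this conditioning/rejection step (or an explicit argument that an exponentially rare failure event can be excised without affecting the advantage bound), your construction does not satisfy the support clause of \cref{thm:main-clique}.
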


Finally, we note that our proof of \cref{thm:intro-moment-match} generalizes to a much broader class of distinguishing problems beyond Planted Clique, and may be of independent interest. Let us state an informal version of this generalization.

\begin{theorem}[Informal version of \cref{thm:general-2}]
Let $\pla$ and $\nul$ be two distributions over $\Omega$, and let $V$ be an arbitrary subspace of $L^2(\nul)$ such that $V$ is computationally ``tractable'', and $V$ satisfies a hypercontractive inequality. If $\adv^{(\pla,\nul)}[V] = o(1)$, then for any $d\in \N$ one can efficiently find a small perturbation $\pla'$ of $\pla$ in the sense that $\|\frac{d\pla'}{d\pla}\|_\infty = 1+o(1)$, such that $\adv^{(\pla,\nul)}[V]=1/n^d$.
\end{theorem}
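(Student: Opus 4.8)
The plan is to parametrize the target distribution by the density $\mu:=\frac{d\pla'}{d\pla}$ relative to $\pla$. Requiring $\pla'$ to be a probability distribution with $\|\mu\|_\infty=1+o(1)$ forces $\mu$ into the convex body $P_\epsilon:=\{\mu\geq0:\|\mu\|_\infty\leq 1+\epsilon,\ \E_\pla[\mu]=1\}$, where $\epsilon=o(1)$ is a parameter fixed at the end. Writing $\Pi_V$ for the $L^2(\nul)$-projection onto $V$ (which we may assume contains the constants), we have $\adv^{(\pla',\nul)}[V]=\bigl\|\Pi_V\bigl(\tfrac{d\pla'}{d\nul}-1\bigr)\bigr\|_{2,\nul}=\sup_{g\in V,\,\|g\|_{2,\nul}\leq 1}\bigl(\E_\pla[\mu g]-\E_\nul[g]\bigr)$, so the whole statement reduces to producing $\mu\in P_\epsilon$ with
\[
\sup_{g\in V,\,\|g\|_{2,\nul}\leq 1}\bigl(\E_\pla[\mu g]-\E_\nul[g]\bigr)\ \leq\ n^{-d}.
\]
I expect the minimum over $\mu\in P_\epsilon$ of the left-hand side to be exactly $0$, with the $n^{-d}$ slack appearing only because the \emph{efficient} algorithm solves the associated optimization approximately.

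\medskip
The first move is a minimax swap. The body $P_\epsilon$ and the ball $B:=\{g\in V:\|g\|_{2,\nul}\leq 1\}$ are convex and compact ($B$ because $\dim V<\infty$) and $(\mu,g)\mapsto\E_\pla[\mu g]-\E_\nul[g]$ is bilinear, so Sion's minimax theorem gives $\min_{\mu\in P_\epsilon}\sup_{g\in B}(\cdot)=\sup_{g\in B}\min_{\mu\in P_\epsilon}(\cdot)$. It therefore suffices to prove that for \emph{each fixed} $g\in B$ there is some $\mu\in P_\epsilon$ (allowed to depend on $g$) with $\E_\pla[\mu g]\leq\E_\nul[g]$; the minimax identity then produces a single universal $\mu^{*}\in P_\epsilon$ that beats every $g\in B$ at once, i.e.\ $\adv^{(\pla',\nul)}[V]=0$ for $\pla'=\mu^{*}\pla$.

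\medskip
The single-test-function bound is the technical heart. Fix $g\in B$; subtracting $\E_\nul[g]$ (which cancels in $\E_\pla[\mu g]-\E_\nul[g]$ because $\E_\pla[\mu]=1$) we may assume $\E_\nul[g]=0$. Use the ``trim the tails'' reweighting $\mu:=1-\epsilon\,\mathrm{sign}(g-m)$ with $m$ a $\pla$-median of $g$: then $\mu\in P_\epsilon$ and $\E_\pla[\mu g]=\E_\pla[g]-\epsilon\,\E_\pla|g-m|$. On one hand $\E_\pla[g]\leq\sigma\|g\|_{2,\nul}$ with $\sigma:=\adv^{(\pla,\nul)}[V]=o(1)$ (using $\E_\nul[g]=0$ and $g\in V$). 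On the other hand the ``spread'' term is bounded below via hypercontractivity: the Hölder interpolation $\|Y\|_{2,\pla}\leq\|Y\|_{1,\pla}^{1/3}\|Y\|_{4,\pla}^{2/3}$ applied to $Y:=g-\E_\pla[g]\in V$, together with the fourth-moment bound $\|Y\|_{4,\pla}\leq\rho\,\|Y\|_{2,\pla}$ supplied by the hypercontractivity of $V$ under $\pla$, gives $\E_\pla|g-m|\geq\tfrac12\,\E_\pla|g-\E_\pla[g]|\geq\tfrac{1}{2\rho^{2}}\|g-\E_\pla[g]\|_{2,\pla}$; and a norm comparison $\|h\|_{2,\pla}\geq(1-o(1))\|h\|_{2,\nul}$ for $h\in V$ (which holds for Planted Clique when $k\ll\sqrt n$ and is among the ``natural properties'' assumed in the general statement) upgrades this to $\E_\pla|g-m|\geq c_\rho\|g\|_{2,\nul}$ for a constant $c_\rho>0$. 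Choosing $\epsilon:=\sigma/c_\rho=o(1)$ now gives $\E_\pla[\mu g]\leq(\sigma-\epsilon c_\rho)\|g\|_{2,\nul}=0=\E_\nul[g]$, which closes the single-test-function bound and simultaneously fixes the perturbation size $\bigl\|\tfrac{d\pla'}{d\pla}\bigr\|_\infty\leq 1+\epsilon=1+o(1)$.

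\medskip
For the constructive part I would compute an approximate minimizer of the convex program $\min_{\mu\in P_\epsilon}\bigl\|\Pi_V\bigl(\mu\tfrac{d\pla}{d\nul}-1\bigr)\bigr\|_{2,\nul}^{2}$ by Frank--Wolfe: at step $t$ the linear-optimization subproblem is to minimize $\E_\pla[\nu\,q_t]$ over $\nu\in P_\epsilon$, where $q_t:=\Pi_V\bigl(\mu^{(t)}\tfrac{d\pla}{d\nul}-1\bigr)\in V$ has $L^2(\nul)$-coefficients $\E_\pla[\mu^{(t)}e_i]-\E_\nul[e_i]$ in an orthonormal basis $\{e_i\}$ of $V$, estimated to high accuracy by Monte Carlo sampling from $\pla$ (and from $\nul$ in the general case); its solution is a threshold reweighting of the estimated $q_t$. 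This is where the tractability of $V$ enters, namely $\dim V=\poly(n^{d})$ and efficient pointwise evaluation of elements of $V$. After $\poly(n^{d},1/\delta)$ iterations the objective falls below $\delta^{2}$ with $\delta=n^{-d}$, and the resulting $\mu$ — an explicit convex combination of clipped elements of $V$ — supports rejection sampling from $\pla'=\mu\pla$ (accept $x\sim\pla$ with probability $\mu(x)/(1+\epsilon)$), incurring $O(1)$ expected rejections since $\|\mu\|_\infty\leq 1+\epsilon$. The step I expect to be the real obstacle is the norm comparison $\|h\|_{2,\pla}\geq(1-o(1))\|h\|_{2,\nul}$ on $V$: without a lower bound of this kind there could be a direction $g\in V$ of vanishingly small $\pla$-mass that no $o(1)$-perturbation of $\pla$ can move, capping the residual advantage at $\Theta(\sigma)$ rather than driving it to $0$, so identifying the precise hypothesis on $(\pla,\nul,V)$ under which this comparison holds is where the real care will be needed.
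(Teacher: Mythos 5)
Your argument is essentially correct and shares its analytic core with the paper's proof, but packages it differently. The paper parametrizes the perturbation by acceptance probabilities $p:\Omega\to[0,1]$, writes an LP maximizing $\E_\pla[p]$ subject to exact moment constraints $\langle p,\chi_S\rangle_\pla=0$, passes to the dual $\inf_g \E_\pla[\max(0,1+g)]$, smooths $\max(0,\cdot)$ and runs SGD, reading off the final distribution from the approximate dual optimizer via $p=\sigma'(1+g^*)$ (so approximate first-order optimality directly certifies small advantage). You instead parametrize by the density $\mu=\tfrac{d\pla'}{d\pla}$, use Sion's minimax over $P_\epsilon\times B$ with an explicit median-sign ``trimming'' reweighting per test function to get existence of an advantage-$0$ perturbation (this matches the paper's remark that such a $\pla^*$ exists by compactness), and then run primal Frank--Wolfe on the squared projected-advantage objective for the constructive part. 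The key inequality is the same in both: a lower bound on the $L^1(\pla)$ mass of elements of $V$ in terms of their $L^2(\nul)$ norm, obtained from log-convexity of norms ($\|Y\|_1\geq\|Y\|_2^3/\|Y\|_4^2$), hypercontractivity, and a $\nul\to\pla$ norm comparison. The obstacle you flag at the end --- the comparison $\|h\|_{2,\pla}\geq(1-o(1))\|h\|_{2,\nul}$ and the fourth-moment bound under $\pla$ --- is real and is exactly why the paper's formal theorem does \emph{not} assume only $\adv^{(\pla,\nul)}[V]=o(1)$: it additionally assumes $(4,8)$-hypercontractivity under $\nul$ together with mild hardness for the product space $V^{\odot 4}$, and derives both facts you need from that (its Eq.~(norm-preserved)). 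So your proof goes through under the paper's formal hypotheses, and your instinct that the informal hypotheses as literally stated are insufficient is consistent with how the formal statement is strengthened. Remaining loose ends in your write-up are minor: make $\E_\pla[\mu]=1$ exact by fractional tie-breaking at the median atom (so that $\E_\pla[\mathrm{sign}(g-m)]=0$ and $\E_\pla[\mathrm{sign}(g-m)\,g]=\E_\pla|g-m|$ still holds), and track the Monte Carlo estimation error across Frank--Wolfe iterations (you need accuracy $\delta^2$ in the objective, and the sampled thresholds define the actual $\pla'$ you output), which costs $\poly(q,1/\delta)$ as in the paper's SGD analysis. Comparatively, the paper's dual route yields the acceptance-probability form and the $\|\mu\|_\infty$ bound via one clean bound on the dual value, while your primal route avoids smoothing and controls $\|\mu\|_\infty$ by construction, at the price of maintaining and re-estimating the Frank--Wolfe iterates.
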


We formally state and prove this result in \cref{sec:hardcore}.

\subsection{Related Work}
\paragraph{Planted Clique.}
The planted clique problem was originally proposed by Jerrum~\cite{jerrum1992large} and Kučera~\cite{kuvcera1995expected} as a search version wherein the task was to identify a $k$-clique hidden within an $n$-vertex Erd\H{o}s--R\'enyi random graph.  The decision version of the problem was suggested by Saks \cite{alon1998finding,krivelevich2002approximating} has since resisted all algorithmic attacks.  For $k = o(\sqrt{n})$, there is a growing body of lower bounds in restricted models of computation which suggest that the problem is intractable.  We refer the reader to \cite{hirahara2024planted} for a survey on the Planted Clique problem.

Recognized as one of the most prominent problems in average-case complexity, its presumed computational difficulty has led to wide-ranging applications in areas including average-case complexity~\cite{elrazik2022pseudorandom}, cryptography~\cite{juels2000hiding,applebaum2010public,abram2023cryptography}, hardness of approximation~\cite{manurangsi2020strongish}, game theory~\cite{hazan2011hard} and property testing~\cite{alon2007testing}.  More recently, Planted Clique and its variants have been used to develop a web of reductions amongst problems in high-dimensional statistics \cite{berthet2013complexity, DBLP:conf/colt/BrennanBH18, BB20, BR13, MW15, Che15, HWX15, WBS16, GMZ17, CLR17, WBP16, ZX18, CW18, , BB19, }.

\paragraph{Low Degree Method.}
The idea of using low-degree polynomials as benchmark algorithms for hypothesis testing was first explored in the work of Barak et al. \cite{BHKKMP19} on sum-of-squares lower bounds for Planted Clique. This was then further developed by Hopkins and Steurer \cite{hopkins2017efficient}, and subsequently by others (see, e.g. \cite{HopkinsKPRSS17, Hopkins18thesis}, and also \cite{KWB19} for a survey). This technique has since become known as the "low-degree hardness" paradigm, and has been applied to a variety of statistical problems \cite{BHKKMP19, hopkins2017efficient, HopkinsKPRSS17, Hopkins18thesis,BKW19, KWB19, DKWB23, BB19, MRX19, CHKRT19}.

Perhaps the work that is most related to ours is the the work of Moitra and Wein \cite{moitra2023preciseerrorratescomputationally} that studies the precise error rates for the Spiked Wigner model.
Towards exactly characterizing the asymptotic error achievable by efficient algorithms, they propose the following strengthening of the low-degree conjecture, and investigate its consequences for the \emph{spiked Wigner model}.
\begin{conjecture}[Moitra-Wein strengthening of the Low-degree conjecture, informal]
    Let $\F_\alg^\R$ be the class of real-valued functions $f\in L^2(\nul)$ computable in polynomial time. For many ``natural'' distinguishing tasks between $\pla$ and $\nul$, \[\adv^{(\pla,\nul)}[\F_\alg^\R]\leq (1+o(1))\cdot \adv^{(\pla,\nul)}[\F_{O(\log n)}].\]
\end{conjecture}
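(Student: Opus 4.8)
The plan is to establish the Moitra--Wein strengthening for Planted Clique, i.e.\ \cref{thm:main-1}: under \cref{conj:planted}, every randomized polynomial-time test $A$ satisfies $\algadv^{(\pla,\nul)}(A)\le (1+o_n(1))\,\adv^{(\pla,\nul)}[\F_{\le d}]$ for a sufficiently large constant $d$.

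\textbf{Reduction to a high-degree correlation bound.} Replacing $A$ by $A-\E_\nul[A]$ leaves $\E_\pla[A]-\E_\nul[A]$ fixed and only shrinks $\|A\|_{2,\nul}\le 1$, so I may assume $A\colon\{\pm1\}^{\binom n2}\to\{\pm1\}$ with $\E_\nul[A]=0$. Write $L=\frac{d\pla}{d\nul}$ and let $f^{\le d},f^{>d}$ denote the projections of $f\in L^2(\nul)$ onto Fourier degrees $\le d$ and $>d$. Orthogonality of distinct degrees gives
\[\algadv^{(\pla,\nul)}(A)=\big\langle (L-1)^{\le d},A^{\le d}\big\rangle_\nul+\big\langle (L-1)^{>d},A^{>d}\big\rangle_\nul ,\]
and Cauchy--Schwarz bounds the first term by $\|(L-1)^{\le d}\|_{2,\nul}\cdot\|A^{\le d}\|_{2,\nul}\le \adv^{(\pla,\nul)}[\F_{\le d}]$, which is exactly what we want. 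Since the degree-$j$ part of $L$ puts coefficient $(k/n)^{|V(F)|}$ on the character $\chi_F$, the single-edge terms dominate and $\adv^{(\pla,\nul)}[\F_{\le d}]=(1+o(1))\frac{k^2}{\sqrt2\,n}$ for every fixed $d\ge1$. Thus it remains to show the high-degree correlation $\langle (L-1)^{>d},A^{>d}\rangle_\nul=\E_\pla[A^{>d}]$ is $o(k^2/n)$ for all polynomial-time $A$. This is where computational hardness is genuinely needed: because of the binomial planted set, $\|(L-1)^{>d}\|_{2,\nul}^2=\chi^2(\pla\,\|\,\nul)-\|(L-1)^{\le d}\|_{2,\nul}^2$ is astronomically large (driven by the vanishingly unlikely event that the planted set is atypically large), so Cauchy--Schwarz on the high-degree term is hopeless.

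\textbf{Killing the high-degree correlation by a lossless self-reduction.} Suppose toward a contradiction that $\E_\pla[A^{>d}]\ge\varepsilon\,k^2/n$ for a fixed $\varepsilon>0$. I would use a random self-reduction in the spirit of Hirahara--Shimizu (underlying \cref{thm:hirahara}): embed the $(n,k)$-instance into a larger planted-clique instance $(N,K)$ with $K=N^{1/2-\Omega(1)}$ (so that \cref{conj:planted} applies to it) in such a way that $G(N,1/2,K)$ contains, on disjoint vertex-blocks or on random induced subgraphs, almost-independent copies of $G(n,1/2,k)$, then aggregate the per-copy outputs of $A$ into one efficient test whose advantage on $(N,K)$ is $\Omega_\varepsilon(1)$, a contradiction. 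The crude version (majority over $m=\Theta(1/\varepsilon^2)$ disjoint blocks) only rules out advantage of order $\sqrt{k^2/n}$ or larger, since it forces $m\le n^{\Theta(\alpha)}$ before $K$ escapes the hard regime --- a polynomial factor short. To make the reduction lossless I would (i) run it on $A$ itself, not on $A^{>d}$ (which need not be efficiently computable), peeling off the degree-$\le d$ part afterwards as above, and (ii) use the extra structural hypotheses --- notably hypercontractivity of low-degree polynomials under $\pla$ --- to control the covariances of the per-copy test functions, so that the aggregate statistic has essentially the independent-case variance even under weak dependence, and so that passing to the larger instance inflates the degree-$\le d$ contribution by only $1+o(1)$. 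Calibrating the embedding parameters then converts an $\varepsilon\,k^2/n$ high-degree correlation into a constant advantage on an instance still inside the conjectured-hard regime, for every fixed $\varepsilon>0$; hence $\E_\pla[A^{>d}]=o(k^2/n)$.

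\textbf{General form, robustness, and the main obstacle.} The whole argument is designed to run for any planted problem with (i) a self-reduction embedding small hard instances into larger still-hard ones, (ii) hypercontractivity of low-degree polynomials under $\pla$, and (iii) a baseline hardness hypothesis; Planted Clique satisfies all three, so \cref{thm:main-1} falls out of \cref{thm:general-1}, and perturbation-robustness (\cref{thm:perturbation}, \cref{thm:main-clique}) is then immediate, since the low-degree likelihood-ratio norm, the hypercontractivity constants, and every variance estimate in the self-reduction are stable under replacing $\pla$ by a $\pla'$ with $\|\frac{d\pla'}{d\pla}\|_\infty=1+o(1)$. The crux is the losslessness in the second step: obtaining a polynomial-factor bound is comparatively routine, but extracting the sharp constant $1/\sqrt2$ --- equivalently, showing the high-degree correlation is $o(k^2/n)$ rather than merely $n^{o(1)}k^2/n$ --- demands a self-reduction that loses only a $1+o(1)$ factor while still landing an instance inside the conjectured-hard regime, and it is the hypercontractivity-based control of the aggregated distinguisher's variance (and of the error from weak dependence between embedded copies) that should make this possible.
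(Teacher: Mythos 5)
First, note that the statement you were asked about is literally a conjecture (attributed to Moitra--Wein) that the paper does not prove -- indeed the paper remarks it is \emph{false} as stated for sparse Spiked Wigner and Planted Clique and must be amended; the paper only proves the surrogate \cref{thm:main-1} under \cref{conj:planted}. Your choice to target \cref{thm:main-1} is therefore reasonable, and your first step (subtracting $\E_\nul[A]$, splitting $\algadv^{(\pla,\nul)}(A)=\langle (L-1)^{\le d},A^{\le d}\rangle_\nul+\E_\pla[A^{>d}]$, bounding the low-degree term by $\adv^{(\pla,\nul)}[\F_{\le d}]=(1+o(1))k^2/(\sqrt2 n)$) is sound and parallels the paper's decomposition $A=A_++A_-$.

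The gap is in your second step, and it is not a detail: the embed-and-aggregate self-reduction cannot be made lossless, for a parameter-counting reason that hypercontractivity-based variance control does not touch. If you pack $m$ (nearly) independent copies of $G(n,1/2,k)$ into one instance of $G(N,1/2,K)$, then necessarily $N\approx mn$ and $K\approx mk$ (the binomial planting forces the clique-vertex density $K/N$ to equal $k/n$), and the constraint $K\le N^{1/2-\Omega(1)}$ caps $m$ at roughly $n^{2\alpha}$. Any aggregate of $m$ bounded per-copy statistics gains at most a $\sqrt m\le n^{\alpha}$ factor in advantage, so a per-copy correlation of order $\varepsilon k^2/n=\varepsilon n^{-2\alpha}$ can only be boosted to $\varepsilon n^{-\alpha}$ -- exactly the $\sqrt{k^2/n}$ barrier you yourself identify for the ``crude version.'' Controlling covariances between copies, or peeling off the degree-$\le d$ part of the aggregate, does not relax the binding constraint, which is that $K$ escapes the hard regime before $m$ is large enough; so the claimed conversion of an $\varepsilon k^2/n$ high-degree correlation into constant advantage ``inside the conjectured-hard regime'' has no mechanism behind it. The paper's reduction runs in the opposite direction and amplifies spectrally rather than statistically: starting from a \emph{larger}-clique instance $\pla=G(n,1/2,n^{1/2-\alpha})$, it applies the vertex-resampling Markov chain $M$ with $M(\pla)=\pla^*=G(n,1/2,n^{1/2-\beta})$, and uses that the associated noise operator $T$ contracts the high-degree part $A_-$ by $p^{2\sqrt d}$ in $\|\cdot\|_{2,\nul}$ (\cref{fact:eigen}) while preserving $\E_\pla[TA_-]=\E_{\pla^*}[A_-]$, so $\adv^{(\pla,\nul)}(TA_-)\gtrsim \adv^{(\pla^*,\nul)}[\F_{\le d}]/p^{2\sqrt d}=\omega(1)$ for $d$ large; it then rounds by thresholding $|TA_-|$, and the role of hypercontractivity is to prove the anticoncentration of $TA_-$ under $\pla$ (\cref{lemma:planted-clique-hyp}), not to control covariances across embedded copies. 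Without that change of mechanism (or some genuinely new idea for beating the $\sqrt m$ cap), your step two fails.
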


More recently, Ding, Hua, Slot, and Steurer \cite{ding2025low} also leveraged a similar strengthening of the Low-degree conjecture for the stochastic block model. Unfortunately due to technical issues, the above conjecture turns out to be false as stated even for the sparse Spiked Wigner and Planted Clique models, and needs to be amended. One possible amendment states that the above holds for $f$ belonging to a certain ``nice'' set of efficiently computable real-valued functions that seem to circumvent these technical issues.
We believe that \cref{thm:main-1,thm:perturbation} nicely complement this hypothesis that under suitable restrictions, the behaviour of low-degree polynomials characterizes the advantage of optimal distinguishers.


\paragraph{Hard-Core Lemma and Hardness amplification.}
Fix a class $\mathcal{C}$ of circuits
and  a boolean
function $f : \{0, 1\}^n \to \{0, 1\}$ which is somewhat hard to
compute for the class $\mathcal{C}$, in that no circuit from $\mathcal{C}$ succeeds with probability $> 1-\delta$.  The (nonuniform) hard-core lemma asserts that there exists
a large subset of inputs on which the function $f$ is essentially unpredictable for a slightly smaller class of circuits.
Impagliazzo \cite{I95}  gave the first proof of such a theorem
and used it to derive an alternative proof of Yao’s XOR
lemma \cite{yao1982theory}. 
%

%
Constructive versions of the hard-core lemma can be obtained by a boosting
algorithm \cite{klivans1999boosting,barak2009uniform}.

A technical point is that in general, one cannot prove exactly analogous statements against the class of uniform polynomial-time algorithms. Holenstein \cite{holenstein2005key} proved a useful statement that can be used to port many of the key consequences of the nonuniform hard-core lemma to the uniform setting. In particular, assuming $f$ is somewhat hard to compute for polynomial-time algorithms, for any polynomial-time oracle algorithm $A^{(\cdot)}$ that can make input-independent membership queries to a set $S\subseteq \{0,1\}^n$, there is a hard-core set $S$ such that $A^{(S)}$ cannot compute $f$ on a nontrivial fraction of inputs from $S$. This statement does not guarantee the existence of a \emph{universal} hard-core set that works against all polynomial-time algorithms $A$, and this seems to be a generic limitation that cannot be worked around without nonuniform hardness assumptions.

In \cref{thm:main-clique}, we prove a constructive, uniform hard-core lemma for the Planted Clique Problem, avoiding the aforementioned generic limitation. We begin by first obtaining an assumption-free hard-core planted distribution against the class of low-degree polynomials. Then, with \cref{thm:perturbation}, we show how to lift this to a computational hardness statement by proving that a slightly noisy version of this distribution is hard for the class of efficient algorithms under the mild assumption \cref{conj:planted}.

\subsection{Paper Organization}
\cref{sec:prelims} contains preliminaries. \cref{sec:overview} contains an overview of our proof ideas. In \cref{sec:reduction} we state and prove a generalized version of reduction. In \cref{sec:hardcore} we state and prove a generalized version of our technique to find hard-core distributions . In \cref{sec:application} we provide describe how to instantiate our general theorems to recover our main results on Planted Clique (\cref{thm:main-1,thm:main-clique}).

\section{Preliminaries}\label{sec:prelims}
\paragraph{Probability and Analysis.}
For a finite set $\Omega$, $L^2(\Omega)$ denotes the vector space of all real-valued functions $f:\Omega\to \R$. Let $\nul$ be a distribution over $\Omega$. The $L^p$ norms and inner product induced by $\nul$ are defined as follows:
\[\|f\|_{p,\nul} = \E_{\nul}[f^p]^{1/p},\qquad \langle f, g\rangle_{\nul} = \E_{\nul}[f\cdot g],\]
where $f\cdot g:\Omega\to \R$ is the pointwise product $f\cdot g(x) = f(x)\cdot g(x)$. Since we will be simultaneously working with multiple distributions, we will always write the distribution's label in the subscript to avoid ambiguity. We say that $f$ and $g$ are \emph{orthogonal with respect to $\nul$} if $\langle f,g\rangle_\nul=0$. As syntactic sugar, often we will write $L^2(\nul)$ to denote $L^2(\Omega)$ with the suggestion that inputs to functions in $L^2(\nul)$ are typically drawn from $\nul$.

For any $1\leq p\leq q$, Jensen's inequality applied to the convex function $x\to x^{q/p}$ implies that $\|f\|_{p,\nul}\leq \|f\|_{q,\nul}$. The following is a consequence of Hölder's inequality.

\begin{lemma}[Log-convexity of $L^p$ norms]\label{lemma:log-convexity}
    For $f\in L^2(\nul)$, 
    \[\|f\|_{1,\nul}\geq \frac{\|f\|_{2,\nul}^3}{\|f\|_{4,\nul}^2}\quad\text{ and }\quad\frac{\|f\|_{4,\nul}}{\|f\|_{2,\nul}}\leq \frac{\|f\|_{8,\nul}^2}{\|f\|_{4,\nul}^2}.\]
\end{lemma}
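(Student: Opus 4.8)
The plan is to derive both inequalities from Hölder's inequality, exploiting the standard fact that $\log \|f\|_{1/t,\nul}$ is a convex function of $t$ on $[0,1]$ (equivalently, that $t \mapsto \log \E_\nul[|f|^t]$ is convex). Concretely, for the first inequality I would write $2$ as a convex combination of the exponents $1$ and $4$: since $\tfrac12 = \tfrac23\cdot 1 + \tfrac13\cdot\tfrac14$, interpolation of $L^p$ norms gives $\|f\|_{2,\nul} \leq \|f\|_{1,\nul}^{2/3}\,\|f\|_{4,\nul}^{1/3}$, and rearranging yields $\|f\|_{1,\nul} \geq \|f\|_{2,\nul}^{3}/\|f\|_{4,\nul}^{2}$. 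If I want to avoid invoking the interpolation inequality as a black box, I would instead apply Hölder directly: write $f^2 = f^{2/3}\cdot f^{4/3}$ and apply Hölder with conjugate exponents $\tfrac32$ and $3$, so that $\E_\nul[f^2] \leq \E_\nul[|f|]^{2/3}\,\E_\nul[f^4]^{1/3}$, which is exactly the claim after taking the appropriate roots (here I'm using $f^2 = |f|^2$ and treating $|f|$ throughout).

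For the second inequality I would do the same thing with a different triple of exponents. I want $4$ as a convex combination of $2$ and $8$: indeed $\tfrac14 = \tfrac12\cdot\tfrac12 + \tfrac12\cdot\tfrac18$, so interpolation gives $\|f\|_{4,\nul} \leq \|f\|_{2,\nul}^{1/2}\,\|f\|_{8,\nul}^{1/2}$. Squaring and dividing both sides by $\|f\|_{2,\nul}^2$ gives $\|f\|_{4,\nul}^2/\|f\|_{2,\nul}^2 \leq \|f\|_{8,\nul}/\|f\|_{2,\nul}$; but the stated inequality has $\|f\|_{8,\nul}^2/\|f\|_{4,\nul}^2$ on the right, so I need to check this is at least $\|f\|_{8,\nul}/\|f\|_{2,\nul}$, i.e.\ that $\|f\|_{8,\nul}\,\|f\|_{2,\nul} \geq \|f\|_{4,\nul}^2$ — which is once more the same interpolation/Hölder statement (this time $\tfrac14 = \tfrac12\cdot\tfrac12 + \tfrac12\cdot\tfrac18$ applied the other way, or Hölder on $f^4 = f^2\cdot f^2$ with the pairing $f^2 = |f|^{8\cdot 1/4}$, $|f|^{2\cdot 1/4}$). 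So the cleanest route is: prove the single inequality $\|f\|_{4,\nul}^2 \leq \|f\|_{2,\nul}\,\|f\|_{8,\nul}$ via Hölder (write $f^4 = |f|\cdot|f|^3$? no — write $|f|^4 = |f|^{2\cdot\theta}\cdot|f|^{8\cdot(1-\theta)/\,}$...) and then chain it with itself appropriately to recover the displayed form.

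Let me be precise about the one Hölder step that does all the work. For exponents $p_1 < p < p_2$ with $\tfrac1p = \tfrac{\lambda}{p_1} + \tfrac{1-\lambda}{p_2}$, write $|f|^p = |f|^{\lambda p}\cdot |f|^{(1-\lambda)p}$ and apply Hölder with conjugate exponents $\tfrac{1}{\lambda}$ and $\tfrac{1}{1-\lambda}$ to get $\E_\nul[|f|^p] \leq \E_\nul[|f|^{p_1}]^{\lambda p/p_1}\,\E_\nul[|f|^{p_2}]^{(1-\lambda)p/p_2}$, i.e.\ $\|f\|_{p,\nul} \leq \|f\|_{p_1,\nul}^{\lambda}\,\|f\|_{p_2,\nul}^{1-\lambda}$. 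Instantiating $(p_1,p,p_2,\lambda) = (1,2,4,\tfrac23)$ gives the first inequality directly; instantiating $(p_1,p,p_2,\lambda) = (2,4,8,\tfrac12)$ gives $\|f\|_{4,\nul}^2 \leq \|f\|_{2,\nul}\|f\|_{8,\nul}$, and then $\frac{\|f\|_{4,\nul}}{\|f\|_{2,\nul}} = \frac{\|f\|_{4,\nul}^2}{\|f\|_{2,\nul}\|f\|_{4,\nul}} \leq \frac{\|f\|_{2,\nul}\|f\|_{8,\nul}}{\|f\|_{2,\nul}\|f\|_{4,\nul}} = \frac{\|f\|_{8,\nul}}{\|f\|_{4,\nul}} \leq \frac{\|f\|_{8,\nul}^2}{\|f\|_{4,\nul}^2}$, where the last step uses $\|f\|_{8,\nul} \geq \|f\|_{4,\nul}$ from monotonicity of $L^p$ norms in $p$. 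There is essentially no obstacle here — the only things to be careful about are (i) handling $|f|$ versus $f$ (harmless, since all quantities depend only on $|f|$), (ii) the degenerate case $\|f\|_{2,\nul} = 0$, where both inequalities hold trivially with the usual conventions, and (iii) making sure the chosen convex-combination weights are correct, which I have checked above.
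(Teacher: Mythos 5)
Your overall route is the right one and matches the paper's (the paper offers no proof beyond ``consequence of Hölder's inequality''): use the interpolation/log-convexity bound $\|f\|_{p,\nul}\leq \|f\|_{p_1,\nul}^{\lambda}\,\|f\|_{p_2,\nul}^{1-\lambda}$ when $\tfrac1p=\tfrac{\lambda}{p_1}+\tfrac{1-\lambda}{p_2}$. However, your convex-combination weights are miscalculated in both instantiations, and this breaks your proof of the second inequality. The constraint $\tfrac1p=\tfrac{\lambda}{p_1}+\tfrac{1-\lambda}{p_2}$ forces $\lambda=\tfrac13$ in both cases: $\tfrac12=\tfrac13\cdot 1+\tfrac23\cdot\tfrac14$ and $\tfrac14=\tfrac13\cdot\tfrac12+\tfrac23\cdot\tfrac18$. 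Your stated identities $\tfrac12=\tfrac23\cdot1+\tfrac13\cdot\tfrac14$ and $\tfrac14=\tfrac12\cdot\tfrac12+\tfrac12\cdot\tfrac18$ are arithmetically false (they equal $\tfrac34$ and $\tfrac{5}{16}$), and the resulting intermediate inequalities $\|f\|_{2,\nul}\leq\|f\|_{1,\nul}^{2/3}\|f\|_{4,\nul}^{1/3}$ and $\|f\|_{4,\nul}^2\leq\|f\|_{2,\nul}\|f\|_{8,\nul}$ are false in general: for $f$ equal to $M$ on an atom of probability $\epsilon$ and $0$ elsewhere, $\|f\|_{p,\nul}=M\epsilon^{1/p}$, and both fail for small $\epsilon$. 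Your first inequality is rescued only by your alternative direct Hölder step ($|f|^2=|f|^{2/3}\cdot|f|^{4/3}$ with exponents $\tfrac32,3$, giving $\E_\nul[f^2]\leq\E_\nul[|f|]^{2/3}\E_\nul[f^4]^{1/3}$, i.e.\ $\|f\|_{2,\nul}^3\leq\|f\|_{1,\nul}\|f\|_{4,\nul}^2$), which is correct but corresponds to $\lambda=\tfrac13$, contradicting the weights you claim to have checked. The chain you give for the second inequality genuinely relies on the false bound $\|f\|_{4,\nul}^2\leq\|f\|_{2,\nul}\|f\|_{8,\nul}$, so as written it does not establish the claim. (A smaller slip: in your general Hölder derivation the conjugate exponents should be $p_1/(\lambda p)$ and $p_2/((1-\lambda)p)$, not $1/\lambda$ and $1/(1-\lambda)$, though the displayed conclusion is the correct interpolation inequality.)

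The repair is immediate and makes the argument shorter than what you wrote. With $\lambda=\tfrac13$, interpolation between $2$ and $8$ gives $\|f\|_{4,\nul}\leq\|f\|_{2,\nul}^{1/3}\|f\|_{8,\nul}^{2/3}$, i.e.\ $\|f\|_{4,\nul}^3\leq\|f\|_{2,\nul}\|f\|_{8,\nul}^2$; dividing both sides by $\|f\|_{2,\nul}\|f\|_{4,\nul}^2$ is literally the second displayed inequality, with no chaining and no need for monotonicity of norms. Equivalently, argue directly via Hölder on $|f|^4=|f|^{4/3}\cdot|f|^{8/3}$ with exponents $\tfrac32$ and $3$ to get $\E_\nul[f^4]\leq\E_\nul[f^2]^{2/3}\E_\nul[f^8]^{1/3}$, exactly parallel to your correct treatment of the first inequality.
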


\begin{definition}[$(a,b)$-anticoncentrated]\label{def:anticoncentration}
    A real-valued random variable $x$ is $(a,b)$-anticoncentrated if $\Pr[|x|\geq b\cdot \E[x^2]^{1/2}]\geq a$.
\end{definition}

\begin{lemma}[Paley-Zygmund inequality]\label{lem:paley-zygmund}
    A real-valued random variable $x$ is $\left((1-\sqrt{\theta})^2\cdot \frac{\E[x^2]^2}{\E[x^4]},\theta\right)$-anticoncentrated for any $\theta\in [0,1]$.
\end{lemma}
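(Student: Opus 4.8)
The plan is to derive the statement from the classical Paley--Zygmund inequality applied to the nonnegative random variable $Z := x^2$; we may assume $0 < \E[x^4] < \infty$, since otherwise the claim is vacuous or trivial. First I would record the textbook form of Paley--Zygmund: for any nonnegative $Z$ and any $\lambda \in [0,1]$,
\[ \Pr\big[Z \geq \lambda\,\E[Z]\big] \;\geq\; (1-\lambda)^2\cdot \frac{\E[Z]^2}{\E[Z^2]}. \]
The proof is a single application of Cauchy--Schwarz: write $\E[Z] = \E\big[Z\,\mathbf{1}\{Z < \lambda\E[Z]\}\big] + \E\big[Z\,\mathbf{1}\{Z \geq \lambda\E[Z]\}\big]$, bound the first summand by $\lambda\,\E[Z]$ and the second by $\E[Z^2]^{1/2}\,\Pr[Z \geq \lambda\E[Z]]^{1/2}$, and rearrange.

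Next I would instantiate this with $Z = x^2$ and $\lambda = \theta^2$. Since $\E[Z] = \E[x^2]$, $\E[Z^2] = \E[x^4]$, and the event $\{Z \geq \theta^2\E[x^2]\}$ coincides with $\{|x| \geq \theta\cdot\E[x^2]^{1/2}\}$, this gives
\[ \Pr\big[\,|x| \geq \theta\cdot\E[x^2]^{1/2}\,\big] \;\geq\; (1-\theta^2)^2\cdot \frac{\E[x^2]^2}{\E[x^4]}. \]
Comparing with \cref{def:anticoncentration}, it only remains to verify the elementary inequality $(1-\theta^2)^2 \geq (1-\sqrt\theta)^2$ for $\theta\in[0,1]$: both $1-\theta^2$ and $1-\sqrt\theta$ lie in $[0,1]$, and $1-\theta^2 \geq 1-\sqrt\theta$ because $\theta^2 \leq \theta \leq \sqrt\theta$ on $[0,1]$. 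Combining the two displays then yields precisely that $x$ is $\big((1-\sqrt\theta)^2\cdot\E[x^2]^2/\E[x^4],\,\theta\big)$-anticoncentrated.

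There is no genuine obstacle here, as the content is entirely classical; the only point requiring a moment's care is lining everything up with the paper's nonstandard ``$(a,b)$-anticoncentrated'' convention. In particular one must choose $\lambda = \theta^2$ (rather than $\lambda = \theta$) inside Paley--Zygmund so that the deviation threshold emerges as $\theta\cdot\E[x^2]^{1/2}$, and the stated constant $(1-\sqrt\theta)^2$ is then a convenient, deliberately lossy weakening of the sharp constant $(1-\theta^2)^2$.
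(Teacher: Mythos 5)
Your proof is correct, and since the paper states \cref{lem:paley-zygmund} as a standard fact without proof, your derivation (classical Paley--Zygmund applied to $Z=x^2$ with $\lambda=\theta^2$, then weakening $(1-\theta^2)^2\geq(1-\sqrt{\theta})^2$) is exactly the intended routine argument, correctly aligned with the paper's $(a,b)$-anticoncentration convention.
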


\begin{definition}[Pointwise product]
    For two subspaces $V,V'\subset L^2(\Omega)$, the pointwise product $V\odot V'$ is the space consisting of $fg$ for $f\in V$ and $g\in V'$. Define $V^{\odot k} = \underbrace{V\odot \ldots \odot V}_{k\text{ times}}$.
\end{definition}

The Bernoulli distribution $\ber(p)$ is the distribution over $\{0,1\}$ that outputs $1$ with probability $p$ and $0$ with probability $1-p$. The Rademacher distribution $\rad$ is the uniform distribution over $\{\pm 1\}$.

\paragraph{Distinguishing Problems.}

For two distributions $\pla$ and $\nul$ over $\Omega$, let $\frac{d\pla}{d\nul}\in L^2(\nul)$ be the relative density $\frac{d\pla}{d\nul}(x) = \frac{\Pr_\pla[x]}{\Pr_\nul[x]}$. For a function $f\in L^2(\nul)$ define
\[\adv^{(\pla,\nul)}(f) = \frac{\E_\pla[f]-\E_\nul[f]}{\|f\|_{2,\nul}}.\]
Let $\F\subseteq L^2(\nul)$. Define
\[\adv^{(\pla,\nul)}[\F]=\sup_{f\in \F}\adv^{(\pla,\nul)}(f).\]

The following claim gives a way to compute $\adv^{(\pla,\nul)}[V]$ for subspaces $V$ in terms of an orthonormal eigenbasis of $V$.

\begin{claim}\label{claim:ldlr-expression}
    Let $V\subseteq L^2(\nul)$ be a subspace containing $\mathbf{1}$, and let $\mathbf{1},f_1,\ldots, f_\ell$ be a basis that is orthonormal with respect to $\langle \cdot,\cdot\rangle_\nul$. Then, $\adv^{(\pla,\nul)}[V] = \sqrt{\sum_{i\in [\ell]}\E_{\pla}[f_i]^2}$.
\end{claim}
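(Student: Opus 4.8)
The plan is to reduce the optimization $\sup_{f\in V}\adv^{(\pla,\nul)}(f)$ to a finite-dimensional problem by expanding an arbitrary $f\in V$ in the given orthonormal basis. Write $f = c_0\mathbf{1} + \sum_{i\in[\ell]} c_i f_i$ for real coefficients $c_0,c_1,\ldots,c_\ell$. The first observation is that the constant component $c_0\mathbf{1}$ is irrelevant: it cancels in the numerator $\E_\pla[f]-\E_\nul[f]$ since $\E_\pla[\mathbf{1}]=\E_\nul[\mathbf{1}]=1$, while in the denominator $\|f\|_{2,\nul}$ it only adds (because $\mathbf{1}$ is orthogonal to each $f_i$ under $\langle\cdot,\cdot\rangle_\nul$, by the orthonormality hypothesis, so $\|f\|_{2,\nul}^2 = c_0^2 + \sum_i c_i^2 \geq \sum_i c_i^2$). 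Hence the supremum is unchanged if we restrict to $f$ with $c_0=0$, i.e. $\E_\nul[f]=0$, and then
\[
\adv^{(\pla,\nul)}(f) = \frac{\sum_{i\in[\ell]} c_i\, \E_\pla[f_i]}{\sqrt{\sum_{i\in[\ell]} c_i^2}}.
\]

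Now this is exactly a normalized linear functional in the vector $c=(c_1,\ldots,c_\ell)$: it equals $\langle c, v\rangle / \|c\|_2$ where $v_i = \E_\pla[f_i]$. By Cauchy--Schwarz this is at most $\|v\|_2 = \sqrt{\sum_{i}\E_\pla[f_i]^2}$, with equality attained by choosing $c = v$ (assuming $v\neq 0$; if $v=0$ then the advantage is identically $0$ and the formula $\sqrt{\sum_i\E_\pla[f_i]^2}=0$ still holds). Taking the supremum over all valid $f$ therefore gives $\adv^{(\pla,\nul)}[V] = \sqrt{\sum_{i\in[\ell]}\E_\pla[f_i]^2}$, as claimed. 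One should double-check the edge case $\|f\|_{2,\nul}=0$: this forces all $c_i=0$, so such $f$ contribute nothing to the supremum and can be discarded.

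There is essentially no hard step here — it is a one-line Cauchy--Schwarz argument once the expansion is set up. The only point that requires a moment's care is confirming that the orthonormality is with respect to the \emph{null} inner product $\langle\cdot,\cdot\rangle_\nul$ (which is what makes $\|f\|_{2,\nul}^2$ split as a sum of squares of coefficients) and that $\mathbf{1}$ being in the basis is used precisely so that $\E_\nul[f]$ and $\E_\pla[\mathbf{1}]$ behave as expected; both are exactly the hypotheses stated. I would present the argument in the two moves above: first kill the constant term, then apply Cauchy--Schwarz to the resulting normalized linear functional.
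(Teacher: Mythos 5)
Your proposal is correct and follows essentially the same route as the paper's proof: expand $f$ in the orthonormal basis, discard the constant component, bound by Cauchy--Schwarz, and note equality is attained at $c_i=\E_\pla[f_i]$. Your treatment of the degenerate cases ($v=0$ or $\|f\|_{2,\nul}=0$) is a small extra care the paper omits, but the argument is the same.
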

\begin{proof}
    Let $f\in V$. Expand $f=c_0\mathbf{1} + \sum_{i\in [\ell]}c_i f_i$. We have
    \begin{align*}
      \adv^{(\pla,\nul)}(f) &= \frac{\sum_{i\in [\ell]}c_i\cdot \E_{\pla}[f_i]}{\sqrt{c_0^2+\sum_{i\in [\ell]}c_i^2}}\tag{$\E_\nul[f_i]=0$ for all $i\in [\ell]$ by orthogonality}\\
      &\leq \frac{\sum_{i\in [\ell]}c_i\cdot \E_{\pla}[f_i]}{\sqrt{\sum_{i\in [\ell]}c_i^2}}\tag{$c_0^2\geq 0$}\\
      &\leq \sqrt{\sum_{i\in [\ell]}\E_\pla[f_i]^2}.\tag{Cauchy-Schwarz}
    \end{align*}
    This proves $\adv^{(\pla,\nul)}[V] \leq \sqrt{\sum_{i\in [\ell]}\E_{\pla}[f_i]^2}$. For the other direction, notice that one could have picked $c_i = \E_\pla[f_i]$ and $c_0=0$, in which case all inequalities would be tight, proving $\adv^{(\pla,\nul)}[V] = \sqrt{\sum_{i\in [\ell]}\E_{\pla}[f_i]^2}$.
\end{proof}

\paragraph{Markov Chains and Noise Operators.}
A Markov chain $M$ over $\Omega$ is defined by a transition rule which maps an element $x\in \Omega$ to a probability distribution over $\Omega$. We will denote a sample from this distribution by $y\sim M(x)$. For a distribution $\nul$, $M(\nul)$ will denote the distribution of $y$ in the following process: (1) sample $x\sim \nul$, (2) sample $y\sim M(x)$. If $M(\nul)=\nul$, we say $\nul$ is a stationary distribution of $M$.

We associate a \emph{noise operator} $T:L^2(\nul)\to L^2(\nul)$ to any Markov chain $M$ that is defined by $Tf(x) = \E_{y\sim M(x)}[f(y)]$. Note that $T$ always has the constant function $\mathbf{1}(x) = 1$ as an eigenvector with eigenvalue $1$.

\begin{fact}
    Let $T$ be the noise operator associated with a reversible Markov chain with stationary distribution $\nul$. $T$ has a complete basis of eigenvectors $\mathbf{1}=f_1,f_2,\ldots, f_{|\Omega|}$ that are orthogonal with respect to $\nul$ with real eigenvalues $1=\lambda_1\geq \lambda_2\geq \ldots\geq \lambda_{|\Omega|}$.
\end{fact}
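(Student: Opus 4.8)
The plan is to recognize this as the finite-dimensional real spectral theorem applied to $T$, the point being that reversibility is exactly the condition that makes $T$ self-adjoint with respect to $\langle\cdot,\cdot\rangle_\nul$. First I would dispose of a technicality: if $\nul$ does not have full support, then $\langle\cdot,\cdot\rangle_\nul$ is only a seminorm on $L^2(\Omega)$, so I would restrict attention to $\Omega' := \mathrm{supp}(\nul)$. Writing $M(x,y)$ for the transition probabilities, stationarity $M(\nul)=\nul$ gives $\nul(y) = \sum_x \nul(x) M(x,y) \ge \nul(x) M(x,y)$ for every $x$, so $M(x,y)=0$ whenever $x\in\Omega'$ and $y\notin\Omega'$; hence $T$ maps functions on $\Omega'$ to functions on $\Omega'$, and $\langle\cdot,\cdot\rangle_\nul$ is a genuine inner product there. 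So from now on I would assume without loss of generality that $\nul$ has full support on $\Omega$.

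Next I would write out the detailed balance condition, $\nul(x)M(x,y) = \nul(y)M(y,x)$ for all $x,y\in\Omega$ (equivalent to reversibility), and use it to show $T$ is self-adjoint: for $f,g\in L^2(\nul)$,
\[
\langle Tf, g\rangle_\nul = \sum_{x,y}\nul(x)M(x,y)f(y)g(x) = \sum_{x,y}\nul(y)M(y,x)f(y)g(x) = \langle f, Tg\rangle_\nul ,
\]
where the middle equality is detailed balance applied to each pair $(x,y)$ and the last regroups the sum. The finite-dimensional real spectral theorem then yields a basis $f_1,\dots,f_{|\Omega|}$ of $L^2(\nul)$ that is orthogonal (indeed orthonormal) with respect to $\langle\cdot,\cdot\rangle_\nul$ and consists of eigenvectors of $T$ with real eigenvalues, which I would sort in decreasing order $\lambda_1 \ge \lambda_2 \ge \dots \ge \lambda_{|\Omega|}$.

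It remains to pin down the top of the spectrum. I would check directly that $(T\mathbf{1})(x) = \sum_y M(x,y) = 1$, so $\mathbf{1}$ is an eigenvector with eigenvalue $1$. To see that $1$ is the largest eigenvalue, I would show $T$ is a contraction on $L^2(\nul)$: by Jensen's inequality followed by stationarity,
\[
\|Tf\|_{2,\nul}^2 = \E_{x\sim\nul}\left[\left(\E_{y\sim M(x)} f(y)\right)^2\right] \le \E_{x\sim\nul}\,\E_{y\sim M(x)}\left[f(y)^2\right] = \E_{y\sim M(\nul)}\left[f(y)^2\right] = \|f\|_{2,\nul}^2 ,
\]
so every eigenvalue $\lambda$ satisfies $|\lambda|\le 1$; combined with $T\mathbf{1}=\mathbf{1}$ this forces $\lambda_1 = 1$. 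Finally, since $\|\mathbf{1}\|_{2,\nul}=1$ and $\mathbf{1}$ lies in the (possibly multi-dimensional) eigenspace of $\lambda_1=1$, I would choose the orthonormal basis of that eigenspace so that its first element is $\mathbf{1}$, giving $f_1 = \mathbf{1}$ as claimed.

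I do not anticipate a genuine obstacle: this is the classical statement that reversible chains have self-adjoint transition operators, and the only steps requiring a word of care are the support technicality in the first paragraph and the observation that the constant function can be taken to be $f_1$ (rather than merely being \emph{some} eigenvector with eigenvalue $1$), which follows from the contraction bound placing the constant function's eigenvalue at the top of the spectrum.
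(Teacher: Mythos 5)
Your proposal is correct: the paper states this as a standard \emph{Fact} without proof, and your argument — detailed balance gives self-adjointness of $T$ with respect to $\langle\cdot,\cdot\rangle_\nul$, the finite-dimensional spectral theorem gives the orthogonal eigenbasis with real eigenvalues, and Jensen plus stationarity shows $T$ is a contraction so that $T\mathbf{1}=\mathbf{1}$ pins $\lambda_1=1$ — is exactly the standard argument the authors are implicitly invoking. The support technicality and the choice of $\mathbf{1}$ as the first basis element within the top eigenspace are handled appropriately.
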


We will denote by $\lambda_{\geq c}(T)$ (resp. $\lambda_{<c}(T)$) the span of all eigenvectors with eigenvalue $\geq c$ (resp. $<c$).

\paragraph{Boolean functions.}
Often, $\Omega$ will be equal to the boolean hypercube $\{\pm 1\}^{[m]}$ with $\nul=\unif(\Omega)$. For $x\in \{\pm 1\}^{[m]}$ and a set $S\subseteq [m]$, write $\chi_S(x)=x_S = \prod_{i\in S}x_i$. The \emph{Fourier basis} $\{\chi_S\}_{S\subseteq [m]}$ forms an orthonormal basis of $L^2(\nul)$. Let $\F_{\leq d}=\Span\{\chi_S:|S|\leq d\}$ be the set of all functions having degree at most $d$.

\begin{lemma}[Bonami's Lemma \cite{bonami1970etude}]\label{lem:bonami}
    Every degree $\leq d$ function $f$ over the boolean hypercube satisfies $\|f\|_{4,\nul}\leq \sqrt{3}^d\cdot \|f\|_{2,
    \nul}$.
\end{lemma}

We will also consider the setting of the $p$-biased hypercube, where $\Omega =\{0,1\}^{[m]}$ and $\nul=\ber(p)^{\otimes m}$. For $x\in \{0,1\}^{[m]}$ and $S\subseteq [m]$, define the $p$-biased Fourier character $\chi^p_S(x) = \prod_{i\in S}\frac{x_i - p}{\sqrt{p(1-p)}}$. The Fourier basis $\{\chi^p_S\}_{S\subseteq [m]}$ is orthonormal with respect to $\nul$. Again, we say that $f\in L^2(\nul)$ has degree at most $d$ if $f\in \Span\{\chi^p_S:|S|\leq d\}$. Say that $f$ is $S_m$-symmetric if $f(x) = f(\pi(x))$ for any permutation $\pi$, where $\pi$ acts on $x$ by permuting its coordinates.
\begin{lemma}[Consequence of \cite{keevash2021global}]\label{lem:p-biased}
    Suppose $d\leq mp$. Every $S_m$-symmetric degree $\leq d$ function $f$ over the $m$-dimensional $p$-biased hypercube satisfies $\|f\|_{4,\nul}\leq 8^d\cdot \|f\|_{2,\nul}$.
\end{lemma}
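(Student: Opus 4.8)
The plan is to strip away the multivariate structure: a symmetric function is determined by the Hamming weight, so the statement collapses to a one-variable hypercontractive inequality for low-degree polynomials of a Binomial random variable in the near-Gaussian regime $\deg\le mp$, which is exactly the symmetric instance of the global hypercontractivity of \cite{keevash2021global}. The symmetry hypothesis is essential here: an arbitrary degree-$\le d$ function on the $p$-biased cube only satisfies $\|f\|_{4,\nul}\le p^{-\Omega(d)}\|f\|_{2,\nul}$, and avoiding this $p^{-\Omega(d)}$ factor is the whole point.

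First I would carry out the reduction to one variable. Since the $S_m$-orbits on $\{0,1\}^m$ are indexed by the Hamming weight $w(x):=\sum_i x_i$, every $S_m$-symmetric function is a function of $w(x)$. For the degree, note that the elementary symmetric polynomial $e_j(x)=\binom{w(x)}{j}$ has degree $j$ both as a multilinear polynomial in $x$ and as a polynomial in $w$; since $\{e_0,\dots,e_d\}$ spans the $S_m$-symmetric functions of degree $\le d$ and $\{\binom{w}{j}\}_{j\le d}$ is a basis for the univariate polynomials of degree $\le d$, a function $f$ is $S_m$-symmetric of degree $\le d$ if and only if $f(x)=q(w(x))$ with $\deg q\le d$ (using $d\le m$, which follows from $d\le mp$). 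Under $\nul=\ber(p)^{\otimes m}$ we have $w(x)\sim \mathrm{Bin}(m,p)$, so the lemma is equivalent to: for $W\sim\mathrm{Bin}(m,p)$ with $d\le mp$ and $\deg q\le d$,
\[ \E\big[q(W)^4\big]^{1/4}\ \le\ 8^{d}\cdot \E\big[q(W)^2\big]^{1/2}. \]

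For the univariate inequality I would first use the bit-flip symmetry $x\mapsto \mathbf 1-x$, $p\mapsto 1-p$ — which preserves $S_m$-symmetry, degree, and all $L^p(\nul)$-norms, sending each $p$-biased character to minus a $(1-p)$-biased character — to assume $p\le\tfrac12$, so that $\sigma^2:=mp(1-p)\ge mp/2\ge d/2$. Writing $Z:=(W-mp)/\sigma$ and $q(W)=\tilde q(Z)$ with $\deg\tilde q\le d$, the point is that $d\le mp$ puts $Z$ in the Gaussian regime: a Rosenthal/Bernstein bound for sums of independent bounded variables gives $\E\big[|Z|^{k}\big]\le (C_0\sqrt{k})^{k}$ for all $k\le 4d$ (the large-deviation term $k^k$ of Rosenthal's inequality is absorbed since $k\le 4mp\le 8\sigma^2$), and the odd moments of $Z$ are in addition $O((mp)^{-1/2})$-small. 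Hence, up to a factor $C_1^{O(d)}$, the degree-$\le 4d$ moment data of $Z$ matches that of a standard Gaussian $G$ as quadratic and quartic forms in the coefficient vector of $\tilde q$; feeding this into the expansions of $\E[\tilde q(Z)^2]$ and $\E[\tilde q(Z)^4]$ and applying the classical Gaussian hypercontractive bound $\|\tilde q(G)\|_4\le\sqrt{3}^{\,d}\|\tilde q(G)\|_2$ yields $\|\tilde q(Z)\|_4\le (C_2\sqrt{3})^{d}\|\tilde q(Z)\|_2$, and bookkeeping the absolute constants gives $C_2\sqrt{3}\le 8$. Alternatively one checks directly that an $S_m$-symmetric degree-$\le d$ function is extremely ``global'' — every generalized influence obeys $\|D_Sf\|_{2,\nul}^2\le (O(d/m))^{|S|}\|f\|_{2,\nul}^2$, by symmetry together with $\sum_i\|D_if\|_{2,\nul}^2=\sum_T|T|\,\widehat f(T)^2\le d\,\|f\|_{2,\nul}^2$ — and plugs this into the global hypercontractivity theorem of \cite{keevash2021global}, whose bound degenerates to $\|f\|_{4,\nul}\le C^d\|f\|_{2,\nul}$ with an absolute $C<8$ as $m\to\infty$.

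I expect the main obstacle to be precisely this univariate near-Gaussian hypercontractivity — controlling the first $4d$ moments of $\mathrm{Bin}(m,p)$ and converting them into a norm comparison with the Gaussian that carries an explicit degree-exponential constant. This is exactly where the hypothesis $d\le mp$ is indispensable: without it, $Z$ lies in the Binomial's large-deviation regime and any crude estimate reintroduces a $p^{-\Omega(d)}$ factor, the very obstruction that makes ordinary $p$-biased hypercontractivity useless for symmetric functions. Handling the skewness (the odd moments) and the boundary cases $p\to 0,1$ cleanly takes some care, but the generous gap between the target $8^d$ and the ideal $\sqrt{3}^{\,d}$ leaves plenty of room.
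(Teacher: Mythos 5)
Your reduction to one variable is fine (an $S_m$-symmetric degree-$\le d$ function is exactly $q(w(x))$ with $\deg q\le d$, and WLOG $p\le 1/2$), but the central step of your main route has a genuine gap. From Rosenthal/Bernstein you only get an \emph{upper} bound $\E[|Z|^k]\le (C_0\sqrt k)^k$ for the standardized binomial, plus smallness of odd moments; the claim that this makes the moment data of $Z$ ``match that of a standard Gaussian as quadratic and quartic forms in the coefficient vector of $\tilde q$'' does not follow. To run the comparison you need, in particular, a \emph{lower} bound $\E[\tilde q(Z)^2]\ge c^{d}\,\E[\tilde q(G)^2]$, and entrywise comparability of moments (even two-sided, within $C^k$ factors) does not transfer to comparability of the quadratic form $\sum_{i,j}c_ic_j\E[Z^{i+j}]$, because of cancellations among the $c_i$: the form is not diagonal in the monomial basis, and a polynomial can have large coefficients but small $L^2(Z)$ norm. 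This is acute precisely at the boundary of your hypothesis, $d\asymp mp$, where $\sigma^2\asymp d$, the skewness $\E[Z^3]\asymp d^{-1/2}$ and higher cumulants are not negligible at degree $d$, and $\E[Z^k]/\E[G^k]$ itself drifts by $e^{\Theta(d)}$ factors. What you actually need is a hypercontractive inequality for the binomial's orthogonal polynomial system (Krawtchouk, or Charlier in the Poissonian limit) up to degree $\asymp mp$ --- which is essentially the lemma itself restated in one variable; your own text flags this as ``the main obstacle'' and then does not resolve it, so the proposal as written does not close.

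For comparison, the paper never leaves the multivariate setting and avoids this issue entirely: $S_m$-symmetry forces $\hat f_T=\hat f_{T'}$ whenever $|T|=|T'|$, so averaging over sets $S'$ of size $|S|$ gives $(p(1-p))^{|S|}I_S(f)=\binom{m}{|S|}^{-1}\sum_{|T|\le d}\binom{|T|}{|S|}\hat f_T^2$, hence $I_S(f)\le 4^d\bigl(|S|/(mp)\bigr)^{|S|}\|f\|_{2,\nul}^2\le 4^d\|f\|_{2,\nul}^2$ --- this is exactly where $d\le mp$ is used --- and then Lemma II.1.6 of \cite{keevash2021global} yields $\|f\|_{4,\nul}\le 8^d\|f\|_{2,\nul}$. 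Your parenthetical ``alternative'' is this argument, but note two things: your influence bound $(O(d/m))^{|S|}$ omits the $(p(1-p))^{-|S|}$ normalization that the global hypercontractivity theorem requires (reinstating it is precisely where $d\le mp$ enters), and for $|S|\ge 2$ the bound does not follow from symmetry plus $\sum_i\|D_if\|_{2,\nul}^2\le d\|f\|_{2,\nul}^2$ alone --- you need the level-wise averaging computation above. With those repairs the aside becomes the paper's proof; the univariate Gaussian-comparison route would need the Krawtchouk hypercontractivity statement supplied as an independent ingredient.
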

\begin{proof}
    Assume without loss of generality that $p\leq 1/2$. Write $f=\sum_{|T|\leq d}\hat{f}_T\chi_T^p$, and let $I_S(f) = (p(1-p))^{-|S|}\sum_{T\supseteq S}\hat{f}_T^2$.  
    We will apply Lemma II.1.6 from \cite{keevash2021global}, which states that if $I_S(f)\leq b\cdot \|f\|_{2,\nul}^2$ for all $|S|\leq d$, then $\|f\|_{4,\nul}\leq 4^d\cdot b^{1/4}\cdot \|f\|_{2,\nul}$. In order to apply this, the key property about $S_m$-symmetric functions we will use is that $\hat{f}_T = \hat{f}_{T'}$ if $|T|=|T'|$.
    \[(p(1-p))^{|S|}I_S(f) = \sum_{T\supseteq S}\hat{f}_T^2=\E_{S'\sim \binom{[m]}{|S|}}\left[\sum_{T\supseteq S'}\hat{f}_T^2\right]= \binom{m}{|S|}^{-1}\sum_{|T|\leq d}\binom{|T|}{|S|}\cdot\hat{f}_T^2,\]
    where the second inequality used $S_m$-symmetry and the last equality used the fact that $T\supseteq S'$ with probability equal to $\binom{|T|}{|S|}/\binom{m}{|S|}$.

    
    Since $|T|$ and $|S|$ are always at most $d$, we can bound $\binom{|T|}{|S|}\leq 2^d$. Therefore
    \[I_S(f)\leq (p(1-p))^{-|S|}\cdot \binom{m}{|S|}^{-1}\cdot 2^d\cdot \|f\|_{2,\nul}^2\leq 4^d\cdot \left(\frac{|S|}{mp}\right)^{|S|}\cdot \|f\|_{2,\nul}^2\leq 4^d\cdot \|f\|_{2,\nul}^2,\]
    where we used the fact that $p\leq 1/2$ and $|S|\leq d \leq mp$. In conclusion, we have proved that $I_S(f)\leq 4^d\cdot \|f\|_{2,\nul}^2$, implying $\|f\|_{4,\nul}\leq 8^d\cdot \|f\|_{2,\nul}$.
\end{proof}

\paragraph{Graphs.} In this work, we will always represent a graph by a vector $G\in \{\pm 1\}^{\binom{[n]}{2}}$. The entries of $G$ should be interpreted as follows: the edge $\{i,j\}$ is present if and only if $G_{ij}=1$.

\begin{definition}[Random Graph Models]
    We will consider two distributions over $n$-vertex graphs. The first is the \ER\ model $G(n,1/2) = \unif(\{\pm 1\}^{\binom{[n]}{2}})$. The second is the binomial-$k$ Planted Clique model $G(n,1/2,k)$ that is sampled as follows: 
    \begin{itemize}
        \item Sample a graph $G\sim G(n,1/2)$.
        \item Sample a random vector $x\sim \ber(k/n)^{\otimes n}$. For all $i,j$ such that $x_i=x_j=1$, set $G_{ij}=1$.
        \item Output $G$.
    \end{itemize}
\end{definition}

\section{Technical Overview for Planted Clique}\label{sec:overview}
In this section we describe the main proof ideas behind our results for planted clique. We will discuss \cref{thm:main-1,thm:intro-moment-match} in \cref{sec:overview-reduction,subsec:moments} respectively.

\subsection{Hardness Self-Amplification}\label{sec:overview-reduction}
Let us begin by discussing \cref{thm:main-1} which states that the Planted Clique Hypothesis (\cref{conj:planted}) implies a much stronger hardness statement. Let $\beta>0$ be a constant, let $\Omega = \{\pm 1\}^{\binom{[n]}{2}}$, and define the distributions
\[\pla^*=G(n,1/2,n^{1/2-\beta}),\quad \nul=G(n,1/2).\]
Our goal is to show that $\pla^*$ and $\nul$ cannot be distinguished with advantage significantly larger than $\adv^{(\pla^*,\nul)}[\F_{\leq d}]$ for some constant $d$. For the sake of contradiction, assume that there is an efficient test $A$ mapping $\Omega$ to $\{\pm 1\}$ achieving advantage
\begin{equation}\label{eqn:a-adv-bound}
    \algadv^{(\pla^*,\nul)}(A)> (1+\Omega(1))\cdot \adv^{(\pla^*,\nul)}[\F_{\leq d}].
\end{equation}

Let us pick another constant $\alpha$ such that $0<\alpha<\beta$, and define $\pla = G(n,1/2,n^{1/2-\alpha})$. We will use $A$ to construct another efficient test $B$ such that
\[\algadv^{(\pla,\nul)}(B)\geq \Omega(1),\]
which contradicts the Planted Clique Hypothesis by \cref{thm:hirahara}. Our construction of $B$ proceeds in two steps. First, in the \emph{amplification} step, we construct an efficiently computable \emph{real-valued} function $g:\Omega\to \R$ achieving $\adv^{(\pla,\nul)}(g) = \omega(1)$. Next, we perform a \emph{rounding} step to convert $g$ into an efficiently computable $\{\pm 1\}$-valued test $B$ achieving $\algadv^{(\pla,\nul)}(B) = \Omega(1)$.

\begin{assumption}
We assume that $A$ is deterministic, so it can be interpreted as a function $A:\Omega\to \{\pm 1\}$. We also assume that $\E_\nul[A] = 0$. All functions $f$ we will construct from $A$ will also share this property, allowing us to write the simplified expression
\[\adv^{(\pla^*,\nul)}(f) = \frac{\E_{\pla^*}[f]}{\|f\|_{2,\nul}}.\]
These assumptions are completely avoidable, and solely serve the purpose of simplifying the presentation.
\end{assumption}

\paragraph{Step 1: Amplification.} In this part, our goal is, given $A$ satisfying \cref{eqn:a-adv-bound}, to construct a function $g\in L^2(\nul)$ that achieves $\adv^{(\pla,\nul)}(g)=\omega(1)$. As in any average case reduction, we will need an efficiently computable randomized map $M$ from $\Omega$ to $\Omega$ that maps $\nul$ to $\nul$ and $\pla$ to $\pla^*$. We will use the \emph{vertex-resampling Markov Chain} $M$. We define $M$ by specifying how to sample $H\sim M(G)$ for a fixed $G\in \Omega$. Set $p:=n^{\alpha - \beta}$. $H$ is sampled as follows.
\begin{itemize}
    \item Sample $z\sim \ber(p)^{\otimes n}$.
    \item For each $\{i,j\}\in \binom{[n]}{2}$, set $H_{\{i,j\}}=G_{\{i,j\}}$ if $z_i=z_j=1$, and independently sample $H_{\{i,j\}}\sim \rad$ otherwise.
\end{itemize}

It is not too difficult to see that $M(\nul)=\nul$. As for $\pla$, one can check that $M$ essentially independently removes vertices from the clique with probability $1-p$, and hence $M(\pla)=\pla^*$ is another Planted Clique distribution with slightly smaller clique size.

\begin{remark}
We remark that similar maps have appeared in prior work. In particular $M$ is the composition of the \emph{embedding} and \emph{shrinking} reductions used by Hirahara and Shimizu \cite{hirahara2023hardness,hirahara2024planted}. This map has also been used in a different context in the work of Brennan et al. \cite{brennan2020statistical}.
\end{remark}

Let $T:L^2(\nul)\to L^2(\nul)$ be the noise operator associated with $M$, that is, $Tf(G) = \E_{H\sim M(G)}[f(H)]$. The main observation we will need is that the Fourier basis gives an explicit orthonormal basis of eigenvectors of $T$ with respect to the inner product $\langle f,g\rangle_\nul = \E_\nul[f\cdot g]$.

\begin{fact}\label{fact:eigen}
    For all $S\subseteq \binom{[n]}{2}$, $\chi_S(G) = G_S = \prod_{\{i,j\}\in S}G_{\{i,j\}}$ is an eigenvector of $T$ with eigenvalue $\lambda_S = p^{|V(S)|}$, where $V(S)\subseteq [n]$ is the set of all vertices belonging to some edge in $S$.
\end{fact}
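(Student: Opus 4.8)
The plan is to compute the action of $T$ on a Fourier character $\chi_S$ directly, by conditioning on the randomness $z\sim\ber(p)^{\otimes n}$ that drives the vertex‑resampling chain $M$. The structural feature I would exploit is that once $z$ is fixed, the coordinates of $H\sim M(G)$ become mutually independent: each edge $\{i,j\}$ is either frozen to $G_{\{i,j\}}$ or replaced by a fresh Rademacher bit, independently of all other edges, with the choice determined solely by whether both endpoints of $\{i,j\}$ lie in the ``kept'' set selected by $z$.

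Concretely, I would first write $T\chi_S(G)=\E_z\bigl[\,\E_{H\mid z}\!\bigl[\prod_{\{i,j\}\in S}H_{\{i,j\}}\bigr]\bigr]$ and push the inner expectation through the product using conditional independence, turning it into $\prod_{\{i,j\}\in S}\E_{H\mid z}[H_{\{i,j\}}]$. Each factor equals $G_{\{i,j\}}$ if $\{i,j\}$ is kept under $z$, and $0$ otherwise, since a Rademacher variable has mean zero; hence a single resampled edge inside $S$ annihilates the whole term. Therefore the inner expectation equals $\chi_S(G)$ when every edge of $S$ is kept and $0$ otherwise, and an edge is kept exactly when both of its endpoints are kept, so the entire edge set $S$ is kept precisely when every vertex of $V(S)$ is kept. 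Taking the expectation over $z$, and using that the $z_v$ are i.i.d.\ with each vertex kept with probability $p$, the probability that all of $V(S)$ is kept is $p^{|V(S)|}$, which yields $T\chi_S = p^{|V(S)|}\chi_S$. Since $\nul=\unif(\{\pm1\}^{\binom{[n]}{2}})$, the characters $\{\chi_S\}$ already form an orthonormal basis of $L^2(\nul)$, so this simultaneously exhibits a complete orthonormal eigenbasis of $T$; a quick consistency check is that $S=\emptyset$ gives eigenvalue $p^0=1$, matching the fact that $T$ fixes $\mathbf 1$, and every nonempty $S$ has $|V(S)|\ge 2$, so all other eigenvalues are at most $p^2$.

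I do not expect any genuine obstacle here: the computation is short. The only point needing a moment's care is the factorization of the conditional expectation — it relies on the conditional independence of the edge values given $z$ together with the observation that one resampled coordinate already forces the contribution to vanish — after which the eigenvalue is simply the probability of an elementary event about i.i.d.\ coins.
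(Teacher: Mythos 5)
Your proof is correct and is essentially the computation the paper itself relies on: the Fact is stated there without a separate proof, and the same conditioning-on-$z$ factorization (a resampled Rademacher edge has mean zero and kills the whole term, so $\chi_S$ survives exactly when every vertex of $V(S)$ is kept, and uniformity of $\nul$ makes the $\chi_S$ a complete orthonormal eigenbasis) is exactly the calculation carried out for the lifted chain in Claim~\ref{claim:contr}. One caveat worth noting: your ``each vertex kept with probability $p$'' matches the paper's intended convention (consistent with $M(\pla)=\pla^*$ having clique size $pk$, with $\lambda_2=p^2$, and with Claim~\ref{claim:contr}), whereas the overview's literal wording ($z\sim\ber(p)$, keep an edge iff $z_i=z_j=0$) would give $(1-p)^{|V(S)|}$ --- an internal inconsistency in the paper's definition of $M$, not a gap in your argument.
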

In particular we see that $M$ has a large spectral gap; its largest nontrivial eigenvalue is $\lambda_2:=\max_{S\neq \emptyset}|\lambda_S| = p^2 = n^{2(\alpha-\beta)}=o(1)$. Note that since $\E_\nul[A]=0$, $A$ is orthogonal to the constant function $\mathbf{1}$, and lies completely in the nontrivial eigenspace of $T$. This implies the following contractive inequality:
\begin{equation*}\label{eqn:f-contraction}
    \|TA\|_{2,\nul}\leq p^2\cdot \|A\|_{2,\nul} = p^2.
\end{equation*}
This suggests that we might want to set $g=TA$. Let us compute
\begin{equation}\label{eqn:Tf-bound}
    \adv^{(\pla,\nul)}(TA) = \frac{\E_\pla[TA]}{\|TA\|_{2,\nul}} = \frac{\E_{\pla^*}[A]}{\|TA\|_{2,\nul}}\geq \frac{\E_{\pla^*}[A]}{p^2}=\frac{\algadv^{(\pla^*,\nul)}(A)}{p^2}\gtrsim \frac{\adv^{(\pla^*,\nul)}(\F_{\leq d})}{p^2}.
\end{equation}
The second equality above uses the definition of $T$ and the fact that $M(\pla)=\pla^*$, and the final equality uses our simplifying assumption that $\E_\nul[A]=0$. A simple calculation (for instance, the one in \cref{sec:a3}) shows that the numerator is approximately $n^{-2\beta}$ for any constant $d$. On the other hand, the denominator is $p^2 = n^{2(\alpha-\beta)}$ . Consequently, we have shown
\[\adv^{(\pla,\nul)}(TA) \gtrsim \Omega(n^{-2\alpha}).\]
Unfortunately this is not strong enough; recall that we wanted to compute $g$ achieving $\adv^{(\pla,\nul)}(g) = \omega(1)$. So we must look beyond the spectral gap to achieve the desired amplification. One might imagine that if $A$ was orthogonal not only to $\mathbf{1}$, but also to \emph{all} eigenvectors corresponding to eigenvalues that are too large, $TA$ would obtain much larger advantage.

Our main idea is to achieve this by explicitly projecting $A$ onto the orthogonal complement of $\F_{\leq d} = \Span\{\chi_S : |S|\leq d\}$. Let us write $A=A_{+}+A_{-}$ for some low degree function $A_{+} \in \F_{\leq d}$ and some high degree function $A_{-}\perp \F_{\leq d}$, and imagine replacing $A$ with $A_-$ in the prior discussion. Observe that for every set $S$ of edges with $|S|> d$, we have $|V(S)|\geq \sqrt{|S|}>\sqrt{d}$, so $\lambda_S= p^{2 |V(S)|}< p^{2\sqrt{d}}$. As a consequence, $T$ acts as a $p^{2\sqrt{d}}$-contraction on $A_-$, an improvement from the $p^2$ before. 

Next, we will lower bound $\adv^{(\pla^*,\nul)}(A_-)$. Since $A_-$ and $A_+$ are orthogonal, $\max(\|A_-\|_{2,\nul},\|A_+\|_{2,\nul})\leq \|A\|_{2,\nul}= 1$, so one can show the following kind of ``triangle inequality'':
\[\algadv^{(\pla^*,\nul)}(A) = \E_{\pla^*}[A]=\E_{\pla^*}[A_+]+ \E_{\pla^*}[A_-]\leq \adv^{(\pla^*,\nul)}(A_{+})+\adv^{(\pla^*,\nul)}(A_{-}).\]
Recall that $\E_{\pla^*}[A] \geq (1+\Omega(1))\cdot \adv^{(\pla^*,\nul)}[\F_{\leq d}]$ by \cref{eqn:a-adv-bound}, and note that $\adv^{(\pla^*,\nul)}(A_+) \leq \adv^{(\pla^*,\nul)}[\F_{\leq d}]$ because $A_+\in \F_{\leq d}$. Together, these imply that $A_-$ achieves a nontrivial ``advantage'' of
\[\adv^{(\pla^*,\nul)}(A_-)\geq \Omega\left(\adv^{(\pla^*,\nul)}[\F_{\leq d}]\right).\]
In a similar fashion to \cref{eqn:Tf-bound}, we can conclude that
\[\adv^{(\pla,\nul)}(TA_{-})\geq \frac{\adv^{(\pla^*,\nul)}(A_{-})}{p^{2\sqrt{d}}} = \Omega\left(\frac{\adv^{(\pla^*,\nul)}[\F_{\leq d}]}{p^{2\sqrt{d}}}\right)\approx \Omega\left(n^{-2\beta}\cdot n^{2\sqrt{d}(\beta-\alpha)}\right).\]
Picking $d$ to be a constant greater than $\left(\frac{\beta}{\beta-\alpha}\right)^2$, we get $\adv^{(\pla,\nul)}(TA_{-})=\omega(1)$, as desired. Therefore we will use $g:=TA_-$.

\begin{remark}
    $TA_{-}(G)$ is not directly computable, but we can approximate it sufficiently well by making a polynomial number of queries to $A$.
\end{remark}

\paragraph{Step 2: Rounding.}

In this step, we show how to round $g$ to our final distinguisher $B$ for $\pla$ and $\nul$ achieving advantage $\Omega(1)$. We will construct $B$ by thresholding the value of $g=TA_-$. Concretely, we will compute $B$ in the following way.

\begin{itemize}
    \item On input $G\in \Omega$, compute $g(G)$.
    \item If $|g(G)|\geq \|g\|_{2,\pla}/10$, return $1$. Otherwise, return $-1$.
\end{itemize}

We will interpret an output of $1$ (resp. $-1$) as the algorithm guessing that the input was sampled from $\pla$ (resp. $\nul$). It is quite straightforward to bound the success probability under $\nul$; by Markov's inequality, one can show 
\[\Pr[|g(G)|\geq \|g\|_{2,\pla}/10]\leq 100\cdot \frac{\|g\|_{2,\nul}}{\|g\|_{2,\pla}}\leq 100\cdot \frac{\|g\|_{2,\nul}}{|\E_\pla[g]|}=\frac{100}{|\adv^{(\pla,\nul)}(g)|}=o(1).\]

The main challenge is to bound the probability of success under $\pla$. We will aim prove that $g$ is, say, $(\Omega(1),0.1)$-anticoncentrated (\cref{def:anticoncentration}) under $\pla$, that is, $|g(G)|\geq \|g\|_{2,\pla}/10$ with probability at least $\Omega(1)$ for $G\sim \pla$. If this is true, then $B$ achieves the requisite advantage
\[\algadv^{(\pla,\nul)}[B] = 2\cdot (\Pr_{\pla}[B(G)=1]-\Pr_{\nul}[B(G)=1]) = \Omega(1)-o(1) = \Omega(1).\]
So it suffices to prove that $g(\pla)=TA_-(\pla)$ is anticoncentrated. Our key insight here is that $A_-$ is a \emph{noise-resistant} function under $\pla$ with respect to the noise operator $T$. Concretely, one can prove
\[\frac{\|TA_-\|_{2,\pla}}{\|A_-\|_{2,\pla^*}}\gtrsim n^{-O(d)}.\]
Since $\pla$ is a relatively simple distribution, one should expect such noise resistant functions to behave like low-degree functions do over the boolean hypercube, in particular, it should be anticoncentrated. We prove the following anticoncentration statement for $Tf$ such noise-resistant functions $f$.

\begin{lemma}[\cref{lemma:planted-clique-hyp}, simplified version]\label{cref:simplified-anticoncentration}
    Suppose $f\in L^2(\pla^*)$ satisfies $\|Tf\|_{2,\pla}\geq n^{-(\beta-\alpha)D}\cdot \|f\|_{2,\pla^*}$ for some $D=o(\log n)$. Suppose that $f$ is invariant under permutating the vertices $[n]$. Then, $Tf(\pla)$ is $(c^{D^2},0.1)$-anticoncentrated for some universal constant $c$.
\end{lemma}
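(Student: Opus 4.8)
The plan is to apply the Paley--Zygmund inequality (\cref{lem:paley-zygmund}) to the random variable $Tf(G)$ for $G \sim \pla$. To do this, I need a lower bound on $\E_\pla[(Tf)^2] = \|Tf\|_{2,\pla}^2$ and a matching upper bound on the fourth moment $\E_\pla[(Tf)^4] = \|Tf\|_{4,\pla}^4$; the ratio $\|Tf\|_{2,\pla}^4 / \|Tf\|_{4,\pla}^4$ then controls the anticoncentration constant. The hypothesis $\|Tf\|_{2,\pla} \geq n^{-(\beta-\alpha)D}\|f\|_{2,\pla^*}$ already gives the lower bound on the second moment, normalized so that $\|f\|_{2,\pla^*} = 1$. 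So the entire game is to show that the fourth moment of $Tf$ under $\pla$ is not much larger than $\|Tf\|_{2,\pla}^2$ squared — i.e., a hypercontractivity-type inequality $\|Tf\|_{4,\pla} \leq n^{O(D)} \cdot \|Tf\|_{2,\pla}$, or more precisely a bound of the form $\|Tf\|_{4,\pla}^4 \leq c^{-D^2} \|Tf\|_{2,\pla}^4$.

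The key structural idea is to relate norms of $Tf$ under $\pla$ to norms of $f$ under the simpler product-type distributions. Recall $\pla = G(n,1/2,n^{1/2-\alpha})$ is obtained by planting on a $\ber(k/n)$-random vertex set, and the Markov chain $M$ resamples each vertex's incident edges with probability $p$. I would first move to the ``lifted'' space where the clique indicator vector $x \in \{0,1\}^n$ and the underlying $G(n,1/2)$ graph are both explicit — here $\pla$ becomes (a pushforward of) a product distribution, and the $S_n$-symmetry of $f$ lets me invoke the $p$-biased hypercontractive inequality \cref{lem:p-biased} (or Bonami's lemma \cref{lem:bonami} for the graph coordinates). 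The eigenstructure of $T$ from \cref{fact:eigen} — eigenvalue $p^{|V(S)|}$ on $\chi_S$ — means that $Tf$ is supported, up to small error, on low-vertex-degree Fourier levels after accounting for the contraction; combined with the noise-resistance hypothesis (which forces $f$ itself to have most of its $\pla^*$-mass on bounded-complexity components, say $|V(S)| = O(D)$), I can argue $Tf$ behaves like a function of degree $O(D)$ in roughly $O(D^2)$ ``effective'' coordinates. Applying the symmetric hypercontractivity bound on that low-complexity part yields $\|Tf\|_{4,\pla} \leq 8^{O(D^2)} \|Tf\|_{2,\pla}$ — hence the ratio of squared second moment to fourth moment is at least $c^{D^2}$ — and Paley--Zygmund with $\theta$ a small constant (e.g. $\theta$ with $\sqrt\theta = 0.1$, so $b = 0.1$) gives $(c^{D^2} \cdot (1-0.1)^2, 0.1)$-anticoncentration, as claimed.

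I expect the main obstacle to be making the reduction to a product-space hypercontractive inequality fully rigorous. The distribution $\pla$ is \emph{not} a product distribution on the edge coordinates — the planting introduces correlations through the shared clique-membership vector — so one cannot directly apply \cref{lem:bonami}. The honest approach is conditioning on the clique set $S$ (or on $x$): conditioned on $|S| = s$, the planted graph is uniform on the edges outside $S$ and fixed on edges inside $S$, and $Tf$ restricted to this conditioning is a degree-$O(D)$ function on a product space, to which hypercontractivity applies with a constant depending only on $D$; then one averages over $s \sim \mathrm{Bin}(n,k/n)$, controlling the contribution of atypically large $s$ (which occur with probability $n^{-\omega(D)}$) separately. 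The bookkeeping of how the $T$-contraction ($\lambda_S = p^{|V(S)|}$), the noise-resistance lower bound, and the level-truncation error terms interact — ensuring the truncated tail contributes at most a $\frac12$-fraction of $\|Tf\|_{2,\pla}^2$ so the second moment is genuinely dominated by the low-complexity part — is the delicate part, and this is presumably why the paper states the exponent as $D^2$ rather than $D$: each of the $O(D)$ Fourier levels up to vertex-complexity $O(D)$ contributes a hypercontractive factor, multiplying to $2^{O(D^2)}$.
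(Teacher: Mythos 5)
Your high-level plan coincides with the paper's: lift to the space where the clique-indicator vector $x$ and the underlying \ER\ graph are explicit product coordinates, use the noise-resistance hypothesis together with a contraction property of the noise operator to reduce to a low-complexity part, apply Bonami's lemma (\cref{lem:bonami}) in the graph coordinates and the symmetric $p$-biased inequality (\cref{lem:p-biased}) in the $x$-coordinates, and finish with Paley--Zygmund (\cref{lem:paley-zygmund}). The gap is in the truncation step, which you try to run off \cref{fact:eigen}: that fact describes the eigenstructure of $T$ with respect to $\nul$, whereas both the hypothesis and the conclusion of the lemma are phrased in the planted norms $\|\cdot\|_{2,\pla^*}$ and $\|\cdot\|_{2,\pla}$, for which the characters $\chi_S$ are not orthogonal and $T$ has no useful spectral decomposition. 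This is exactly the obstacle the paper flags at the start of \cref{sec:anti}; its fix is the orthogonal decomposition of the lifted space into the subspaces $W_{A,B}$ (orthogonal with respect to the lifted planted measures $\Psi,\Psi'$ because these are product distributions), together with \cref{claim:contr}, which shows that $T^*$ maps $W'_{A,B}$ into $W_{A,B}$ with contraction factor $p^{(|V(A)|+|B|)/2}$ --- note this is weaker than the null eigenvalues of \cref{fact:eigen} by a change-of-measure factor $p^{-|V(A)|}$ coming from $\bigl\|\tfrac{d\ber(q)}{d\ber(pq)}\bigr\|_\infty$. Your proposal never supplies a planted-norm contraction of this kind, and without it the assertion that ``$Tf$ behaves like a function of degree $O(D)$'' (and the stronger conditional claim that $Tf$ restricted to a fixed clique set \emph{is} a degree-$O(D)$ function on a product space) has no justification: $Tf$ is only close to a low-complexity function after an explicit truncation whose error must be controlled in the planted norm.

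The quantitative handling of that truncation tail is also too weak as stated. You ask only that the tail carry at most a $\tfrac12$-fraction of $\|Tf\|_{2,\pla}^2$, but the anticoncentration probability you must protect is $c^{D^2}$, which is sub-constant when $D\to\infty$; a tail holding a constant fraction of the second moment could exceed the threshold with constant probability and wipe out that event. The paper derives from the hypothesis the much stronger bound $\|T^*g_-\|_{2,\Psi}\le p^{D}\cdot\|T^*g\|_{2,\Psi}$ for the high-complexity part $g_-$, so Markov gives exceedance probability $O(p^{2D})$, negligible compared with $c^{D^2}$ precisely because $D=o(\log p^{-1})$. Finally, a minor point: the exponent $D^2$ arises because the retained part is supported on $O(D)$ vertices and hence has degree $O(D^2)$ in the edge variables, so a single application of Bonami costs $\exp(O(D^2))$ --- not because $O(D)$ Fourier levels each contribute a hypercontractive factor.
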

The proof of of the above lemma is quite technical so we will avoid explaining it here, pointing the reader to \cref{sec:anti} for the proof. We will invoke the lemma with $f=A_-$. Note that we can achieve permutation-invariance by simply randomly permuting the inputs to $A_-$. As a result, we obtain that $TA_-$ is indeed $(\Omega(1), 0.1)$-anticoncentrated, completing the argument that $B$ achieves advantage $\Omega(1)$.

\begin{remark}
    In this overview we have not shown how to obtain \cref{thm:perturbation}, our result for perturbations of $\pla$. Our argument above generalizes straightforwardly to this setting, and we directly prove a version of the reduction for perturbations of $\pla$ in \cref{sec:reduction}.
\end{remark}

\subsection{Computing a Hardcore Distribution}\label{subsec:moments}

The goal of this part is to discuss the main ideas behind \cref{thm:intro-moment-match}, which states that one can slightly perturb the standard Planted Clique distribution to make it arbitrarily hard to distinguish from an \ER\ graph with respect to the set of low-degree polynomials $\F_{\leq d}$. Concretely, let $\alpha>0$, and define
\[\pla = G(n,1/2,n^{1/2-\alpha}),\quad \nul = G(n,1/2),\quad \Omega=\{\pm 1\}^{\binom{[n]}{2}}.\]
We will show that
\begin{equation}\label{eqn:ld-advantage-zero}
 \adv^{(\pla',\nul)}[\F_{\leq d}] = 0   
\end{equation}
for some distribution $\pla'$ satisfying $\|\frac{d\pla'}{d\pla}\|_\infty\leq 1+o(1)$.

We will construct $\pla'$ by computing \emph{acceptance probabilities} $p(G)\in [0,1]$ for each graph $G$. Let $G\sim \pla$, and sample $a\sim \ber(p(G))$. We set $\pla'$ to be the distribution on $G$ conditioned on $a=1$. That is, the density $d\pla'$ will be given by
\[d\pla'(G) = \frac{p(G)\cdot d\pla(G)}{\E_\pla[p]}.\]
Notice that for any function $f:\Omega\to \R$,
\[\E_{\pla'}[f] = \frac{\langle f, p\rangle_{\pla}}{\E_\pla[p]}.\]
Recall (\cref{claim:ldlr-expression}) that one can compute the low-degree advantage $\adv^{(\pla',\nul)}[\F_{\leq d}]$ by
\begin{equation}\label{eqn:ldlr-term}
   \adv^{(\pla',\nul)}[\F_{\leq d}] =\sqrt{\sum_{0<|S|\leq d}\E_{\pla'}[\chi_S]^2}=\frac{1}{\E_\pla[p]}\sqrt{\sum_{0<|S|\leq d}\langle \chi_S,p\rangle_\pla^2}. 
\end{equation}
We would like $p$ to satisfy exactly the following two conditions:
\begin{enumerate}
    \item\label[cond]{item:cond2} The acceptance probability over $\pla$ is close to $1$. In particular, $\E_\pla[p]\geq 1-o(1)$. This would imply that $\|\frac{d\pla'}{d\pla}\|_\infty = \max_{G}\frac{p(G)}{\E_\pla[p]}\leq \frac{1}{\E_\pla[p]}=1+o(1)$.
    \item\label[cond]{item:cond1} The low degree moments of $\pla'$ are equal to those of $\nul$, in the sense that $\langle \chi_S, p\rangle_{\pla}=0$ for all $0<|S|\leq d$. This would imply that $\adv^{(\pla',\nul)}[\F_{\leq d}]=0$ by \cref{eqn:ldlr-term}.
\end{enumerate}

Notice that both of these conditions are linear constraints on the vector $p:\Omega\to [0,1]$. Therefore, we can write our objective as the following linear program:

\begin{equation*}
    \text{LP} \equiv \max\ \E_{\pla}[p]\quad \text{s.t. }
    \begin{cases}
        \langle p, \chi_S\rangle_\pla = 0\quad \text{ for all }S\subseteq \Omega, 0<|V(S)|\leq d,\\
        p:\Omega\to [0,1].
    \end{cases}
\end{equation*}

Our goal is to show that the optimal value of this LP is close to $1$. By computing the dual with respect to the first set of constraints, one can reduce this to proving an inequality for low degree functions over $\pla$. In particular, one can show
\begin{equation}\label{eqn:lp-dual}
  \text{LP} = \inf_{\alpha}\E_{G\sim \pla}[\max(0, 1+\alpha(G))],  
\end{equation}
where the infimum is taken over all low degree functions $\alpha$ with zero constant coefficient, that is, $\alpha = \sum_{0<|S|\leq d}\hat{\alpha}_S\cdot \chi_S$ for some Fourier coefficients $\{\hat{\alpha}_S\}_{0<|S|\leq d}$. Fix an arbitrary function $\alpha$ of this form. If $\max(0,1+\alpha)$ was a low-degree polynomial, one would easily be able to use the fact that $\pla$ and $\nul$ are mildly indistinguishable with low-degree polynomials to bound
\[\E_{\pla}[\max(0, 1+\alpha)]\geq \E_{\nul}[\max(0, 1+\alpha)]-o(1)\geq \E_{\nul}[1+\alpha]-o(1)=1-o(1).\]
Unfortunately $\max(0,1+\alpha)$ is clearly not a low-degree polynomial. Still, it is a ``simple'' enough function of the low-degree polynomial $1+\alpha$ that we are able to prove exactly the above bound.

With this, we have shown that $\text{LP}\geq 1-o(1)$, implying the existence of a $p$ satisfying \cref{item:cond1,item:cond2}. We have not yet commented on how to efficiently compute $p$. As written, the linear program has exponentially many variables, but nevertheless it is possible to find an approximately optimal solution that achieves
\[\adv^{(\pla',\nul)}[\F_{\leq d}]\leq \delta\]
in time $\poly(n/\delta)$. The main idea is to find a smooth convex relaxation to the dual convex program (\cref{eqn:lp-dual}), and optimize over $\alpha$ using stochastic gradient descent. Once we have found an approximate maximizer, we can push it back through the duality argument to find an approximate minimizer $p$ of the LP. We leave the details of this construction to \cref{sec:hardcore}.

\section{Hardness Amplification from Subspace Hardness}\label{sec:reduction}
The purpose of this section is to abstract the key ideas of our reduction and present a proof of a general hardness amplification result. Consider the problem of efficiently distinguishing two sequences of distributions $\pla=\pla_n$ and $\nul=\nul_n$ over $\Omega=\Omega_n$. Let $M=M_n$ be a reversible Markov chain with stationary distribution $\nul$, and let $T:L^2(\nul)\to L^2(\nul)$ be its associated noise operator. Let $\pla'$ be a distribution that is close to $\pla$ in the sense that $\|\frac{d\pla'}{d\pla}\|_\infty$ is bounded, and set $\pla^*=M(\pla')$.

We will provide conditions on $\pla$, $\nul$, and $T$ under which weak hardness for distinguishing $\pla$ and $\nul$ implies strong hardness for distinguishing $\pla^*$ and $\nul$. The performance of our reduction depends primarily on a parameter $\eps>0$. A central object in our statement is a subspace $V\subseteq L^2(\nul)$ containing the top $\geq \eps$-eigenspace of $T$, that is, $\lambda_{>\eps}(T)\subseteq V$. $V$ is often interpretable as a space consisting of the ``simplest'' functions in $L^2(\nul)$.

\begin{example}[Low-degree functions]\label{example:low-degree}
    When $T=T_\rho$ is the standard noise operator on the boolean hypercube, $V=\F_{\leq d}$ can be taken as the set of degree $\leq d$ functions for $d= \log_\rho\eps$.
    
    When $T$ is the vertex-resampling noise operator with noise rate $1-p$, $V=\F_{\leq d}$ can be taken as the set of degree $\leq d$ functions for $d=\binom{\log_p\eps}{2}$ (by \cref{fact:eigen}).
\end{example}

We will require that $\eps$ is large enough so one can find such a $V$ that is \emph{computationally tractable}. Concretely, we impose the following set of conditions.

\begin{definition}[Tractability of function spaces]\label{def:tractable-subspace}
    Let $\nul=\nul_n$ be a sequence of distributions over $\Omega=\Omega_n$. A sequence of subspaces $V=V_n\subseteq L^2(\nul)$ containing the constant function $\mathbf{1}$ is $q(n)$-tractable if it has an orthonormal basis $\{\mathbf{1},f_1,\ldots, f_\ell\}$ such that
    \begin{enumerate}
        \item $\ell\leq q(n)$.
        \item For all $i$, one can evaluate $f_i$ in time $q(n)$.
        \item Functions $f$ in $V$ are $q(n)$-smooth, that is, $\|f\|_\infty\leq q(n)\cdot \|f\|_{2,\nul}$.
    \end{enumerate}
\end{definition}

We note that both examples in \cref{example:low-degree} are $n^{O(d)}$-tractable. We will also require a technical condition stating that functions $f\in L^2(\pla)$ that are noise-resistant should be well-behaved as random variables over $\pla$, in the sense that they are anticoncentrated (\cref{def:anticoncentration}). Finally, we assume that the problem of distinguishing $\pla$ from $\nul$ is mildly hard. Informally, the conclusion of our theorem states that if the above conditions hold, then for any efficiently computable test $A$,
\begin{equation}\label{eqn:informal-reduction}
  \algadv^{(\pla^*,\nul)}(A) < \adv^{(\pla^*,\nul)}[V] + \delta  
\end{equation}
for some small $\delta$.


\begin{theorem}\label{thm:general-1}
    Let $\nul=\nul_n$ and $\pla=\pla_n$ be two sequences of distributions over a finite sample space $\Omega=\Omega_n$. Let $M=M_n$ be a reversible Markov chain with stationary measure $\nul$, and let $T:L^2(\nul)\to L^2(\nul)$ be its associated noise operator. Let $q=q(n)\geq 1,\eps=\eps(n)\in [0,1]$, and $\gamma=\gamma(n)\in [0,1]$ be parameters, and let $V=V_n$ be a subspace of $L^2(\nul)$ containing the top eigenspace $\lambda_{>\eps}(T)$. Assume the following ``niceness'' conditions.
    \begin{enumerate}
        \item In time $q(n)$, one can sample $y\sim\nul$ and $y\sim M(x)$ for any $x\in \Omega$.\hfill(sampleability)
        \item $V$ is $q$-tractable.\hfill(tractability of top eigenspace)
        \item \label{cond:anti}If $\|Tf\|_{2,\pla}\geq \eps/q\cdot \|f\|_{2,M(\pla)}$, then $Tf(\pla)$ is $(\gamma,0.1)$-anticoncentrated.\hfill(anticoncentration)
    \end{enumerate}

    Let $\delta = 400 \gamma^{-1} \eps$. Let $\pla'=\pla'_n$ be another sequence of distributions such that $\|\frac{d\pla'}{d\pla}\|_\infty < \infty$, and set $\pla^*=M(\pla')$. Suppose one can compute a randomized mapping $A$ from $\Omega$ to $\{\pm 1\}$ in time $t(n)$ satisfying 
    \begin{equation}\label{eqn:reduction-assumption}
        \algadv^{(\pla^*,\nul)}(A)\geq \var_{\nul}(\Pi_V f)^{1/2}\cdot \adv^{(\pla^*,\nul)}[V] + \delta\cdot \left\|\frac{d\pla'}{d\pla}\right\|_\infty
    \end{equation}
    for infinitely many $n$, where $f(G) = \mathbb{E}_{A}[A(G)]$ and $\var_{\nul}(\Pi_V f) = \|\Pi_V f - \E_{\nul}[\Pi_V f]\|_{2,\nul}^2$. Then, there is a randomized mapping $B$ from $\Omega$ to $\{\pm 1\}$ with runtime $t(n)\cdot \poly(q,\eps^{-1},\gamma^{-1})$ such that for infinitely many $n$,
    \[\algadv^{(\pla,\nul)}(B)\geq \gamma.\]
\end{theorem}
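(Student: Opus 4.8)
\textbf{Proof plan for Theorem~\ref{thm:general-1}.}
The plan is to follow the two-step template from the overview---amplification followed by rounding---but now track the perturbation factor $\|\frac{d\pla'}{d\pla}\|_\infty$ and work with the abstract subspace $V$ in place of $\F_{\leq d}$. Write $R := \|\frac{d\pla'}{d\pla}\|_\infty$. Starting from a test $A$ satisfying \eqref{eqn:reduction-assumption}, I may assume (after the usual harmless reductions) that $A$ is deterministic with $\E_\nul[A]=0$ and $\|A\|_{2,\nul}=1$. Decompose $A = A_+ + A_-$ where $A_+$ is the orthogonal projection of $A$ onto $V$ and $A_- \perp V$. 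Since $A_+ \in V$ we have $\E_{\pla^*}[A_+] \le \adv^{(\pla^*,\nul)}[V]\cdot\|A_+\|_{2,\nul} \le \adv^{(\pla^*,\nul)}[V]$, so the triangle-inequality decomposition $\E_{\pla^*}[A] = \E_{\pla^*}[A_+] + \E_{\pla^*}[A_-]$ together with the hypothesis gives
\[
\adv^{(\pla^*,\nul)}(A_-) = \frac{\E_{\pla^*}[A_-]}{\|A_-\|_{2,\nul}} \ge \E_{\pla^*}[A_-] \ge \delta \cdot R.
\]
This is the payoff of the ``extra $\delta R$'' in the assumption: the high-degree/high-frequency part $A_-$ by itself already distinguishes $\pla^*$ from $\nul$ with advantage $\ge \delta R$.

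Next, amplify by applying $T$. Because $V \supseteq \lambda_{\ge \eps}(T)$ and $A_- \perp V$, the function $A_-$ lies in $\lambda_{<\eps}(T)$, so $T$ acts on it as an $\eps$-contraction: $\|TA_-\|_{2,\nul} \le \eps\,\|A_-\|_{2,\nul} \le \eps$. Now compute, using $M(\pla') = \pla^*$ (so $\E_{\pla'}[Tf] = \E_{\pla^*}[f]$) and the fact that $\pla'$ is a bounded perturbation of $\pla$ so $\E_{\pla}[Tf] \ge R^{-1}\E_{\pla'}[Tf]$ when $Tf\ge 0$ --- actually one must be slightly careful with signs here, and it is cleaner to instead pass through $\pla'$ directly and then relate $\adv^{(\pla',\nul)}$ to $\adv^{(\pla,\nul)}$ at the very end. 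In either order, the key inequality is
\[
\adv^{(\pla',\nul)}(TA_-) = \frac{\E_{\pla'}[TA_-]}{\|TA_-\|_{2,\nul}} = \frac{\E_{\pla^*}[A_-]}{\|TA_-\|_{2,\nul}} \ge \frac{\delta \cdot R}{\eps},
\]
and since $\delta = 400\gamma^{-1}\eps$ this is $\ge 400\gamma^{-1} R \ge 1$ --- a function with constant (indeed large) real-valued advantage against $(\pla',\nul)$, hence against $(\pla,\nul)$ after accounting for the factor $R$ in the norm comparison. Set $g := TA_-$; note $g$ is not exactly computable but can be approximated to inverse-polynomial accuracy by $\poly(q,\eps^{-1})$ random queries to $A$ (using sampleability of $M$ and $q$-tractability to compute the projection $A_-$ via the orthonormal basis), which only perturbs the advantage negligibly.

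For the rounding step, define $B(G) = 1$ if $|g(G)| \ge \|g\|_{2,\pla}/10$ and $B(G) = -1$ otherwise. Under $\nul$, Markov's inequality gives $\Pr_\nul[|g|\ge \|g\|_{2,\pla}/10] \le 100\,\|g\|_{2,\nul}^2/\|g\|_{2,\pla}^2 \le 100/\adv^{(\pla,\nul)}(g)^2 = o(1)$, using $\|g\|_{2,\pla} \ge |\E_\pla[g]|$ and the large advantage just established. Under $\pla$, I need anticoncentration of $g = Tf$ with $f = A_-$. To invoke Condition~\ref{cond:anti}, I must verify $\|Tf\|_{2,\pla} \ge (\eps/q)\|f\|_{2,M(\pla)}$; this ``noise-resistance'' bound should follow from the fact that $Tf$ has large advantage under $\pla$ (so $\|Tf\|_{2,\pla}$ cannot be too small relative to $\E_\pla[Tf] = \E_{M(\pla)}[f]$), combined with $q$-smoothness of $V$ to control $\|f\|_{2,M(\pla)}$ versus $\|f\|_{2,\nul}$ --- and possibly requiring a permutation-invariance symmetrization of $A$ so that the anticoncentration hypothesis applies, exactly as in \cref{cref:simplified-anticoncentration}. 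Granting the hypothesis, $g(\pla)$ is $(\gamma,0.1)$-anticoncentrated, so $\Pr_\pla[B=1] \ge \gamma$, and $\algadv^{(\pla,\nul)}(B) = 2(\Pr_\pla[B=1]-\Pr_\nul[B=1]) \ge 2(\gamma - o(1)) \ge \gamma$ for large $n$.

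\textbf{Main obstacle.} The delicate point is the chain of norm comparisons around the perturbation $\pla'$ and around the Markov operator $T$: establishing the noise-resistance inequality $\|Tf\|_{2,\pla} \ge (\eps/q)\|f\|_{2,M(\pla)}$ needed to trigger Condition~\ref{cond:anti}, and correctly propagating the factor $R = \|\frac{d\pla'}{d\pla}\|_\infty$ through both the amplification bound and the relation $\|f\|_{2,M(\pla)}$ vs.\ $\|f\|_{2,\nul}$ (since $M(\pla)$ is itself a bounded perturbation of $M(\nul)=\nul$ only if one is careful). These are where the constant $400$ and the precise form $\delta = 400\gamma^{-1}\eps$ get pinned down, and where the $q$-smoothness hypothesis (Definition~\ref{def:tractable-subspace}, item 3) does its work; the rest is bookkeeping along the lines already rehearsed in \cref{sec:overview}.
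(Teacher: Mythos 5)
Your outline follows the same route as the paper's proof (decompose $f=f_++f_-$ with $f_+\in V$, extract $\E_{\pla^*}[f_-]\ge\delta\cdot\|\tfrac{d\pla'}{d\pla}\|_\infty$ from \cref{eqn:reduction-assumption}, contract with $T$, approximate $Tf_-$ by sampling as in \cref{claim:approx-high-degree-part}, threshold, and invoke \cref{cond:anti}), but the steps you defer as the ``main obstacle'' are exactly the substance of the argument, and two of them do not go through as sketched. First, the perturbation bookkeeping: for the \emph{signed} function $g=TA_-$, the bound $\|\tfrac{d\pla'}{d\pla}\|_\infty\le R$ says nothing directly about $\E_\pla[g]$ versus $\E_{\pla'}[g]$, so your plan to ``relate $\adv^{(\pla',\nul)}$ to $\adv^{(\pla,\nul)}$ at the very end'' — and with it your null-side bound $100/\adv^{(\pla,\nul)}(g)^2=o(1)$ — rests on a quantity you never lower bound. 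The paper avoids lower bounding any mean under $\pla$: it bounds the norm via the $L^1$ chain $\|Tf_-\|_{2,\pla}\ge\|Tf_-\|_{1,\pla}\ge R^{-1}\|Tf_-\|_{1,\pla'}\ge R^{-1}\E_{\pla^*}[f_-]\ge\delta$, where the density bound is applied only to a nonnegative integrand. This one inequality does all the work: combined with $\|f_-\|_{2,M(\pla)}\le\|f\|_\infty+\|f_+\|_\infty\le 1+q\le 2q$ (from $q$-smoothness), it gives $\|Tf_-\|_{2,\pla}\ge\delta\ge2\eps\ge(\eps/q)\|f_-\|_{2,M(\pla)}$, which is precisely the precondition of \cref{cond:anti} that you left unverified.

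Second, your test thresholds $|g|$ at $\|g\|_{2,\pla}/10$, a quantity the reduction cannot evaluate (the hypotheses do not even let you sample from $\pla$). The paper instead thresholds the computable approximation $C$ at the explicit value $\delta/20$, which works because $\|Tf_-\|_{2,\pla}\ge\delta$ ensures anticoncentration places mass $\ge\gamma$ on $\{|Tf_-|\ge\delta/10\}$. Relatedly, all error terms must be quantified against $\gamma$ rather than $o(1)$, since $\gamma=\gamma(n)$ may tend to $0$: under $\nul$ one has $\E|C|\le\|Tf_-\|_{2,\nul}+\eps\le2\eps$, so $\Pr_\nul[|C|\ge\delta/20]\le40\eps/\delta\le\gamma/10$, and the approximation error under $\pla$ is similarly at most $\gamma/10$ — this is exactly where $\delta=400\gamma^{-1}\eps$ is pinned down, not in any comparison of $\delta R/\eps$ with $1$. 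Two smaller points: the reductions ``$A$ deterministic, $\E_\nul[A]=0$, $\|A\|_{2,\nul}=1$'' are not available in general and are not needed — the paper works with $f(x)=\E_A[A(x)]$ and lets the null expectations cancel (using only $\mathbf 1\in V$ and $\|f\|_\infty\le1$); and no permutation symmetrization is required at this level of generality, since \cref{cond:anti} is assumed abstractly (symmetry only enters when verifying it for Planted Clique).
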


\subsection{Proof of \cref{thm:general-1}}
     Assume without loss of generality that $\E_{\pla^*}[A]\geq \E_\nul[A]$, so $\algadv^{(\pla^*,\nul)}(A) = \E_{\pla^*}[A]-\E_\nul[A]$. We will define $f(x)=\E_{A}[A(x)]$ to be the average output of the algorithm under input $x$.
 
    Let us recall the rough plan of our reduction, as explained in \cref{sec:overview-reduction}. We will decompose $f = f_{+} + f_{-}$ for some $f_{+}\in V$ and $f_{-} \perp V$. We will obtain the test $B$ by thresholding $|T f_{-}|$.
    
    Of course, we aren't necessarily able to compute $Tf_{-}$ efficiently -- we are only guaranteed that $f$ is efficiently computable. We will instead compute an approximation that relies on the fact that $V$ is tractable.

    \begin{claim}\label{claim:approx-high-degree-part}
        There is a randomized mapping $C$ from $\Omega$ to $\R$ computable in time $t(n)\cdot \poly(q,\eps^{-1},\gamma^{-1})$ such that for any $x\in \Omega$,
        \begin{equation}\label{eqn:A-bound}
            \E_{C}[|C(x) - Tf_{-}(x)|] \leq \eps.
        \end{equation}
    \end{claim}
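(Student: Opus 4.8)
The plan is to construct $C$ by combining two ideas: approximating the noise operator $T$ by random sampling of $M$, and approximating the projection onto $V^\perp$ by subtracting an explicit estimate of the projection onto $V$. Recall we want $C(x) \approx Tf_-(x)$ where $f_- = f - f_+$ and $f_+$ is the orthogonal projection of $f$ onto $V$. Write $f_+ = \sum_{i \in [\ell]} \langle f, f_i\rangle_\nul f_i$ using the orthonormal basis $\{\mathbf{1}, f_1, \ldots, f_\ell\}$ guaranteed by $q$-tractability (note $\langle f, \mathbf 1\rangle_\nul = \E_\nul[f]$, which should also be included, though under the simplifying assumption $\E_\nul[A]=0$ it vanishes). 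Then $Tf_-(x) = Tf(x) - Tf_+(x) = \E_{y \sim M(x)}[f(y)] - \sum_i \langle f, f_i\rangle_\nul \cdot Tf_i(x)$.

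The first step is to estimate the inner products $\langle f, f_i\rangle_\nul = \E_{y \sim \nul}[f(y) f_i(y)]$. Draw $m_1 = \poly(q, \eps^{-1}, \gamma^{-1})$ i.i.d. samples $y^{(1)}, \ldots, y^{(m_1)} \sim \nul$ (possible in time $q(n)$ each by sampleability), evaluate $f$ at each by running the randomized algorithm $A$ (time $t(n)$), and evaluate each $f_i$ (time $q(n)$ each by tractability), forming the empirical average $\hat c_i = \frac{1}{m_1}\sum_j f(y^{(j)}) f_i(y^{(j)})$. Since $\|f\|_\infty \le 1$ and $\|f_i\|_\infty \le q$ by $q$-smoothness, each summand is bounded by $q$, so a Chernoff/Hoeffding bound gives $|\hat c_i - \langle f, f_i\rangle_\nul| \le \eps/(10\ell q^2)$ with failure probability $\le 1/(100\ell)$ for $m_1 = \poly(q,\ell,\eps^{-1})$; union bound over $i$. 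The second step estimates $Tf_i(x) = \E_{y \sim M(x)}[f_i(y)]$ and $Tf(x) = \E_{y \sim M(x)}[f(y)]$ by drawing $m_2$ fresh samples $y \sim M(x)$ and averaging; again $q$-smoothness controls the variance, giving additive error $\eps/(10\ell q)$ per term with high probability for $m_2 = \poly(q, \ell, \eps^{-1})$. Set $C(x) = \widehat{Tf}(x) - \sum_{i\in[\ell]} \hat c_i \cdot \widehat{Tf_i}(x)$ (plus the $\mathbf 1$ term if not assuming $\E_\nul[f]=0$).

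To bound the error, decompose $C(x) - Tf_-(x)$ into the sampling error in $\widehat{Tf}$, the errors $\hat c_i - \langle f,f_i\rangle_\nul$ multiplied by $\widehat{Tf_i}(x)$ (which is $\le \|f_i\|_\infty \le q$ in magnitude), and the errors in $\widehat{Tf_i}$ multiplied by the coefficients $|\hat c_i| \le |\langle f,f_i\rangle_\nul| + \eps \le \|f\|_2\|f_i\|_2 + \eps \le 1 + \eps$ (using Cauchy-Schwarz and orthonormality). Summing: the total is at most $\eps/10 + \ell \cdot \frac{\eps}{10\ell q^2}\cdot q + \ell \cdot (1+\eps)\cdot \frac{\eps}{10\ell q} \le \eps$ for large $n$. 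Taking $m = \max(m_1, m_2) = \poly(q, \ell, \eps^{-1}, \gamma^{-1})$ and recalling $\ell \le q$, the total runtime is $m \cdot (t(n) + q(n)\cdot\poly) = t(n)\cdot \poly(q, \eps^{-1}, \gamma^{-1})$ as claimed. Note the bound \eqref{eqn:A-bound} is on $\E_C[|C(x) - Tf_-(x)|]$, so it suffices to control the expected error; the high-probability analysis above yields this since on the rare failure event the error is still bounded by $O(\ell q \cdot \|f\|_\infty) = \poly(q)$, contributing at most $\poly(q)/100\ell \cdot \text{(small)}$, which can be absorbed by taking failure probability $\le \eps/\poly(q)$ at the cost of a larger $\poly$ factor in $m$.

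The main obstacle I anticipate is bookkeeping the interplay between the two independent sources of randomness (the $\nul$-samples used to estimate coefficients, shared across all inputs $x$, versus the $M(x)$-samples drawn fresh per input) and ensuring the $q$-smoothness hypothesis is invoked correctly to control variances in both estimation steps — in particular that the coefficient estimates $\hat c_i$ need error scaling like $1/(q^2\ell)$ rather than $1/q$ because they get amplified by the $\|\widehat{Tf_i}\|_\infty \le q$ factor. There is no deep difficulty, only the need to choose the polynomial sample complexities in the right order so the errors telescope to exactly $\eps$.
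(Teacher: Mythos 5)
Your proposal is correct and follows essentially the same route as the paper's proof: estimate the coefficients of $f_+$ empirically from $\nul$-samples using the tractable orthonormal basis, estimate the action of $T$ by Monte Carlo sampling from $M(x)$, and combine via the triangle inequality with $q$-smoothness controlling the error terms. The only cosmetic differences are that the paper folds everything into a single function $g^{\mathbf z}$ averaged over samples of $M(x)$ and bounds the expected error directly via variance/Jensen, whereas you estimate $Tf$ and each $Tf_i$ separately and use Hoeffding plus a failure-event argument — both are fine.
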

    We prove \cref{claim:approx-high-degree-part} in \cref{sec:a2}. Let us continue by defining our test $B$.
    \paragraph{Definition of our test $B$ for $\pla$ and $\nul$.}
    \begin{enumerate}
        \item On input $x$, compute $C(x)$.
        \item If $|C(x)|\geq\delta/20$, output $+1$. Otherwise output $-1$.
    \end{enumerate}

    In the rest of the proof, we will analyze the probability that this algorithm succeeds under $\nul$ and $\pla$ and conclude that it achieves distinguishing advantage $\Omega(\gamma)$.

    \paragraph{Probability of success under $\nul$.} 
    We will show that $|C(x)|$ is small with high probability by bounding its average and applying Markov's inequality.
    \begin{align*}
        \E_{x\sim \nul, C}[|C(x)|]&\leq \|Tf_{-}\|_{1,\nul} + \eps\tag{triangle inequality, \cref{eqn:A-bound}}\\
        &\leq \|Tf_{-}\|_{2,\nul} + \eps\tag{Jensen's inequality on $t\to t^2$}\\
        &\leq \eps \cdot\|f_{-}\|_{2,\nul} + \eps\tag{$V^\perp$ is contained in the $<\eps$-eigenspace of $T$}\\
        &\leq 2\eps.\tag{$\|f_{-}\|_{2,\nul}\leq \|f\|_{2,\nul}\leq \|f\|_\infty\leq 1$}
    \end{align*}
    For the last inequality, we used that the outputs of $f$ are bounded in $[-1,1]$ because $A$ has range $\{\pm 1\}$. By Markov's inequality, the probability that $|C(x)|\geq \delta/20$ is at most $400\eps/\delta$. Plugging in the definition of $\delta$, this is at most $\gamma/10$. Therefore, $B$ outputs $+1$ with probability at most $\gamma/10$.

    \paragraph{Probability of success under $\pla$.}

    We start by showing that the anticoncentration condition (condition \ref{cond:anti}) can be applied to $Tf_{-}$, i.e.
    $\|Tf_{-}\|_{2,\pla}\geq \eps/q\cdot \|f_{-}\|_{2,M(\pla)}$. Let us bound
    \begin{align*}
        \|Tf_{-}\|_{2,\pla}&\geq \|Tf_{-}\|_{1,\pla}\tag{Jensen's inequality on $t\to t^2$}\\
        &\geq \left\|\frac{d\pla'}{d\pla}\right\|_\infty^{-1}\cdot \|Tf_{-}\|_{1,\pla'}\tag{$\|\frac{dP}{dQ}\|_\infty = \sup_{f}\frac{\|f\|_{1,P}}{\|f\|_{1,Q}}$}\\
        &\geq \left\|\frac{d\pla'}{d\pla}\right\|_\infty^{-1}\cdot \E_{\pla'}[Tf_{-}]\\
        &=\left\|\frac{d\pla'}{d\pla}\right\|_\infty^{-1}\cdot \E_{\pla^*}[f_{-}].
    \end{align*}
    \cref{eqn:reduction-assumption} will allow us to further lower bound this quantity. Write $\delta' = \delta\cdot\|\frac{d\pla'}{d\pla}\|_\infty$. For infinitely many $n$, we have
    \begin{align*}
        \E_{\pla^*}[f_{-}] &= \E_{\pla^*}[f] - \E_{\pla^*}[f_{+}]\\
        &= \E_{\pla^*}[A] - \E_{\pla^*}[f_{+}]\tag{definition of $f$}\\
        &\geq \E_{\nul}[A] + \var_\nul(f_+)^{1/2}\cdot \adv^{(\pla^*,\nul)}[V] + \delta' - \E_{\pla^*}[f_{+}]\tag{\cref{eqn:reduction-assumption}}\\
        &\geq \E_{\nul}[f] + \var_\nul(f_+)^{1/2}\cdot \adv^{(\pla^*,\nul)}[V] + \delta' - \E_{\nul}[f_{+}] - \adv^{(\pla^*,\nul)}[V]\cdot \var_\nul(f_+)^{1/2}\tag{$f_{+}-\E_{\nul}[f_{+}]\in V$}\\
        &= \E_{\nul}[f] + \delta' - \E_{\nul}[f_{+}]\\
        &= \E_{\nul}[f_{-}]+\delta'=\delta'.\tag{$f_{-}\perp V$ and $\mathbf{1}\in V$}
    \end{align*}
    To elaborate on the fourth step, we applied the definition of $\adv^{(\pla^*,\nul)}[V]$ on the function $g=f_+ - \E_{\nul}[f_+]$, which also belongs to $V$ because $V$ contains the constant function $\mathbf{1}$. The inequality then follows by the fact $\|g\|_{2,\nul}^2 =  \|f_{+} - \mathbb{E}_\nul[f_{+}]\|_{2,\nul}^2= \var_\nul(f_{+})$.
    
    Combining the two inequalities proved above, we conclude that $\|Tf_{-}\|_{2,\pla}\geq  \delta\geq 2\eps$. On the other hand,
    \begin{align*}
        \|f_{-}\|_{2,M(\pla)}&\leq \|f\|_{2,M(\pla)} + \|f_{+}\|_{2,M(\pla)}\tag{triangle inequality}\\
        &\leq \|f\|_{\infty} + \|f_{+}\|_{\infty}\\
        &\leq q\|f\|_{2,\nul} + q\|f_+\|_{2,\nul}\tag{smoothness of $V$ (\cref{def:tractable-subspace})}\\
        &\leq 2q.\tag{$\|f_+\|_{2,\nul}\leq \|f\|_{2,\nul}\leq 1$}
    \end{align*}

    Therefore we have $\|Tf_{-}\|_{2,\pla}\geq \eps/q\cdot \|f_{-}\|_{2,M(\pla)}$, and we can conclude that $Tf_{-}(\pla)$ is $(\gamma, 0.1)$-anticoncentrated. Finally, we can bound the probability that $B$ outputs +1. By a union bound, we write
    \[\Pr_{\pla,C}[|C(x)|\geq \delta/20]\geq \Pr_{x\sim \pla}[|Tf_{-}(x)|\geq \delta/10] - \Pr_{x\sim \pla, C}[|C(x)-Tf_{-}(x)|\geq \delta/20].\]
    By $(\gamma,0.1)$-anticoncentration, the first term is at least $\gamma$. Using Markov's inequality along with \cref{eqn:A-bound}, the second term is at most $20\eps/\delta\leq \gamma/10$, so the probability above can be bounded by $\gamma-\gamma/10=9\gamma/10$.

    Therefore, $B$ outputs $+1$ with probability at least $9\gamma/10$ under $\pla$. Combining our bounds on the success probability in both cases, we get that $B$ achieves the required advantage
    \[\algadv^{(\pla,\nul)}(B) = 2\cdot (\Pr_{x\sim \pla}[B(x)=1] - \Pr_{x\sim \nul}[B(x)=-1])\geq 2 \cdot (9\gamma/10 - \gamma/10) \geq\gamma\]
    for infinitely many $n$.

\section{Hardcore Distributions against a Subspace of Distinguishers}\label{sec:hardcore}
In this section, we will state and prove a sort of hard-core Lemma for distinguishing two distributions $\pla$ and $\nul$ over a sample space $\Omega$ using functions belonging to a computationally tractable subspace $V\subseteq L^2(\nul)$. Assume that $V$ contains the constant function $\mathbf{1}$. Recall our measure of performance of a $V$ at the task:
\[\adv^{(\pla,\nul)}[V] = \max_{f\in V}\adv^{(\pla,\nul)}(f) = \max_{f\in V}\frac{\E_\pla[f]-\E_\nul[f]}{\|f\|_{2,\nul}}.\]

Our result states that if $V$ is nice enough and $\pla$ satisfies a mild hardness condition against $V^{\odot 4}$, then one can ``perturb'' $\pla$ very slightly to find another distribution $\pla^*$ that satisfies arbitrarily strong hardness against $V$.

\begin{theorem}\label{thm:general-2}
    Let $\pla=\pla_n$ and $\nul=\nul_n$ be distributions over a finite set $\Omega=\Omega_n$, and let $V=V_n\subseteq L^2(\nul)$ be a subspace of functions containing the constant function $\mathbf{1}$. Let $q=q(n)\geq 1$ and $c=c(n)\geq 1$ be parameters. Assume the following conditions.
    \begin{enumerate}
        \item One can sample from $\pla$ in time $q(n)$.\hfill (efficient sampleability)
        \item $V$ is $q$-tractable.\hfill (tractability of $V$)
        \item Any $f\in V$ satisfies $\|f\|_{8,\nul}\leq c^{1/4}\cdot \|f\|_{4,\nul}$\hfill ($V$ is $(4,8)$-hypercontractive)
        \item $\adv^{(\pla,\nul)}[V^{\odot 4}]\leq 1/(8c)$.\hfill (mild hardness assumption)
    \end{enumerate}
    For any $\delta=\delta(n)>0$, one can using $\poly(q,\delta^{-1})$ preprocessing time prepare the description of a distribution $\pla^*$ such that with high probability,
    \begin{enumerate}
        \item\label{concl:1} $\left\|\frac{d\pla^*}{d\pla}\right\|_\infty\leq 1+O\left(c\cdot \adv^{(\pla,\nul)}[V]\right)$\hfill(closeness of $\pla^*$ and $\pla$)
        \item\label{concl:2} One can sample from $\pla^*$ in expected time $\poly(q,\log(\delta^{-1}))$.\hfill (efficient sampleability)
        \item $\adv^{(\pla^*,\nul)}[V]\leq \delta$.\hfill (hardness amplified to $\delta$)
    \end{enumerate}
\end{theorem}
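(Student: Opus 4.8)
The plan is to follow the strategy sketched in \cref{subsec:moments}, casting the construction of $\pla^*$ as an (infinite-dimensional) linear program over acceptance probabilities $p:\Omega\to[0,1]$, and then arguing that (i) the LP optimum is close to $1$, which gives \cref{concl:1}, and (ii) an approximately optimal, efficiently-samplable solution can be found by stochastic gradient descent on a smooth convex relaxation of the dual, which gives \cref{concl:2} and the runtime bound. Concretely, given $\delta$, we want $p$ with $\E_\pla[p]\geq 1-o(1)$ and $\langle \chi_S, p\rangle_\pla$ small (not exactly zero, but $\lesssim\delta\cdot\E_\pla[p]$) for every nonconstant basis function of $V$; by \cref{claim:ldlr-expression} this forces $\adv^{(\pla^*,\nul)}[V]\leq\delta$, where $\pla^*$ is $\pla$ reweighted by $p$. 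The first step is to write the Lagrangian dual of $\mathrm{LP}\equiv\max\{\E_\pla[p] : \langle p,f_i\rangle_\pla=0\ \forall i\in[\ell],\ 0\le p\le 1\}$; dualizing the equality constraints yields $\mathrm{LP} = \inf_{\alpha\in V,\ \hat\alpha_0=0}\E_{G\sim\pla}[\max(0,1+\alpha(G))]$, exactly as in \cref{eqn:lp-dual}.

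The core analytic step is the lower bound $\mathrm{LP}\geq 1-O(c\cdot\adv^{(\pla,\nul)}[V])$. Fix any $\alpha\in V$ with zero constant term. We cannot directly transfer $\E_\pla[\max(0,1+\alpha)]$ to the null side because $\max(0,1+\alpha)$ is not in $V$. Instead I would use the pointwise inequality $\max(0,t)\geq t - \frac{(t-1)^2}{4}$ applied at $t=1+\alpha$, i.e. $\max(0,1+\alpha)\geq (1+\alpha) - \alpha^2/4$; more usefully, since $\alpha^2\in V^{\odot 2}$, one gets a genuine polynomial relaxation. Actually the cleaner route: bound $\E_\pla[\max(0,1+\alpha)]\geq \E_\pla[(1+\alpha)\cdot\mathbf{1}[1+\alpha\geq 0]]\geq \E_\pla[1+\alpha] - \E_\pla[|1+\alpha|\cdot\mathbf{1}[\alpha<-1]]$, and then control the error term $\E_\pla[|1+\alpha|\mathbf{1}[\alpha\le -1]]\le \E_\pla[\alpha^2]^{1/2}\Pr_\pla[\alpha\le-1]^{1/2}$ or directly $\le\E_\pla[\alpha^2]$ via $|1+\alpha|\mathbf 1[\alpha\le -1]\le \alpha^2$ pointwise. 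Now $\E_\pla[\alpha] = \E_\nul[\alpha] + (\E_\pla-\E_\nul)[\alpha] = 0 + \adv^{(\pla,\nul)}(\alpha)\cdot\|\alpha\|_{2,\nul}$, and $\E_\pla[\alpha^2]$ is handled by writing $\E_\pla[\alpha^2] = \E_\nul[\alpha^2] + (\E_\pla-\E_\nul)[\alpha^2]$, where the first term is $\|\alpha\|_{2,\nul}^2$ and, since $\alpha^2\in V^{\odot 4}$ after accounting for hypercontractivity, the second is at most $\adv^{(\pla,\nul)}[V^{\odot 4}]\cdot\|\alpha^2\|_{2,\nul} = \adv^{(\pla,\nul)}[V^{\odot 4}]\cdot\|\alpha\|_{4,\nul}^2 \le c\cdot\adv^{(\pla,\nul)}[V^{\odot 4}]\cdot\|\alpha\|_{2,\nul}^2$ using the $(4,8)$-hypercontractivity together with log-convexity (\cref{lemma:log-convexity}) to relate $\|\alpha\|_4$ to $\|\alpha\|_2$. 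This is exactly where the mild hardness assumption $\adv^{(\pla,\nul)}[V^{\odot 4}]\le 1/(8c)$ and the hypercontractivity are consumed. Combining, $\E_\pla[\max(0,1+\alpha)]\geq 1 - O(\adv^{(\pla,\nul)}(\alpha)\|\alpha\|_{2,\nul}) - O(\|\alpha\|_{2,\nul}^2)$, and optimizing over the scale of $\alpha$ (the worst case is $\|\alpha\|_{2,\nul}\asymp \adv^{(\pla,\nul)}[V]$) yields $\mathrm{LP}\geq 1 - O(c\cdot\adv^{(\pla,\nul)}[V])$ — wait, one must be careful that the bound is uniform in $\alpha$; taking the infimum over $\alpha$ only helps, so the bound on $\mathrm{LP}$ from below holds because \emph{every} feasible dual $\alpha$ gives objective $\geq 1-O(c\cdot\adv^{(\pla,\nul)}[V])$.

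For the algorithmic part I would not solve the LP exactly. Instead, replace the dual objective $F(\alpha)=\E_{G\sim\pla}[\max(0,1+\alpha(G))]$ by a smoothed version $F_\eta(\alpha)=\E_\pla[h_\eta(1+\alpha(G))]$ for a convex $O(1/\eta)$-smooth surrogate $h_\eta$ of $\max(0,\cdot)$ (e.g. a softplus/Huberized hinge), restricted to $\alpha = \sum_i \hat\alpha_i f_i$ in an $\ell_2$-ball of radius $\mathrm{poly}(1/\delta)$ in coefficient space; this is a convex program in $\ell\le q$ variables, and unbiased stochastic gradients are obtained by sampling $G\sim\pla$ (time $q$) and evaluating the $f_i$ (time $q$ each). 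Projected SGD reaches an $O(\delta^2)$-approximate minimizer $\alpha^\star$ in $\mathrm{poly}(q,1/\delta)$ iterations. Then set $p(G) = h_\eta'(1+\alpha^\star(G))\in[0,1]$ (the derivative of the surrogate, which is the natural primal recovery map from the smoothed dual optimality conditions); complementary-slackness-style arguments show this $p$ is near-feasible for the LP, i.e. $\E_\pla[p]\ge 1-O(c\cdot\adv^{(\pla,\nul)}[V])-O(\delta)$ and $|\langle f_i,p\rangle_\pla|\le O(\delta)\cdot\E_\pla[p]$ for each $i$, which by \cref{claim:ldlr-expression} gives $\adv^{(\pla^*,\nul)}[V]\le O(\delta\sqrt\ell)$ — rescale $\delta$ by $\mathrm{poly}(q)$ up front to absorb this. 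Sampling from $\pla^*$ is rejection sampling: draw $G\sim\pla$, accept with probability $p(G)$; since $\E_\pla[p]=1-o(1)$ the expected number of trials is $O(1)$, giving expected time $\mathrm{poly}(q,\log(1/\delta))$ (the $\log$ accounts for the bit-precision of $\alpha^\star$). Finally $\|d\pla^*/d\pla\|_\infty = \max_G p(G)/\E_\pla[p]\le 1/\E_\pla[p] = 1+O(c\cdot\adv^{(\pla,\nul)}[V])$, which is \cref{concl:1}.

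The main obstacle I expect is the clean handling of the non-polynomial function $\max(0,1+\alpha)$: the pointwise quadratic bound is lossy, and one needs to make sure the slack only involves $\alpha^2\in V^{\odot 2}$ (or $\alpha^4\in V^{\odot 4}$) so that the mild hardness hypothesis on $V^{\odot 4}$ and the $(4,8)$-hypercontractivity are exactly what is needed — getting the dependence on $c$ and $\adv^{(\pla,\nul)}[V]$ right in \cref{concl:1} hinges on this. The secondary obstacle is purely bookkeeping: propagating the SGD optimization error and the surrogate-smoothing error through the duality recovery so that the final $\adv^{(\pla^*,\nul)}[V]$ bound is genuinely $\le\delta$ rather than $\le\mathrm{poly}(q)\cdot\delta$, which is handled by starting from a polynomially smaller target accuracy.
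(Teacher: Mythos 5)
Your overall route is the same as the paper's (LP over acceptance probabilities, Lagrangian dual, smooth surrogate of the hinge, SGD in the $\ell$ basis coefficients, primal recovery via the surrogate's derivative, rejection sampling), but your core analytic step has a genuine gap: your lower bound on the dual objective is not uniform in $\alpha$. From $\max(0,1+\alpha)\geq (1+\alpha)-\alpha^2$ you get
\[
\E_\pla[\max(0,1+\alpha)] \;\geq\; 1 - \adv^{(\pla,\nul)}[V]\cdot\|\alpha\|_{2,\nul} - O\!\left(\|\alpha\|_{2,\nul}^2\right),
\]
which is monotonically \emph{decreasing} in $\|\alpha\|_{2,\nul}$ and becomes vacuous (indeed tends to $-\infty$) once $\|\alpha\|_{2,\nul}\gtrsim 1$. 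Since the dual value is an infimum over \emph{all} zero-mean $\alpha\in V$, including those of large norm, the assertion that ``every feasible dual $\alpha$ gives objective $\geq 1-O(c\cdot\adv^{(\pla,\nul)}[V])$'' does not follow from your bound; there is no interior ``worst case scale'' to optimize over. The same deficiency resurfaces in your primal recovery: the approximate optimizer $\alpha^\star$ returned by projected SGD may have norm as large as your ball radius $\poly(1/\delta)$, and the error terms you must absorb there (e.g. the analogue of $\tfrac{\delta}{3}\|\alpha^\star\|_{2,\nul}$ coming from approximate stationarity) scale with $\|\alpha^\star\|_{2,\nul}$, so a bound that degrades quadratically in the norm cannot close the argument.

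The missing ingredient is a lower bound on the objective that \emph{grows} linearly in $\|\alpha\|_{2,\nul}$, which is exactly what the paper's \cref{claim:optim-bound} provides: writing $\max(0,x)\geq (1-t)x + t|x|$ and choosing $t\asymp c\,\eps$ with $\eps=\adv^{(\pla,\nul)}[V^{\odot 4}]$, one needs $\|1+\alpha\|_{1,\pla}\gtrsim \|\alpha\|_{2,\nul}/c$. This is obtained from log-convexity of $L^p$ norms, $\|h\|_{1,\pla}\geq \|h\|_{2,\pla}^3/\|h\|_{4,\pla}^2$ (\cref{lemma:log-convexity}), together with the norm-transfer inequalities $\|h\|_{2,\pla}\gtrsim\|h\|_{2,\nul}$ and $\|h\|_{4,\pla}\lesssim\|h\|_{2,\nul}$ for $h=1+\alpha\in V$ (the paper's \cref{eqn:norm-preserved}), which is precisely where the $(4,8)$-hypercontractivity and the mild hardness assumption on $V^{\odot 4}$ (applied to $h^2$ and $h^4$) are consumed. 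The resulting bound $\E_\pla[\sigma(1+\alpha)]\geq 1-4c\eps+\eps\|\alpha\|_{2,\nul}$ is uniform, gives the closeness conclusion, and its positive $\eps\|\alpha\|_{2,\nul}$ term is what absorbs the gradient-inexactness error $\tfrac{\delta}{3}\|\alpha^\star\|_{2,\nul}$ (with $\delta\leq\eps$) in the bound on $\E_\pla[p]$; your quadratic slack $\alpha^2\in V^{\odot 2}$ cannot serve either purpose. The rest of your plan (smoothed dual, SGD with small gradient norm, $p=\sigma'(1+\alpha^\star)$, rejection sampling, rescaling $\delta$ by $\poly(q)$) matches the paper and is fine once this uniform lower bound is in place.
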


As a concrete example, one can consider the setting of the boolean hypercube, where $\nul = \unif(\{\pm 1\}^n)$ and $V=\F_{\leq d}$ is the space of degree $\leq d$ functions. In this case, the relevant parameters can be shown to be $q(n) = n^{d}$ and $c(n) = 2^{O(d)}$.

We make a few remarks about the theorem statement.

\begin{itemize}
    \item The bound on $\left\|\frac{d\pla^*}{d\pla}\right\|_\infty$ in the conclusion is much stronger than a TV distance bound -- in particular, $\pla^*$ is contiguous with respect to $\pla$, that is, unlikely events in $\pla$ remain unlikely in $\pla^*$. This fact is false for two distributions that are $\delta$-close in TV distance.
    \item Under the conditions of the theorem, one can show using a compactness argument that there exists a distribution $\pla^*$ satisfying $\left\|\frac{d\pla^*}{d\pla}\right\|_\infty\leq 1+O\left(c\cdot \adv^{(\pla,\nul)}[V]\right)$ and $\adv^{(\pla^*,\nul)}[V] = 0$. That is, functions in $V$ achieve advantage exactly $0$.
\end{itemize}

\subsection{Proof of \cref{thm:general-2}}

Let us define $\eps = \adv^{(\pla,\nul)}[V^{\odot 4}]$. We will assume $\delta\leq \eps$, since otherwise $\pla^*=\pla$ already satisfies the requirement that $\adv^{(\pla^*,\nul)}[V]\leq \delta$. As explained in \cref{subsec:moments}, we will construct $\pla^*$ by computing \emph{acceptance probabilities} $p:\Omega\to [0,1]$. We set $\pla^*$ to be $\pla$ conditioned on the event that $\ber(p(x))=1$. In other words,
\begin{equation}\label{eqn:formula-phi}
  d\pla^*(x) = \frac{d\pla(x)\cdot p(x)}{\E_\pla[p]}.  
\end{equation}
We will sample from $\pla^*$ by performing rejection sampling with $\pla$. 

Recall that in \cref{subsec:moments}, we defined $p$ to be the optimal solution to a particular LP. As solving the LP directly takes exponential time, we will follow the outline from the end of the technical overview and attempt to solve a \emph{smoothing} of the dual formulation. Recall the dual formulation:
\[\min_{g=\sum_{i\in [\ell]}g_i f_i}\E_{x\sim \pla}[\max(0,1+g(x))].\]
Let us define a smooth approximation to the univariate function $t\to \max(0, t)$; let $\sigma(t) = \int_{-\infty}^t\frac{1}{1+2^{-x/\delta}}dx$. Observe that $\sigma$ is convex and continuously differentiable, and satisfies the following properties for all $t\in \R$.
\begin{enumerate}
    \item $\sigma'(t)\in [0,1]$.
    \item $\sigma(t)\geq \max(0, t)$.
    \item $\sigma(0)\leq \delta$.
    \item $\sigma''(t)\leq O(1/\delta)$.
\end{enumerate}

Let $\{\mathbf{1}, f_1,\ldots, f_\ell\}$ be a computationally tractable orthonormal basis of $V$, where $\mathbf{1}$ is the constant function. Our algorithm will consist of finding the optimal solution to the following convex program:
\[\min_{g=\sum_{i\in [\ell]}g_i f_i}\E_{x\sim \pla}[\sigma(1+g(x))].\]
Say we find a solution $g^*$ to the convex program. We will set the acceptance probabilities to 
\begin{equation}\label{eqn:p-defn}
    p(x) := \sigma'(1+g^*(x)).
\end{equation}

One cannot expect to find an exact optimizer in general, since even exactly evaluating the objective function is intractable. We will leverage the fact that the objective function is written as an average over efficiently computable functions, and use stochastic gradient descent to find an approximate optimizer. We will only require approximate first-order optimality in the sense that the gradient of the objective function at $g^*$ is small. Let us calculate the first derivatives.
\[\left.\frac{\partial \E_{\pla}[\sigma(1+g)]}{\partial g_i}\right|_{g=g^*} = \left.\E_{x\sim \pla}\left[\frac{\partial}{\partial g_i} \sigma\left(1+\sum_{i\in [\ell]}g_i f_i\right)\right|_{g=g^*}\right] = \E_{x\sim \pla}\left[f_i\cdot \sigma'(1+g^*)\right] = \langle f_i, p\rangle_\pla.\]
\begin{claim}\label{claim:sgd}
    With $\poly(q,\delta^{-1})$ preprocessing time, one can prepare the description of a function $g^*=\sum_{i\in [\ell]}g_i^* f_i$ that can be queried in time $\poly(q, \log \delta^{-1})$ such that with high probability over the preprocessing randomness,
    \[\left.\left\|\nabla_{\{g_1,\ldots, g_\ell\}}\E_{\pla}[\sigma(1+g)]\right\|_{2} \right|_{g=g^*}= \sqrt{\sum_{i\in [\ell]}\langle f_i, p\rangle^2_{\pla}}\leq \delta/3,\]
    where $\left.\left\|\nabla_{\{g_1,\ldots, g_\ell\}}\E_{x\sim \pla}[\sigma(1+g(x))]\right\|_{2} \right|_{g=g^*}$ is the gradient of $\E_{x\sim \pla}[\sigma(1+g(x))]$ viewed as a function of the coefficients $g_1,\ldots, g_n$ of $g=\sum_{i\in \ell}g_i f_i$, evaluated at $g^*$.
\end{claim}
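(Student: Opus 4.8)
The plan is to set this up as a standard stochastic convex optimization problem and apply a textbook projected SGD guarantee. The decision variable is the coefficient vector $(g_1,\dots,g_\ell)\in\R^\ell$, and the objective is $F(g):=\E_{x\sim\pla}[\sigma(1+\sum_i g_i f_i(x))]$. This is convex in $g$ since $\sigma$ is convex and $g\mapsto 1+\sum_i g_i f_i(x)$ is affine. First I would record that $F$ is smooth: by the chain rule $\nabla^2 F(g) = \E_{x\sim\pla}[\sigma''(1+g(x))\, \mathbf{f}(x)\mathbf{f}(x)^\top]$ where $\mathbf{f}(x)=(f_1(x),\dots,f_\ell(x))$, and using $\sigma''\leq O(1/\delta)$ together with $\|f_i\|_\infty\le q$ (from $q$-smoothness and $\|f_i\|_{2,\nul}=1$), the spectral norm of the Hessian is at most $O(\ell q^2/\delta)$, so $F$ is $L$-smooth with $L=O(\ell q^2/\delta)=\poly(q,\delta^{-1})$. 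Next I would give an a priori bound on where the optimizer lives, so that I can project onto a bounded domain: since $\sigma(t)\ge\max(0,t)$, the optimal value is at most $F(0)=\E_\pla[\sigma(1)]\le 1+\sigma(0)\le 1+\delta = O(1)$; meanwhile $\sigma$ grows roughly linearly, so $F(g)\gtrsim \E_\pla[|1+g(x)|] - O(\delta) \gtrsim \|g\|_{2,\pla} - O(1)$, and one can convert $\|g\|_{2,\pla}$ to the Euclidean norm $\|(g_i)\|_2$ using the fact that $\{f_i\}$ is orthonormal with respect to $\nul$ plus a change-of-measure bound (here the mild hardness / hypercontractivity of $V$ controls $\|g\|_{2,\pla}$ versus $\|g\|_{2,\nul}=\|(g_i)\|_2$). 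This confines an optimal $g^\star$ to a ball of radius $R=\poly(q)$.

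Then I would run projected stochastic gradient descent on the ball of radius $R$. An unbiased gradient estimate at the current iterate $g$ is obtained by sampling $x\sim\pla$ (possible in time $q$ by efficient sampleability) and returning $\sigma'(1+g(x))\cdot\mathbf{f}(x)$; evaluating this costs $\poly(q)$ by $q$-tractability. This estimator has Euclidean norm at most $q\sqrt{\ell}\cdot\sup|\sigma'|\le q\sqrt\ell$, so its second moment is bounded by $\rho^2:=\ell q^2=\poly(q)$. With $N$ steps, standard SGD analysis for smooth convex functions gives that the average iterate $\bar g$ satisfies $\E[F(\bar g)]-F(g^\star)\le O(LR^2/N + R\rho/\sqrt N)$, which is $\le\delta^3$ after $N=\poly(q,\delta^{-1})$ iterations; a standard argument (e.g. restarting / picking the best of several runs and a Markov bound, or using a high-probability SGD bound) upgrades this to hold with high probability over the sampling randomness. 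Finally I would convert the optimality gap into a gradient bound: for an $L$-smooth convex function, $\|\nabla F(g)\|_2^2\le 2L\,(F(g)-\min F)$, so $F(\bar g)-F(g^\star)\le\delta^3$ yields $\|\nabla F(\bar g)\|_2\le\sqrt{2L\delta^3}=\sqrt{\poly(q)\cdot\delta^2}\le\delta/3$ once $\delta$ is small enough relative to the $\poly(q)$ factor (if $\delta$ is not that small we are in the trivial regime $\delta\ge\eps$ already handled, or we can simply shrink the target gap further). Taking $g^\star:=\bar g$, which is an explicit convex combination of $\poly(q,\delta^{-1})$ iterates and hence queryable in time $\poly(q,\log\delta^{-1})$ after storing it compactly, and recalling $\nabla_i F(g) = \langle f_i,\sigma'(1+g)\rangle_\pla = \langle f_i, p\rangle_\pla$ from the computation preceding the claim, gives exactly the stated conclusion $\sqrt{\sum_i\langle f_i,p\rangle_\pla^2}\le\delta/3$.

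The main obstacle I anticipate is not the SGD machinery, which is routine, but rather getting clean a priori control on the radius $R$ of the domain containing an optimizer: one needs the lower bound $F(g)\gtrsim\|(g_i)\|_2 - O(1)$, and this requires transferring from $\|g\|_{2,\pla}$ to $\|g\|_{2,\nul}$, which is where the hypercontractivity of $V$ and the mild hardness assumption $\adv^{(\pla,\nul)}[V^{\odot 4}]\le 1/(8c)$ enter (controlling how much the measure $\pla$ can distort $L^2$ norms of functions in $V$, via a Cauchy–Schwarz / Paley–Zygmund-type argument on $g^2\in V^{\odot 2}$). A secondary technical point is making the SGD guarantee high-probability rather than in-expectation, which is handled by a standard median-of-means or best-iterate trick, and ensuring the stored description of $\bar g$ (a list of coefficient vectors or their running average) can be evaluated within the claimed time bound.
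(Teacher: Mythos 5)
Your proposal is correct and follows essentially the same route as the paper's proof in \cref{sec:a1}: cast the problem as stochastic convex optimization over the coefficients $(g_1,\ldots,g_\ell)$, bound the gradient norm and smoothness via $\|f_i\|_\infty\leq q$ and $\sigma''=O(1/\delta)$, run SGD to a sufficiently small optimality gap, convert that gap into a gradient bound using $\|\nabla F\|_2^2\leq 2L(F-\min F)$, identify $\nabla_i F=\langle f_i,p\rangle_\pla$, and round/compress the coefficients to get $\poly(q,\log\delta^{-1})$ query time. The only real difference is that you run \emph{projected} SGD with an a priori radius bound obtained from hypercontractivity and the mild-hardness assumption (essentially re-deriving the coercivity implicit in \cref{claim:optim-bound}), whereas the paper invokes an unconstrained SGD guarantee (\cref{thm:sgd}) directly; this is a minor variation, and your explicit domain bound in fact makes the dependence on the distance to the optimizer more transparent than the paper's treatment.
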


The proof of \cref{claim:sgd} involves invoking some off-the-shelf convergence guarantees for stochastic gradient descent, and we will defer it to \cref{sec:a1}. We will require another claim that bounds the optimal value of the convex program above.

\begin{claim}\label{claim:optim-bound}
    For any $g = \sum_{i\in [\ell]}g_i f_i$, we have $\E_{\pla}[\sigma(1+g)]\geq 1-4c \eps + \eps \|g\|_{2,\nul}$.
\end{claim}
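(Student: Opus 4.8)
\textbf{Proof proposal for \cref{claim:optim-bound}.}

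The plan is to relate $\E_\pla[\sigma(1+g)]$ to $\E_\nul[\sigma(1+g)]$ via the mild hardness assumption, and then lower bound the latter using the properties of $\sigma$. The key obstacle is that $\sigma(1+g)$ is not itself a low-degree function, so we cannot directly say $\E_\pla[\sigma(1+g)] \approx \E_\nul[\sigma(1+g)]$; instead we must find a low-degree (in fact, degree-$4$ in $g$, hence lying in $V^{\odot 4}$) polynomial surrogate that sandwiches $\sigma$ well enough. The natural candidate is a low-order Taylor-type polynomial approximation $P(t)$ of $\sigma(t)$ that satisfies $P(t) \leq \sigma(t)$ pointwise (so that $\E_\pla[\sigma(1+g)] \geq \E_\pla[P(1+g)]$) and has controlled error near $t=1$ where the bulk of the mass sits.

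Concretely, first I would write $\E_\pla[\sigma(1+g)] \geq \E_\pla[P(1+g)]$ for a suitable degree-$4$ polynomial $P$ with $P \leq \sigma$. Since $P(1+g)$ is a linear combination of products of at most $4$ functions from $\{\mathbf{1}, f_1, \ldots, f_\ell\}$, it lies in $V^{\odot 4}$, and we can apply the mild hardness assumption $\adv^{(\pla,\nul)}[V^{\odot 4}] \leq \eps$: for any $h \in V^{\odot 4}$, $|\E_\pla[h] - \E_\nul[h]| \leq \eps \|h\|_{2,\nul}$. Thus $\E_\pla[P(1+g)] \geq \E_\nul[P(1+g)] - \eps\|P(1+g)\|_{2,\nul}$. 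Next I would lower bound $\E_\nul[P(1+g)] \geq \E_\nul[\sigma(1+g)] - (\text{pointwise or }L^1\text{ error of }P)$; using $\sigma(t) \geq \max(0,t)$ we get $\E_\nul[\sigma(1+g)] \geq \E_\nul[1+g] = 1$ (since $g$ has zero mean under $\nul$, as each $f_i \perp \mathbf{1}$), so $\E_\nul[P(1+g)] \geq 1 - (\text{approx error})$. Finally I would control $\|P(1+g)\|_{2,\nul}$ and the approximation error in terms of $\|g\|_{2,\nul}$: here the $(4,8)$-hypercontractivity of $V$ (condition 3) enters, letting us bound $\|g^k\|_{2,\nul}$ by $\|g\|_{2,\nul}^k$ up to factors of $c$, since $\|g\|_{4,\nul} \leq \sqrt{3}^{\,O(1)}\cdot$-type control combined with $\|g\|_{8,\nul}\leq c^{1/4}\|g\|_{4,\nul}$ handles the degree-$4$ terms (we need up to the $8$th moment of $g$ because $\|g^4\|_{2,\nul} = \|g\|_{8,\nul}^4$). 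Assembling these, $\E_\pla[\sigma(1+g)] \geq 1 - (\text{error}) - \eps \cdot (\text{poly in } c, \|g\|_{2,\nul})$, and the error terms should collapse into the claimed bound $1 - 4c\eps + \eps\|g\|_{2,\nul}$ after absorbing constants and using $\delta \leq \eps$ and $\sigma(0) \leq \delta$.

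The main obstacle I anticipate is choosing $P$ and the bookkeeping of moments so that the $\|g\|_{2,\nul}^2, \|g\|_{2,\nul}^3, \|g\|_{2,\nul}^4$ terms that arise can all be absorbed into a single clean $\eps\|g\|_{2,\nul}$ term plus a constant $-4c\eps$. This likely uses a case split on whether $\|g\|_{2,\nul}$ is small or large: when $\|g\|_{2,\nul} \leq 1$ the higher powers are dominated by $\|g\|_{2,\nul}$ itself, and when $\|g\|_{2,\nul}$ is large the term $\eps\|g\|_{2,\nul}$ is already big enough that $\E_\pla[\sigma(1+g)] \geq \max(0, \E_\pla[1+g]) \geq \ldots$ combined with anticoncentration/Paley–Zygmund-type reasoning (\cref{lem:paley-zygmund}) gives $\E_\pla[\sigma(1+g)] \geq \Omega(\|g\|_{2,\nul})$ directly, beating the target. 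The hypercontractivity is precisely what makes the Paley–Zygmund argument quantitative in the large-norm regime. I would handle the small-norm regime by the Taylor surrogate argument above and the large-norm regime by the direct anticoncentration argument, then check the two bounds meet.
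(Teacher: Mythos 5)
Your toolkit (the minorant $\sigma(t)\geq\max(0,t)$, the $V^{\odot 4}$ hardness assumption, hypercontractivity up to eighth norms, and a Paley--Zygmund-type lower bound) is the right one, but the two load-bearing steps, as you describe them, do not go through. First, the Taylor-surrogate branch: a fixed degree-$4$ polynomial $P$ with $P\leq\sigma$ pointwise cannot be ``uniformly close'' to $\sigma$ -- to stay below $\sigma$ on all of $\R$ its quartic part must be concave at infinity, so the error $\E_\nul[\sigma(1+g)]-\E_\nul[P(1+g)]$ is of order $\E_\nul[g^4]$, which is \emph{not} proportional to $\eps$ and cannot be absorbed into the claimed bound, whose only loss term is $4c\eps$. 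The only error-free choice in that branch is the linear minorant $P(t)=t$, which gives $\E_\pla[\sigma(1+g)]\geq 1-\eps\|1+g\|_{2,\nul}$: note the \emph{wrong sign} on the $\eps\|g\|_{2,\nul}$ term, so this suffices only when $\|g\|_{2,\nul}=O(c)$. In other words, the crucial gain $+\eps\|g\|_{2,\nul}$ cannot come from ``$\E_\nul[\sigma(1+g)]\geq 1$ minus a small approximation error''; it has to come from a lower bound on $\E[|1+g|]$, i.e.\ from the mass where $1+g$ is far from $0$.

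Second, in your large-norm branch you invoke Paley--Zygmund under $\pla$, but the hypercontractivity hypothesis is stated under $\nul$, and by itself it says nothing about moments under $\pla$. The missing step -- which is the heart of the paper's proof -- is to \emph{transfer} the second and fourth moments of $1+g$ from $\nul$ to $\pla$ by applying the hardness assumption to $(1+g)^2,(1+g)^4\in V^{\odot 4}$, with $(2,4)$- and $(4,8)$-hypercontractivity controlling the error terms; this gives $\|1+g\|_{2,\pla}\gtrsim\|1+g\|_{2,\nul}$ and $\|1+g\|_{4,\pla}\leq O(c^{1/2})\cdot\|1+g\|_{2,\nul}$, whence $\E_\pla[|1+g|]\geq\|1+g\|_{2,\pla}^3/\|1+g\|_{4,\pla}^2\geq(2c)^{-1}\|g\|_{2,\nul}$ by \cref{lemma:log-convexity}. (A variant closer to your plan does work: minorize $\max(0,t)=t/2+|t|/2$ by $t/2$ plus a quartic minorant of $|t|$ whose \emph{scale is tuned to} $\|1+g\|_{2,\nul}$, and transfer that single $V^{\odot 4}$ element from $\nul$ to $\pla$; but what is needed is a scale-dependent minorant of $|t|$, not a Taylor approximation of $\sigma$.) Once $\E_\pla[|1+g|]\geq\Omega(\|g\|_{2,\nul}/c)$ is in hand, the paper also avoids your case split: writing $\max(0,x)\geq(1-t)x+t|x|$ and choosing $t=4c\eps$ yields the stated inequality for every $g$ simultaneously, whereas your two regimes only meet (around $\|g\|_{2,\nul}\asymp c$) after the moment-transfer step above is supplied.
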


The proof of \cref{claim:optim-bound} involves using the fact that $\pla$ and $\nul$ are mildly indistinguishable with respect to polynomials over $V^{\odot 4}$, which allows us to switch the expectation with respect to $\pla$ to an expectation with respect to $\nul$, which is then easily bounded. Let us argue that the three conditions on $\pla^*$ are satisfied.

\paragraph{Closeness of $\pla^*$ and $\pla$.} We will bound
\[\left\|\frac{d\pla^*}{d\pla}\right\|_\infty = \max_{x\in \Omega}\frac{p(x)}{\E_\pla[p]}\leq \frac{1}{\E_\pla[p]}.\]
Let us lower bound the denominator.
\begin{align*}
    \E_\pla[p] &= \E_\pla[\sigma'(1+g^*)]\tag{\cref{eqn:p-defn}}\\
    &= \E_\pla[(1+g^*)\cdot \sigma'(1+g^*)] - \E_\pla[g^*\cdot \sigma'(1+g^*)]\tag{adding/subtracting $\E_\pla[g^*\cdot \sigma'(1+g^*)]$}\\
    &\geq \E_\pla[\sigma(1+g^*)]- \sigma(0) - \E_\pla[g^*\cdot \sigma'(1+g^*)]\tag{$\sigma$ is convex, so $\sigma(t) - \sigma(0)\leq t\cdot \sigma'(t)$}\\
    &\geq 1-5c\eps + \eps \|g^*\|_{2,\nul} -  \E_\pla[g^*\cdot \sigma'(1+g^*)]\tag{\cref{claim:optim-bound}, $\sigma(0)\leq \delta\leq \eps\leq c\eps$}\\
    &\geq 1-5c\eps + \eps\|g^*\|_{2,\nul} - \sum_{i\in [\ell]}g_i^*\cdot \E_\pla[f_i\cdot p]\tag{expanded $g^*$, \cref{eqn:p-defn}}\\
    &\geq 1-5c\eps + \eps\|g^*\|_{2,\nul} - \|g^*\|_{2,\nul}\cdot\sqrt{\sum_{i\in [\ell]}\langle f_i,p\rangle_\pla^2}\tag{Cauchy-Schwarz}\\
    &\geq 1-5c\eps + \eps\|g^*\|_{2,\nul} - \delta/3\cdot \|g^*\|_{2,\nul}
    \geq 1-5c\eps.\tag{\cref{claim:sgd}, $\delta\leq \eps$}
\end{align*}
By the mild hardness assumption on $V^{\odot 4}$, $c\eps\leq 1/8$, implying that $\|\frac{d\pla^*}{d\nul}\|_\infty = 1+O(c\cdot \eps)$.

\paragraph{Efficient Sampleability}

We will sample from $\pla^*$ by rejection sampling with $\pla$. Concretely, we sample $x\sim \pla$, and a Bernoulli $b\sim \ber(p(x))$. If $b=1$, we output $x$, otherwise we repeat this process.
The output is distributed as $\pla^*$ by definition, and since $\E[p(x)]\geq 1-5c\eps > \Omega(1)$, the expected number of iterations before terminating is $O(1)$. Since $g^*$ can be computed in time $\poly(q,\log \delta^{-1})$, the expected runtime is $\poly(q,\log \delta^{-1})$.

\paragraph{Hardness amplified to $\delta$.} Finally, we bound the quantity $\adv^{(\pla^*,\nul)}[V]$ using the expression given by \cref{claim:ldlr-expression}.

\begin{align*}
    \adv^{(\pla^*,\nul)}[V] &=\sqrt{\sum_{i\in [\ell]}\E_{\pla^*}[f_i]^2}\tag{\cref{claim:ldlr-expression}}\\
    &= \frac{1}{\E_{\pla}[p]}\sqrt{\sum_{i\in [\ell]}\langle p, f_i\rangle_{\pla}^2}\tag{\cref{eqn:formula-phi}}\\
    &\leq \frac{\delta}{3\cdot (1-5c\eps)}\leq \delta.\tag{\cref{claim:sgd}, bound on $\E_\pla[p]$, $c\eps\leq 1/8$}
\end{align*}
This completes the proof of \cref{thm:general-2} modulo \cref{claim:sgd} which we prove in in \cref{sec:a1}, and \cref{claim:optim-bound} which we prove below.

\begin{proof}[Proof of \cref{claim:optim-bound}]
    Recall the hypercontractivity property we assumed, which states that for $f\in V$, $\|f\|_{8,\nul}\leq c^{1/4}\cdot \|f\|_{4,\nul}$. We will use this in several ways throughout this proof, so it will be helpful to collect all the consequences we will need before getting into the proof. Firstly, for any function $f\in V$, $(4,8)$-hypercontractivity implies $(2,4)$-hypercontractivity by \cref{lemma:log-convexity}:
    \begin{equation}\label{eqn:hyper-1}
        \|f\|_{4,\nul}\leq c^{1/2}\cdot \|f\|_{2,\nul}.
    \end{equation}
    As a result, one can apply the definition of $\eps=\adv^{(\pla,\nul)}[V^{\odot 4}]$ on the functions $f^2$ and $f^4$ to bound $|\|f\|_{2,\pla}^2-\|f\|_{2,\nul}^2|\leq \eps\cdot \|f\|_{4,\nul}^2$, and $|\|f\|_{4,\pla}^4-\|f\|_{4,\nul}^4|\leq \eps\cdot \|f\|_{8,\nul}^2$. Applying $(2,4)$-hypercontractivity and $(4,8)$-hypercontractivity respectively, one has the following inequalities.
    \begin{equation}\label{eqn:norm-preserved}
        \|f\|_{2,\pla}\geq \sqrt{1-c\eps}\cdot \|f\|_{2,\nul}\geq 2^{-1/5}\cdot \|f\|_{2,\nul},\quad \|f\|_{4,\pla}\leq (1+c^{1/2}\eps)^{1/4}\cdot \|f\|_{4,\nul}\leq 2^{1/5}\cdot \|f\|_{2,\nul},
    \end{equation}
    where we used the bound $c\eps \leq 1/8$. Let us finally begin with the proof. Recall that we want to prove a lower bound on $\E_\pla[\sigma(1+g)]$. Let $0<t<1/2$ be some parameter to be picked later. We will bound
    \begin{align*}
        \E_\pla[\sigma(1+g)] &\geq \E_\pla[\max(0, 1+g)]\tag{$\sigma(x)\geq \max(0, x)$}\\
        &= \frac{1}{2}\E_\pla[1+g] + \frac{1}{2}\E_\pla[|1+g|]\tag{$\max(0, x) = x/2 + |x|/2$}\\
        &\geq (1-t)\cdot\E_\pla[1+g] + t\cdot\E_\pla[|1+g|]\tag{$\E_\pla[|1+g|]\geq \E_\pla[1+g]$}\\
        &= 1-t + (1-t)\cdot \E_\pla[g] + t\cdot \|1+g\|_{1, \pla}.
    \end{align*}
    We will bound the second and third terms separately. For the second term, recall that $\eps = \adv^{(\pla,\nul)}(V^{\odot 4})$ and $\E_\nul[g]=0$ because $g$ has no $\mathbf{1}$ component. So,
    \[\E_\pla[g] \geq \E_\nul[g]-\eps\cdot \|g\|_{2,\nul}= -\eps\cdot \|g\|_{2,\nul}.\]
    Next we bound the third term.
    \begin{align*}
        \|1+g\|_{1, \pla}&\geq  \frac{\|1+g\|_{2,\pla}^{3}}{\|1+g\|_{4,\pla}^2}\tag{\cref{lemma:log-convexity}}\\
        &\geq \frac{1}{2}\cdot \frac{\|1+g\|_{2,\nul}^{3}}{\|1+g\|_{4,\nul}^2}\tag{\cref{eqn:norm-preserved} on $f=1+g$}\\
        &\geq (2c)^{-1} \cdot \|1+g\|_{2,\nul}\geq (2c)^{-1}\cdot \|g\|_{2,\nul}.\tag{\cref{eqn:hyper-1} on $f=1+g$}
    \end{align*}
    Plugging these back into our original bound, we get
    \begin{align*}
        \E_\pla[\sigma(1+g)] &\geq 1-t + \|g\|_{2,\nul}\cdot \left(\frac{t}{2c} - \eps\right)\\
        &= 1-4\cdot c \eps + \eps \|g\|_{2,\nul},\tag{picking $t=4 \eps c\leq 1/2$.}
    \end{align*}
    completing the proof.
\end{proof}

\section{Anticoncentration for Planted Clique}\label{sec:anti}

In this section, we will prove the main technical statement that is required to invoke our general results in the setting of Planted Clique. We will show that Condition \ref{cond:anti} in the statement of \cref{thm:general-1} holds, that is, functions in $L^2(\pla)$ that survive the noise operator are anticoncentrated as real-valued random variables.

We will diverge slightly from the technical overview in \cref{sec:overview} and consider a slight modification of the vertex resampling Markov Chain introduced in \cref{sec:overview-reduction}. 

\begin{definition}[Permuted Vertex Resampling Markov Chain]
    Let $\Omega=\{\pm 1\}^{\binom{[n]}{2}}$. The Permuted Vertex Resampling Markov Chain with noise rate $1-p$ is a Markov chain $M^{\sym}$ with the following transition rule: for any $G\in \Omega$, $M^{\sym}(G)$ is distributed as $H$ in the following.
    \begin{itemize}
        \item Sample a random permutation $\pi\sim S_n$. Replace $G$ by $\pi(G)$.
        \item Sample $z\sim \ber(p)^{\otimes n}$.
        \item Sample the entries of $H$ as follows:
        \[H_{\{i,j\}} \sim \begin{cases}
            G_{\{i,j\}}\quad &\text{if }x_i=x_j=1,\\
            \rad\quad&\text{otherwise}.
        \end{cases}\]
    \end{itemize}
\end{definition}

The required anticoncentration statement is captured in the following Lemma.

\begin{lemma}\label{lemma:planted-clique-hyp}
    Let $0<p<1$ and $k\in \N$. Let $\pla = G(n, 1/2, k)$, $\pla'=G(n, 1/2, pk)$. Let $M^{\sym}$ be the permuted vertex resampling Markov chain over $\{\pm 1\}^{\binom{[n]}{2}}$ with noise rate $1-p$, so $M(\pla) = \pla'$. Let $T$ be the noise operator associated with $M^{\sym}$. Suppose $f\in L^2(\pla')$ satisfies $\|Tf\|_{2,\pla} \geq p^{d} \cdot\|f\|_{2,\pla'}$ for some $d=o(\min(k,\log p^{-1}))$. Then, $Tf(\pla)$ is $(c^{d^2},0.1)$-anticoncentrated for some universal constant $c$.
\end{lemma}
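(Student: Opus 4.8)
The goal is to show that $Tf(\pla)$ does not concentrate near zero, i.e.\ that $|Tf(G)| \geq 0.1 \cdot \|Tf\|_{2,\pla}$ with probability at least $c^{d^2}$ when $G \sim \pla$. By the Paley–Zygmund inequality (\cref{lem:paley-zygmund}), it suffices to control the ratio of the fourth moment to the square of the second moment of $Tf$ under $\pla$: if $\|Tf\|_{4,\pla}^4 \leq K \cdot \|Tf\|_{2,\pla}^4$, then $Tf(\pla)$ is $\big((1-\sqrt{\theta})^2/K, \theta\big)$-anticoncentrated, and we would take $\theta$ a small constant (say, chosen so $\sqrt\theta \cdot \|Tf\|_{2,\pla} \geq 0.1\|Tf\|_{2,\pla}$ fails — actually we want $\theta \geq 0.01$ after renormalizing, so pick $\theta=0.01$ and absorb constants). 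So the whole problem reduces to proving a hypercontractive-type bound
\[
\|Tf\|_{4,\pla} \leq C^{d} \cdot \|Tf\|_{2,\pla}
\]
for a universal constant $C$, which would give $K = C^{4d}$ and hence anticoncentration parameter $\Omega(C^{-4d}) \geq c^{d^2}$ for suitable $c$.

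\textbf{Key steps.} First I would use the spectral structure of $M^{\sym}$: because $M^{\sym}$ symmetrizes over $S_n$ before resampling, $Tf$ is automatically $S_n$-invariant, and the eigenvectors are (symmetrized) Fourier characters $\chi_S$ with eigenvalue $p^{|V(S)|}$ (\cref{fact:eigen}). The hypothesis $\|Tf\|_{2,\pla} \geq p^d\|f\|_{2,\pla'}$ is a noise-stability condition that should force the "energy" of $Tf$ to concentrate on characters $\chi_S$ with $|V(S)| \lesssim d$, i.e.\ on a low-degree part (degree $O(d^2)$ in edges, degree $O(d)$ in the underlying vertex-indexed description). The plan is to decompose $Tf = (Tf)^{low} + (Tf)^{high}$ where $(Tf)^{high}$ is supported on characters with $|V(S)| > C_0 d$ for a large constant $C_0$; the contraction factor $p^{C_0 d}$ there, combined with the lower bound $\|Tf\|_{2,\pla} \geq p^d \|f\|_{2,\pla'}$ and $\|f\|_{2,\pla'} \gtrsim \|f\|_{2,\nul}$ (which needs the norm-comparison between $\pla'$ and $\nul$ for the relevant degree, as in \cref{eqn:norm-preserved}), should let me discard $(Tf)^{high}$ up to a small fraction of $\|Tf\|_{2,\pla}$. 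Then on the low-degree part I would apply a hypercontractive inequality. Crucially, this must be hypercontractivity \emph{with respect to $\pla$}, not $\nul$ — so the step is: (a) transfer $\|g\|_{4,\pla}$ to $\|g\|_{4,\nul}$ using indistinguishability of $\pla,\nul$ against $V^{\odot 4}$ (low-degree products), (b) apply Bonami (\cref{lem:bonami}) or the $p$-biased global hypercontractivity (\cref{lem:p-biased}) for $S_n$-symmetric low-degree functions — this is where $S_n$-symmetry of $Tf$ is essential — to get $\|g\|_{4,\nul} \leq 8^{O(d)}\|g\|_{2,\nul}$, (c) transfer back to $\|g\|_{2,\pla}$. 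Each transfer costs a $2^{O(d)}$ factor, which is fine since we want $C^d$. Care is needed because $\pla = G(n,1/2,k)$ is not a product measure, but conditioned on the clique-indicator vector $x \sim \ber(k/n)^{\otimes n}$ it is a product measure on the non-clique edges; I would run the hypercontractivity argument conditionally on $x$ and then average, or view $\pla$ itself through the $p$-biased lens with $m = n$ "vertex variables."

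\textbf{Main obstacle.} The hardest part is genuinely establishing hypercontractivity of low-degree $S_n$-symmetric functions \emph{under the planted measure $\pla$ itself} (as opposed to $G(n,1/2)$), with only a $2^{O(d)}$ loss. The planted distribution is a mixture over clique sets $S$, and while each conditional distribution is a biased product measure, the function $Tf$, written in terms of edge variables, mixes across these — so I cannot directly cite a product-space hypercontractivity theorem. The route I expect to work is: re-express everything in terms of the $n$ vertex-variables $x_i$ together with the $\binom n2$ "free" edge bits, observe that $Tf$ after symmetrization has low degree in a combined sense, and invoke the global hypercontractivity of \cite{keevash2021global} (the $p$-biased symmetric version, \cref{lem:p-biased}) for the vertex part while using ordinary Bonami for the edge part, then combine via a tensorization argument. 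Managing the interaction between the degree in vertex-variables and the degree in edge-variables — and making sure the noise-stability hypothesis really does bound the combined degree by $O(d)$ — is the delicate bookkeeping that the rest of this section must carry out. A secondary subtlety is the condition $d = o(\min(k,\log p^{-1}))$: the bound $d \leq k$ (really $d \lesssim \text{(vertex-degree)} \cdot p \cdot n / \ldots$, i.e.\ $d \leq mp$-type) is exactly what is needed to apply \cref{lem:p-biased}, and $d = o(\log p^{-1})$ ensures the discarded high-degree tail is negligible relative to the surviving $p^d$-sized mass.
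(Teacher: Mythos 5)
Your high-level skeleton is close to the paper's: Paley–Zygmund, a low/high split according to vertex-support size, contraction of the high part by the spectral gap beyond level $O(d)$, and (as your ``main obstacle'' paragraph correctly anticipates) proving hypercontractivity of the symmetric low part by conditioning on the clique-indicator vector, using Bonami (\cref{lem:bonami}) for the edge variables and the symmetric $q$-biased global hypercontractivity (\cref{lem:p-biased}) for the vertex variables — this is exactly where $d=o(k)$ enters, and $d=o(\log p^{-1})$ indeed only serves to make the discarded tail negligible against the $c^{-d^2}$ anticoncentration probability. (Also, your observation that $Tf$ is automatically $S_n$-invariant because $M^{\sym}$ permutes first is correct, and your hoped-for $C^{d}$ in the $(2,4)$ bound is stronger than needed/achievable — the edge-degree of the low block is $\Theta(d^2)$, so one only gets $C^{d^2}$, which still suffices.)

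The genuine gap is in the split-and-contract step, which you carry out with respect to the wrong measure. Your decomposition of $Tf$ into Fourier characters $\chi_S$ and the contraction factor $p^{C_0 d}$ on the high part are statements in $L^2(\nul)$, whereas both the hypothesis ($\|Tf\|_{2,\pla}\geq p^d\|f\|_{2,\pla'}$) and the conclusion live in $L^2(\pla)$, $L^2(\pla')$. The patches you propose — ``$\|f\|_{2,\pla'}\gtrsim\|f\|_{2,\nul}$ as in \cref{eqn:norm-preserved}'' and transferring $\|\cdot\|_{4,\pla}\leftrightarrow\|\cdot\|_{4,\nul}$ via indistinguishability against $V^{\odot 4}$ — are only valid for functions in a fixed low-degree space and under the hypotheses of \cref{thm:general-2}, which are not among this lemma's assumptions (the lemma is stated for arbitrary $k$, where low-degree indistinguishability can simply fail). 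Here $f$ is an \emph{arbitrary} element of $L^2(\pla')$ (in the application, the $\pm1$ output of an arbitrary algorithm), and the relevant density ratios are huge: $\|d\nul/d\pla'\|_\infty$ can be of order $e^{\Omega(pk)}$ and $d\pla/d\nul$ of order $2^{\Omega(k^2)}$ on clique-containing graphs, losses that cannot be absorbed into a $c^{d^2}$ budget. The paper avoids all transfers between planted and null measures by lifting to the product space $(x,G)\sim\ber(q)^{\otimes n}\times G(n,1/2)$ with $q=k/n$ (making the clique indicator an explicit variable) and building a bespoke orthogonal decomposition of the \emph{lifted planted} space into blocks $W_{A,B}$ spanned by functions $G_A\cdot\chi^q_B(x)\cdot r(x_{V(A)})$ (\cref{fact:sub-orth}); the lifted noise operator maps each such block of $L^2(\Psi')$ into the corresponding block of $L^2(\Psi)$ with contraction $p^{(|V(A)|+|B|)/2}$ measured directly from the $\Psi'$-norm to the $\Psi$-norm (\cref{claim:contr}). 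With that, the hypothesis immediately forces the part with $|V(A)|+|B|>4d$ to carry at most a $p^{d}$ fraction of $\|Tf\|_{2,\pla}$ (handled by Markov), and anticoncentration of the low block is proved under the lifted product measure along the lines you sketched, via an $(4,8)$-moment bound and Paley–Zygmund applied to $h^2$ (\cref{claim:anti-low}, \cref{lem:paley-zygmund}). Without this change of decomposition (or some other way to make the contraction estimate hold against the planted norms for arbitrary $f$), the proposal does not go through.
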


The goal of the rest of this section is to prove the above Lemma. Let us begin by describing a high-level plan. Observe that since $\pla$ is not a stationary distribution of $M^{\sym}$, we are not guaranteed a basis of eigenvectors of $T$ that is orthonormal with respect to $\pla$. To address this issue, consider the following equivalent way to sample a graph $\tilde{G}\sim \pla$. Define $q:=k/n$.
\begin{itemize}
    \item Sample $x\sim \ber(q)^{\otimes n}$. Define $\clique(x)\in \{0,1\}^{\binom{[n]}{2}}$ by $\clique(x)_{\{i,j\}}=x_ix_j$.
    \item Sample $G\sim G(n,1/2)$.
    \item Output $\tilde{G} = G\wedge \clique(x)$. That is, output the graph obtained from $G$ by planting a clique on the vertices $i$ such that $x_i=1$.
\end{itemize}
The main idea is to treat $f$ not as a function of $\tilde{G}$ under $\tilde{G}\sim \pla$, but as a function of $(x,G)$ where $x\sim \ber(q)^{\otimes n}$ and $G\sim G(n,1/2)$ are independent. Since the input is now a simple product distribution, we will be able to better understand the effect of the noise operator $T$ on it. We will begin by defining $\Psi$ and $\Psi'$, the ``lifted'' versions of $\pla$ and $\pla'$, by
\[\Psi=\ber(p)^{\otimes n}\times G(n,1/2),\qquad \Psi'=\ber(pq)^{\otimes n}\times G(n,1/2).\]
Note that $\Psi$ and $\Psi'$ are both distributions over $\{0,1\}^{[n]}\times \{\pm 1\}^{\binom{[n]}{2}}$. For clarity, we will always denote a sample from $\Psi$ as $(x,G)$ and a sample from $\Psi'$ as $(y,H)$ for $x,y\in \{0,1\}^{[n]}$ and $G,H\in \{\pm 1\}^{\binom{[n]}{2}}$. We define a lifted Markov chain $M^*$ by the following sampling proceduer for $M^*(x,G)$ for any $(x,G)$:
\begin{itemize}
    \item Randomly permute the vertices by sampling $\pi\sim S_n$, and replacing $x$ with $\pi(x)$ and $G$ with $\pi(G)$.
    \item Sample $z\sim \ber(p)^{\otimes n}$. Set $y=x\wedge z$.
    \item Set $H_{ij} = \begin{cases}
        G_{ij}\quad&\text{if }z_i=z_j=1,\\
        \rad\quad &\text{otherwise}.
    \end{cases}$
    \item Output $(y,H)$.
\end{itemize}
Let us denote by $T^*$ the noise operator associated with $M^*$. Note that $T^*(\Psi) = \Psi'$, and for any $(x,G)$, the distribution of $H\wedge \clique(y)$ for $(y,H)\sim M^*(x,G)$ is identical to that of $M^{\sym}(G\wedge \clique(y))$. As a consequence, we have the following.
\begin{observation}\label{obs:equiv}
    For any function $f\in L^2(\pla')$, consider $g(y,H)=f(H\wedge \clique(y))$. Then, $f(\pla')$ and $g(\Phi')$ are identical as real-valued random variables, and so are $Tf(\pla)$ and $T^*g(\Phi)$.
\end{observation}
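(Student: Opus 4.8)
Both statements come out of unwinding definitions, once we use that $\Psi$ and $\Psi'$ are by construction the lifts of $\pla$ and of $\pla'$ under the planting map $\Pi(x,G):=G\wedge\clique(x)$; that is, $\pla$ is the law of $\Pi(x,G)$ for $(x,G)\sim\Psi$, and $\pla'$ is the law of $\Pi(y,H)$ for $(y,H)\sim\Psi'$. (This is exactly the two-step sampling of $G(n,1/2,k)$, resp. of $G(n,1/2,pk)$, the latter putting a vertex in the clique with probability $pk/n=pq$.)

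The first identity is immediate: since $g=f\circ\Pi$ by definition, for $(y,H)\sim\Psi'$ we have $g(y,H)=f(\Pi(y,H))$, and $\Pi(y,H)$ is distributed as $\pla'$; hence $g(\Psi')$ and $f(\pla')$ are identically distributed.

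For the second identity I would first establish the pointwise equality $T^*g=Tf\circ\Pi$ and then push forward along $\Psi$. Unwinding the noise operator and using $g=f\circ\Pi$,
\[
T^*g(x,G)=\E_{(y,H)\sim M^*(x,G)}\big[f\big(H\wedge\clique(y)\big)\big].
\]
By the intertwining relation recorded just above this observation, for fixed $(x,G)$ the random graph $H\wedge\clique(y)$ with $(y,H)\sim M^*(x,G)$ has the same law as $G'\sim M^{\sym}(\Pi(x,G))$ (recall $\Pi(x,G)=G\wedge\clique(x)$), so the right-hand side equals $\E_{G'\sim M^{\sym}(\Pi(x,G))}[f(G')]=Tf(\Pi(x,G))$. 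Thus $T^*g=Tf\circ\Pi$ as functions, and feeding in $(x,G)\sim\Psi$ — for which $\Pi(x,G)\sim\pla$ — shows that $T^*g(\Psi)$ and $Tf(\pla)$ are identically distributed.

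The only substantive ingredient is the intertwining relation $\Pi(M^*(x,G))\stackrel{d}{=}M^{\sym}(\Pi(x,G))$, which is taken as given here; verifying it directly is the one place requiring care, since $M^*$ carries the clique indicator explicitly while $M^{\sym}$ sees only the graph. One couples the two chains through a shared permutation $\pi$ and a shared vertex-survival mask $z$, checks that planting commutes with relabelling ($\pi(G\wedge\clique(x))=\pi(G)\wedge\clique(\pi(x))$), and then argues edgewise: an edge between two $z$-surviving vertices keeps its planted value in both chains, and every other edge is a fresh independent Rademacher bit in both — in $M^*$ because $H_{ij}$ is resampled and $y_iy_j=0$, and in $M^{\sym}$ by definition. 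Granting that, the observation reduces to the two pushforward computations above and I expect no further difficulty.
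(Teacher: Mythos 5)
Your proof is correct and follows essentially the same route as the paper, which treats the observation as immediate from the intertwining relation $\Pi(M^*(x,G))\stackrel{d}{=}M^{\sym}(\Pi(x,G))$ stated just before it (the paper asserts this relation without proof, while you additionally sketch its verification via the shared $(\pi,z)$ coupling and the $y_iy_j=0$ check on resampled edges). No gaps.
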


We will be able to coarsely understand the action of $T^*$ on $L^2(\Psi')$ in terms of the Fourier basis. In particular, we will ``merge'' all monomials belonging to a certain equivalence class, resulting in a partitioning on the space $L^2(\Psi)$ into a collection of orthogonal subspaces. We say that a pair $(A,B)$ for $A\subseteq \binom{[n]}{2}$ and $B\subseteq [n]$ is ``valid'' if $V(A)\cap B = 0$. For a valid pair $(A,B)$, define the subspace $W_{A,B}\subseteq L^2(\Psi)$ to be all functions of the form
\[(x,G)\to G_A\cdot \chi^{q}_B(x)\cdot r(x_{V(A)}),\]
where $r$ is an arbitrary function in $L^2(\{0,1\}^{V(A)})$. Here, $G_A=\prod_{\{ij\}\in A}G_{ij}$ is the boolean Fourier character and $\chi^q_B(x) = \prod_{i\in B}\frac{x_i-q}{\sqrt{q(1-q)}}$ is the $q$-biased character.Similarly, define $W'_{A,B}\subseteq L^2(\Psi')$ as the space of all functions of the form
\[(y,H)\to H_A\cdot \chi^{pq}_B(y)\cdot r(y_{V(A)}).\]
The following is a direct result of the fact that the Fourier characters form an orthogonal basis, and the fact that $\Psi$ and $\Psi'$ are product distributions.

\begin{fact}\label{fact:sub-orth}
    The collection of subspaces $W_{A,B}$ (resp. $W'_{A,B}$) form an orthogonal decomposition of $L^2(\Psi)$ (resp. $L^2(\Psi')$), that is, they are an orthogonal collection of subspaces that span the entire space.
\end{fact}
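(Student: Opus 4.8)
The plan is to produce one orthonormal basis of $L^2(\Psi)$ and check that each $W_{A,B}$ is exactly the span of a distinct block of that basis, so that the blocks partition the whole basis. Since $\Psi$ is a product of $\ber(q)^{\otimes n}$ on $\{0,1\}^{[n]}$ with the uniform measure on $\{\pm 1\}^{\binom{[n]}{2}}$, and since the boolean Fourier characters $\{G_A\}_{A\subseteq\binom{[n]}{2}}$ and the $q$-biased characters $\{\chi^q_C\}_{C\subseteq[n]}$ are orthonormal bases of the two factors, the family $\{\,G_A\cdot\chi^q_C\,\}_{A,\,C}$ is an orthonormal basis of $L^2(\Psi)$.

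First I would rewrite $W_{A,B}$ in terms of this basis. A generic element of $W_{A,B}$ is $(x,G)\mapsto G_A\cdot\chi^q_B(x)\cdot r(x_{V(A)})$; expanding $r=\sum_{D\subseteq V(A)}\hat r_D\,\chi^q_D$ in the basis $\{\chi^q_D\}_{D\subseteq V(A)}$ of functions on $\{0,1\}^{V(A)}$ and using $\chi^q_B\cdot\chi^q_D=\chi^q_{B\cup D}$ — which is legitimate precisely because $(A,B)$ valid means $B\cap V(A)=\emptyset$, so $B$ and $D$ are disjoint — gives $G_A\cdot\chi^q_B(x)\cdot r(x_{V(A)})=\sum_{D\subseteq V(A)}\hat r_D\cdot G_A\cdot\chi^q_{B\cup D}$. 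As $r$, equivalently the vector $(\hat r_D)_D$, ranges over everything, we conclude
\[W_{A,B}=\Span\bigl\{\,G_A\cdot\chi^q_C\ :\ C\subseteq[n],\ C\setminus V(A)=B\,\bigr\}.\]

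The remaining point is purely combinatorial: as $(A,B)$ ranges over all valid pairs, the blocks $\mathcal B_{A,B}:=\{(A,C):C\setminus V(A)=B\}$ partition the index set $\{(A,C):A\subseteq\binom{[n]}{2},\,C\subseteq[n]\}$. Indeed, two blocks with different $A$ are trivially disjoint; for a fixed $A$ the value $C\setminus V(A)$ is a function of $C$, so distinct $B$ give disjoint blocks; and conversely any $(A,C)$ lies in $\mathcal B_{A,B}$ for $B:=C\setminus V(A)$, which is automatically disjoint from $V(A)$, hence $(A,B)$ is valid. Therefore the subspaces $W_{A,B}$ are spanned by disjoint blocks of an orthonormal basis of $L^2(\Psi)$, which immediately gives that they are pairwise orthogonal and that their span is all of $L^2(\Psi)$. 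The case of $W'_{A,B}$ in $L^2(\Psi')$ is identical word for word, with $q$ replaced by $pq$ and $(x,G)$ by $(y,H)$.

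I do not anticipate a genuine obstacle; the proof is essentially bookkeeping (and indeed the fact is flagged as a direct consequence of orthogonality of Fourier characters plus the product structure). The one place requiring care is the disjointness accounting: one must invoke the validity hypothesis $V(A)\cap B=\emptyset$ exactly where the identity $\chi^q_B\chi^q_D=\chi^q_{B\cup D}$ is used, and one must verify that ``$B$ valid for $A$'' coincides with ``$B$ lies in the image of $C\mapsto C\setminus V(A)$'', so that the blocks neither overlap nor omit any basis element.
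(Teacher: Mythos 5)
Your proof is correct and is exactly the argument the paper has in mind: the paper gives no details, simply noting the fact follows from orthogonality of the Fourier characters and the product structure of $\Psi$, $\Psi'$, and your block decomposition $W_{A,B}=\Span\{G_A\cdot\chi^q_C : C\setminus V(A)=B\}$ together with the partition of index pairs $(A,C)$ is the natural way to make that precise, including the correct use of validity ($B\cap V(A)=\emptyset$) where the characters are merged.
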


We will require two results about these subspaces. 

\begin{claim}\label{claim:contr}
    For any valid pair $(A,B)$, $T^*$ is a $p^{\frac{|V(A)|+|B|}{2}}$-contractive mapping from $W'_{A,B}$ to $W_{A,B}$. That is, for any function $w\in W'_{A,B}$, $T^*w\in W_{A,B}$, and $\|T^*w\|_{2,\Psi}\leq p^{\frac{|V(A)|+|B|}{2}}\cdot \|w\|_{2,\Psi'}$.
\end{claim}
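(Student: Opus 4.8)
The plan is to peel off the random vertex permutation, compute the action of the resampling noise operator on a generating element of $W'_{A,B}$ by conditioning on the resampling mask $z$, and then deduce the claimed contraction from an elementary termwise comparison of two biased product measures. To strip the permutation, let $M^\circ$ be $M^*$ with its initial vertex permutation removed and $T^\circ$ the associated noise operator. Since the mask $z\sim\ber(p)^{\otimes n}$ and the fresh $\rad$ coins used to regenerate $H$ are exchangeable, permuting the vertices and then resampling has the same law as resampling and then permuting, so $T^*=\Pi_{\sym}\circ T^\circ$, where $\Pi_{\sym}$ is the orthogonal projection of $L^2(\Psi)$ onto the $S_n$-invariant functions. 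As $\Pi_{\sym}$ is an $L^2(\Psi)$-contraction, it suffices to show $T^\circ$ maps $W'_{A,B}$ into $W_{A,B}$ with $\|T^\circ w\|_{2,\Psi}\le p^{(|V(A)|+|B|)/2}\,\|w\|_{2,\Psi'}$; the statement for $T^*$ then follows, with $T^*w$ landing in the span of the $W_{\pi(A),\pi(B)}$, all of the same shape $(|V(A)|,|B|)$.

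Next I would compute $T^\circ$ on a basis element $w(y,H)=H_A\,\chi^{pq}_B(y)\,r(y_{V(A)})$ at an input $(x,G)$, conditioning on $z$. Every regenerated edge of $H$ (one with $z_iz_j=0$) contributes a conditionally mean-zero $\rad$ factor to $H_A$, so $\E[H_A\mid x,G,z]=G_A\cdot\mathbf{1}[z_i=1\text{ for all }i\in V(A)]$, and on that event $(x\wedge z)_{V(A)}=x_{V(A)}$. Validity of $(A,B)$ makes the blocks $V(A)$ and $B$ disjoint, so the remaining $z$-expectation factors; using $\Pr[z_{V(A)}=\mathbf{1}]=p^{|V(A)|}$ together with the one-coordinate identity $\E_{z_i\sim\ber(p)}[(x_iz_i-pq)/\sqrt{pq(1-pq)}]=\sqrt{p(1-q)/(1-pq)}\cdot\chi^q_{\{i\}}(x)$, one obtains
\[ T^\circ w(x,G)=p^{|V(A)|}\bigl(p(1-q)/(1-pq)\bigr)^{|B|/2}\;G_A\,\chi^q_B(x)\,r(x_{V(A)}), \]
which in particular lies in $W_{A,B}$ (cf. \cref{fact:sub-orth} and the definition of these spaces).

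Finally I would pass to norms. The three factors of $w$ (resp.\ of $T^\circ w$) depend on disjoint coordinate blocks, $H$ (resp.\ $G$) is an independent uniform sign vector, and $\chi^{pq}_B$ (resp.\ $\chi^q_B$) is a unit vector, so $\|w\|_{2,\Psi'}^2=\|r\|_{2,\ber(pq)^{\otimes V(A)}}^2$ and $\|T^\circ w\|_{2,\Psi}^2=p^{2|V(A)|}\bigl(p(1-q)/(1-pq)\bigr)^{|B|}\|r\|_{2,\ber(q)^{\otimes V(A)}}^2$. Since $(1-q)/(1-pq)\le 1$, the claim reduces to $p^{m}\|r\|_{2,\ber(q)^{\otimes V(A)}}^2\le\|r\|_{2,\ber(pq)^{\otimes V(A)}}^2$ with $m=|V(A)|$, which I would verify termwise in the expansion over $w\in\{0,1\}^{V(A)}$: for each $w$, $(pq)^{|w|}(1-pq)^{m-|w|}\ge p^{m}q^{|w|}(1-q)^{m-|w|}$, equivalently $\bigl((1-pq)/(p(1-q))\bigr)^{m-|w|}\ge 1$, which holds since $m-|w|\ge 0$ and $(1-pq)/(p(1-q))\ge 1\iff p\le 1$.

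The main obstacle is not any single calculation but spotting the right slack: $W'_{A,B}$ carries the $pq$-biased measure on the $V(A)$-block while $W_{A,B}$ carries the $q$-biased one, so the induced map on the $r$-component is not an isometry, and it is precisely the factor $p^{|V(A)|}$ produced by resampling every $V(A)$-vertex of the clique that absorbs this mismatch through the termwise inequality above. Cleanly quotienting out the permutation (the $\Pi_{\sym}$ step) is the other point requiring a little care, but it is routine once one observes $T^*=\Pi_{\sym}\circ T^\circ$.
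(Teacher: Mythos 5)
Your proof is correct and takes essentially the same route as the paper: condition on the resampling mask $z$, use disjointness of $V(A)$ and $B$ to factor the expectation, compute the image exactly (your per-coordinate factor $\sqrt{p(1-q)/(1-pq)}$ is the correct value; the paper's $\sqrt{p(1-p)/(1-pq)}$ is a typo, and both are $\leq\sqrt{p}$), and absorb the $\ber(pq)$-to-$\ber(q)$ change of measure on the $V(A)$-block into the factor $p^{|V(A)|}$ --- your termwise comparison is exactly equivalent to the paper's bound via $\left\|\frac{d\ber(q)}{d\ber(pq)}\right\|_\infty^{|V(A)|}=p^{-|V(A)|}$. Your explicit factorization $T^*=\Pi_{\sym}\circ T^\circ$ is an extra bit of care the paper's proof omits (it silently ignores the permutation step); with the permutation, $T^*w$ only lands in the span of the same-shape subspaces $W_{\pi(A),\pi(B)}$ rather than in $W_{A,B}$ itself, which is all that is needed downstream, and the contraction bound survives since $\Pi_{\sym}$ is an $L^2(\Psi)$-contraction.
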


\begin{claim}\label{claim:anti-low}
    Any $S_n$-symmetric function in $\Span\{W_{A,B}:|V(A)|+|B|\leq 4d,(A,B)\}$ is $((2c)^{d^2},1/2)$-anticoncentrated for some absolute constant $c$.
\end{claim}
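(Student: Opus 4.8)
The plan is to reduce anticoncentration to a fourth-moment bound via the Paley–Zygmund inequality (Lemma~\ref{lem:paley-zygmund}), applied under the planted distribution $\pla$. Concretely, write $W_{\le 4d} := \Span\{W_{A,B} : |V(A)|+|B|\le 4d\}$, and let $w \in W_{\le 4d}$ be $S_n$-symmetric. Using Observation~\ref{obs:equiv} and Fact~\ref{fact:sub-orth} I will identify $w$ (as a random variable over $\Psi$, equivalently over $\pla$) with a bounded-degree function and aim to establish a hypercontractive estimate of the form $\|w\|_{4,\Psi} \le K^{d}\cdot \|w\|_{2,\Psi}$ for some absolute constant $K$; feeding $\E[w^4]\le K^{4d}\E[w^2]^2$ into Paley–Zygmund with $\theta = 1/4$ yields that $w$ is $\big((1-1/2)^2 K^{-4d}, 1/2\big)$-anticoncentrated, i.e. $((2c)^{d^2},1/2)$-anticoncentrated once we absorb constants (the $d^2$ rather than $d$ in the exponent will come from the next paragraph, where the hypercontractivity constant itself degrades).

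The heart of the argument is the hypercontractive estimate. Here I would exploit the explicit structure of $W_{A,B}$: a basis element is $(x,G)\mapsto G_A\cdot \chi^q_B(x)\cdot r(x_{V(A)})$ with $|V(A)|+|B|\le 4d$, so it is a product of (i) a boolean Fourier character $G_A$ of degree $|A|\le \binom{|V(A)|}{2} = O(d^2)$ in the uniform bits $G$, (ii) a $q$-biased character $\chi^q_B$ of degree $|B|\le 4d$ on the $n$ coordinates of $x$, and (iii) an arbitrary function $r$ on the $O(d)$ coordinates $x_{V(A)}$, which has \emph{total} degree at most $|V(A)| = O(d)$ on those few coordinates regardless of its form. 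Since $\Psi$ is a product distribution, I can apply hypercontractivity coordinate-block by coordinate-block: Bonami's Lemma (Lemma~\ref{lem:bonami}) on the $G$-block (contributing a factor $3^{O(d^2)}$ because of the $O(d^2)$ degree), and the $S_n$-symmetric $p$-biased hypercontractivity of Lemma~\ref{lem:p-biased} on the $x$-block — this is exactly where $S_n$-symmetry is used, and where we need $4d \le nq = k$, which holds since $d = o(k)$. The factor $r(x_{V(A)})$ lives on $O(d)$ coordinates and contributes at worst a further $O(1)^{O(d)}$ factor (e.g. via a crude bound comparing $\|r\|_4$ and $\|r\|_2$ on a product of $O(d)$ bounded Bernoullis, or by noting it has degree $O(d)$ there). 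Some care is needed because $w$ is a \emph{sum} over many $(A,B)$ and many choices of $r$, not a single basis element: I would group terms by the value of $(|V(A)|,|B|)$ — using $S_n$-symmetry to argue the pieces with different "shapes" are mutually orthogonal and each piece is itself hypercontractive — and then combine via the triangle inequality in $L^4$, losing only a $\poly(d)$ factor since there are only $\poly(d)$ shapes. Net, $\|w\|_{4,\Psi}\le (K)^{O(d^2)}\|w\|_{2,\Psi}$, which after taking fourth powers and plugging into Paley–Zygmund gives the claimed $((2c)^{d^2},1/2)$-anticoncentration.

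The main obstacle I anticipate is handling the arbitrary factor $r(x_{V(A)})$ cleanly and uniformly: it is not a low-degree object in the usual sense, and different basis elements within a single $W_{A,B}$ use different $r$'s, so the decomposition of a general symmetric $w\in W_{\le 4d}$ into hypercontractive summands is the delicate bookkeeping step. I would resolve it by observing that any function on the $O(d)$-dimensional cube $\{0,1\}^{V(A)}$ trivially satisfies a $(2,4)$-hypercontractive inequality with constant $2^{O(d)}$ (there are only $2^{O(d)}$ Fourier characters, each of degree $\le |V(A)|$, and one applies $p$-biased Bonami with the worst-case degree), so $r$ behaves, for our purposes, exactly like a degree-$O(d)$ function. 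A secondary subtlety is verifying that the three coordinate-blocks ($G$, the $x$-coordinates outside $V(A)$, and $x_{V(A)}$) genuinely factor under $\Psi=\ber(q)^{\otimes n}\times G(n,1/2)$ so that the tensorization-of-hypercontractivity principle applies — this is immediate from the product structure but should be stated explicitly. Everything else is routine given Lemmas~\ref{lem:bonami}, \ref{lem:p-biased}, and~\ref{lem:paley-zygmund}.
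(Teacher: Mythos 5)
Your reduction to a moment bound via Paley--Zygmund is the right shape, but the way you propose to prove the hypercontractive estimate has a genuine flaw: block-by-block tensorization with absolute constants is false in the sparse biased regime. The $x$-variables are $q$-biased with $q=k/n=o(1)$, and neither the factor $\chi^q_B$ nor the arbitrary factor $r(x_{V(A)})$ is hypercontractive with constants independent of $q$. Concretely, your claim that any function on the $O(d)$ coordinates $x_{V(A)}$ ``trivially'' satisfies $(2,4)$-hypercontractivity with constant $2^{O(d)}$ is wrong: take $r(x_{V(A)})=\mathbf{1}[x_{V(A)}=\vec{1}]$, for which $\|r\|_{4,\Psi}/\|r\|_{2,\Psi}=q^{-|V(A)|/4}=n^{\Omega(d)}$; likewise a single biased character has $\|\chi^q_B\|_4/\|\chi^q_B\|_2\approx q^{-|B|/4}$ (the ``$p$-biased Bonami'' you invoke has a $p$-dependent constant that blows up as $p\to 0$). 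You also cannot apply \cref{lem:p-biased} to the $x$-block of an individual basis element, since that lemma requires $S_m$-symmetry and a single $G_A\chi^q_B\,r(x_{V(A)})$ is not symmetric --- indeed $G_A\cdot\mathbf{1}[x_{V(A)}=\vec 1]$ is itself terribly anticoncentrated, which shows no term-by-term or block-by-block argument can work; the symmetry of the aggregate must enter the analytic step itself. Your ``group by shape'' remark is a step in the right direction (projections of a symmetric function onto a shape class are symmetric), but your proposed proof of hypercontractivity for each shape piece is the same blockwise argument and hence does not go through. Nor can the loss be absorbed: an $n^{\Omega(d)}$ factor would make the anticoncentration probability $n^{-\Omega(d)}$ rather than $(2c)^{d^2}$ with $c$ an absolute constant, and downstream (\cref{thm:general-1} with constant $d$) the parameter $\gamma$ must be $\Omega(1)$, so the $n$-dependence is fatal.

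The paper's proof avoids exactly this trap by never asking for hypercontractivity of individual terms or of the sparse coordinates directly. For fixed $x$, $h(x,\cdot)$ is a degree-$O(d^2)$ polynomial in the uniform bits $G$, so two applications of Bonami (\cref{lem:bonami}) give $\E_G[h^8]\leq 9^{16d^2}\E_G[h^2]^4$; one then defines the aggregated function $s(x)=\E_G[h^2(x,G)]$, which inherits $S_n$-symmetry from $h$ and has degree at most $8d$ in $x$ (the arbitrary $r$'s live on at most $4d$ coordinates, so they only contribute to this degree bound and never need their own hypercontractivity), and applies the symmetric global hypercontractivity result \cref{lem:p-biased} to $s$, using $8d\leq nq$. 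Combining with Jensen yields $\|h\|_{8,\Psi}\leq C^{d^2}\|h\|_{4,\Psi}$, and Paley--Zygmund applied to $h^2$ (thresholding at $\|h\|_{2,\Psi}/2$ via $\|h\|_4\geq\|h\|_2$) finishes the claim. If you want to salvage your plan, you would need to replace the blockwise step with something of this flavor: integrate out $G$ first so that the symmetry of the whole function can be used on a single low-degree symmetric function of $x$ alone. (The $\theta$ bookkeeping in your Paley--Zygmund application is off in the paper's convention, but that is cosmetic compared to the issue above.)
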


We prove both of the above results in \cref{subsec:hyp-aux}. For now we will use them to obtain \cref{lemma:planted-clique-hyp}. Let $f\in L^2(\pla')$ be the given function that satisfies $\|Tf\|_{2,\pla}\geq p^d\cdot \|f\|_{2,\pla'}$. We will define $g\in L^2(\Psi')$ by $g(y,H) = f(H\wedge \clique(y))$. By \cref{obs:equiv}, we have $\|T^*g\|_{2,\Psi}\geq p^d\cdot \|g\|_{2,\Psi'}$.
We will use \cref{fact:sub-orth} to write $g=g_+ + g_-$, where
\[g_+\in \Span\{W_{A,B}:|V(A)|+|B|\leq 4d\},\quad g_-\in \Span\{W_{A,B}:|V(A)|+|B|> 4d\}.\]
By \cref{claim:contr}, we get
\begin{equation}\label{eqn:killed-g}
    \|T^*g_-\|_{2,\Psi}\leq p^{2d}\cdot \|g_-\|_{2,\Psi'}\leq p^{2d}\cdot \|g\|_{2,\Psi}\leq p^d\cdot \|T^*g\|_{2,\Psi}.
\end{equation}
Let us calculate the probability that $T^*g$ is at least $\|T^*g\|_{2,\Psi}/10$. We will lower bound it by the probability that both (a) $|T^*g_{+}|\geq \|T^*g\|_{2,\Psi}/4$ and (b) $|T^*g_{-}|< \|T^*g\|_{2,\Psi}/10$.

    \begin{enumerate}
        \item[(a)] \cref{eqn:killed-g} implies $\|T^*g_{+}\|_{2,\Phi}^2 = \|T^*g\|_{2,\Phi}^2-\|T^*g_{-}\|_{2,\Phi}^2\geq (1-p^{2d})\cdot \|T^*g\|_{2,\Phi}^2\geq \|T^*g\|_{2,\Phi}^2/2$. We conclude by \cref{claim:anti-low} that $|T^*g_{+}(x,G)|\geq \|T^*g(\Psi)\|_{2,\Psi}/4$ with probability at least $c^{d^2}$.
        \item[(b)] By Markov's inequality,
        \begin{align*}
        \Pr_{\Psi}[|Tg_{-}|>\|T^*g\|_{2,\Psi}/10]&=\Pr_{\Psi}[|Tg_{-}|^2>\|T^*g\|_{2,\Psi}^2/100]\\
        &\leq \frac{100\|T^*g_{-}\|_{2,\Psi}^2}{\|T^*g\|_{2,\Psi}^2}\\
        &\leq  100 p^{2d}.\tag{\cref{eqn:killed-g}}
    \end{align*}
    \end{enumerate}
    As a result, $T^*g$ is at least $\|T^*g\|_{2,\Psi}/10$ with probability $(2c)^{-d^2} - 100p^{2d} \geq c^{-d^2}$, where we used that $d=o(\log p^{-1})$. That is, $T^*g$ is $(c^{-d^2},0.1)$-anticoncentrated. By \cref{obs:equiv}, this is equivalent to the statement that $Tf$ is $(c^{-d^2}, 0.1)$-anticoncentrated, completing the proof.

\subsection{Proof of Auxiliary Claims}\label{subsec:hyp-aux}

It remains to prove \cref{claim:contr,claim:anti-low}.
\begin{proof}[Proof of \cref{claim:contr}]
    Let $w\in W'_{A,B}$, that is, $w(y,H)=H_A\cdot \chi_B^{pq}(y)\cdot r(y_{V(A)})$ for an arbitrary function $r$. We need to prove that $T^*w\in W_{A,B}$, and that $\|T^*w\|_{2,\Psi}\leq p^{\frac{|V(A)|+|B|}{2}}\cdot \|w\|_{2,\Psi'}$. Let us write an expression for $T^*w$. Let $z\sim \ber(p)^{\otimes n}$. Since $B$ is disjoint from $V(A)$, we can write
    \begin{align*}
      T^*w(x,G) &= \E_{z}[\E_{H\mid G,z}[H_A]\cdot \chi_B^{pq}(x\wedge z)\cdot r((x\wedge z)_{V(A)})]\\
      &= \left(\prod_{i\in B}\E_{z}[\chi_{\{i\}}^{pq}(x\wedge z)]\right)\cdot \E_{z}[\E_{H\mid G,z}[H_A]\cdot r((x\wedge z)_{V(A)})],
    \end{align*}
    We will consider each term separately. First, let $i\in B$. We have
    \begin{align*}
      \E_{z}[\chi_{\{i\}}^{pq}(x\wedge z)] &= \E_{z_i\sim \ber(p)}\left[\frac{x_iz_i - pq}{\sqrt{pq(1-pq)}}\right]\\
      &= \frac{p\cdot (x_i - p)}{\sqrt{pq(1-pq)}} = \sqrt{\frac{p\cdot (1-p)}{1-pq}}\cdot \chi_{\{i\}}^{q}(x).
    \end{align*}
    
    Next, we will simplify the second term in our expression for $T^*w(x,G)$. Recall that conditioned on values for $G$ and $z$, $H_{\{i,j\}}$ is set to $G_{\{i,j\}}$ if $z_i=z_j=1$, and an independent Rademacher otherwise. As a result we have $\E_{H\mid G,z}[H_A] = \mathbf{1}_{\{z_{V(A)}=1\}}\cdot G_A$, so
    \[\E_{z}[\E_{H\mid G,z}[H_A]\cdot r((x\wedge z)_{V(A)})]=p^{|V(A)|}\cdot G_A\cdot r(x_{V(A)}).\]
    Combining these two expressions, we can write
    \[T^*w(x, G)=p^{|V(A)|+|B|/2}\cdot \sqrt{\frac{1-p}{1-pq}}^{|B|}\cdot G_A\cdot \chi_B^{q}(x)\cdot r(x_{V(A)}).\]
    This proves that $T^*w\in W_{A,B}$. It remains to show that $T^*$ contracts the norm of $w$. 
    \begin{align*}
        \|T^*w\|_2^2&\leq p^{2|V(A)|+|B|}\cdot \E_{(x,G)\sim \Psi}\left[\left({G}_A\cdot \chi_B^q(x)\cdot r(x_{V(A)})\right)^2\right]\tag{$\sqrt{\frac{1-p}{1-pq}}\leq 1$}\\
        &=p^{2|V(A)|+|B|}\cdot \E_{x\sim \ber(q)^{\otimes n}}\left[ \chi_B^q(x)^2\cdot r(x_{V(A)})^2\right]\tag{$G_A\in \{\pm 1\}$}\\
        &= p^{2|V(A)|+|B|}\cdot \E_{x\sim \ber(q)^{\otimes n}}\left[r(x_{V(A)})^2\right].\tag{$B\cap V(A) = \emptyset$, $\E[\chi_B^q(x)^2]=1$}
    \end{align*}
    We will bound this further by noting that $\E_{x\sim \ber(q)^{\otimes n}}\left[r(x_{V(A)})^2\right] = \E_{X\sim \ber(q)^{\otimes |V(A)|}}[R(X)]$ for some nonnegative function $R$. We will bound
    \[\E_{X\sim \ber(q)^{\otimes |V(A)|}}[R(X)]\leq \left\|\frac{d\ber(q)^{\otimes |V(A)|}}{d\ber(pq)^{\otimes |V(A)|}}\right\|_\infty \E_{X\sim \ber(pq)^{\otimes |V(A)|}}[R(X)].\]
    Calculating $\left\|\frac{d\ber(q)^{\otimes |V(A)|}}{d\ber(pq)^{\otimes |V(A)|}}\right\|_\infty = \left\|\frac{d\ber(q)}{d\ber(pq)}\right\|_\infty^{|V(A)|}=p^{-|V(A)|}$, we get the bound
    \[\|T^*w\|_2^2\leq p^{|V(A)|+|B|}\cdot \E_{y\sim \ber(pq)^{\otimes n}}[r(y_{V(A)})^2].\]
    On the other hand, we can compute
    \[\|w\|_2^2 = \E_{(y,H)\sim \Psi'}[(H_A\cdot \chi_B^{pq}(y)\cdot r(y_{V(A)}))^2] = \E_{y\sim \ber(pq)^{\otimes n}}[r(y_{V(A)})^2]\;.\]
    We conclude that $\|T^*w\|_2\leq p^{\frac{|V(A)| + |B|}{2}}\cdot \|w\|_2$, completing the proof.
\end{proof}

\begin{proof}[Proof of \cref{claim:anti-low}]
    We must show that any $S_n$-symmetric function $h\in \Span\{W_{A,B}:|V(A)|+|B|\leq 4d\}$ is $((2c)^{d^2},1/2)$-anticoncentrated for some absolute constant $c$. We will prove an upper bound on $\frac{\|h\|_{8,\Psi}}{\|h\|_{4,\Psi}}$ and then use the Paley-Zygmund inequality (\cref{lem:paley-zygmund}) to obtain anticoncentration. Let us write 
    \[\|h\|_{8,\Psi}^8 = \E_\Psi[h^8] = \E_{x}[\E_{G}[h^8(x,G)]].\]
    \begin{observation}
        For any fixed $x$, the map $G\to h(x,G)$ is a degree $\leq \binom{|V(A)|}{2}\leq 8d^2$ function over the boolean hypercube $\{\pm 1\}^{\binom{[n]}{2}}$. As a consequence, $G\to h^2(x,G)$ is a degree $\leq 16d^2$ function.
    \end{observation}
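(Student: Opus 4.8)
The plan is to unwind the definition of the subspaces $W_{A,B}$ and observe that freezing $x$ collapses each basis element to a single Fourier monomial in $G$ of controlled degree. Recall that $W_{A,B}$ consists of functions of the form $(x,G)\mapsto G_A\cdot \chi^q_B(x)\cdot r(x_{V(A)})$ for $r\in L^2(\{0,1\}^{V(A)})$, and that $h$ lies in the span of $\bigcup\{W_{A,B} : |V(A)|+|B|\leq 4d,\ (A,B)\text{ valid}\}$. So I would first write $h=\sum_{(A,B)} w_{A,B}$ with $w_{A,B}\in W_{A,B}$ and $w_{A,B}=0$ unless $|V(A)|+|B|\leq 4d$. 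For a fixed $x$, the factors $\chi^q_B(x)$ and $r(x_{V(A)})$ are scalars, so $G\mapsto w_{A,B}(x,G)$ equals $c_{A,B}(x)\cdot G_A$ for some constant $c_{A,B}(x)\in\R$; that is, a single Fourier character of $L^2(\{\pm1\}^{\binom{[n]}{2}})$ of degree exactly $|A|$.

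Next I would bound $|A|$. Since every edge of $A$ has both endpoints in $V(A)$, we have $A\subseteq\binom{V(A)}{2}$ and hence $|A|\leq\binom{|V(A)|}{2}$. From $|V(A)|+|B|\leq 4d$ and $|B|\geq 0$ we get $|V(A)|\leq 4d$, so $|A|\leq\binom{|V(A)|}{2}\leq\binom{4d}{2}=8d^2-2d\leq 8d^2$. Therefore each $G\mapsto w_{A,B}(x,G)$ is a polynomial of degree at most $8d^2$, and summing over the finitely many pairs $(A,B)$ shows that $G\mapsto h(x,G)$ is a polynomial of degree at most $8d^2$ over the boolean hypercube.

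For the final clause, I would use the elementary fact that squaring a polynomial at most doubles its degree: writing $h(x,G)=\sum_{|S|\leq 8d^2}\widehat{h}_S(x)\,G_S$ in the $G$-Fourier basis, the product expansion gives $h(x,G)^2=\sum_{S,S'}\widehat{h}_S(x)\widehat{h}_{S'}(x)\,G_{S\triangle S'}$, and $|S\triangle S'|\leq|S|+|S'|\leq 16d^2$. Hence $G\mapsto h^2(x,G)$ has degree at most $16d^2$, as claimed.

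This argument is entirely structural and I do not anticipate a genuine obstacle; the only point requiring a moment's care is the inequality $|A|\leq\binom{|V(A)|}{2}$ together with the bound $|V(A)|\leq 4d$, both of which follow directly from the definition of a valid pair and the truncation $|V(A)|+|B|\leq 4d$. (Note in particular that the $B$-part and the $r(x_{V(A)})$-part contribute nothing to the $G$-degree, which is why only $|V(A)|$ — and not the full $4d$ — enters the bound.)
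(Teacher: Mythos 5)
Your proposal is correct and is exactly the argument the paper intends (the paper states this as an unproved observation inside the proof of \cref{claim:anti-low}): freezing $x$ turns each $W_{A,B}$ component into a scalar multiple of the character $G_A$, and $|A|\leq\binom{|V(A)|}{2}\leq\binom{4d}{2}\leq 8d^2$ since $|V(A)|\leq 4d$, with squaring doubling the degree. Nothing is missing.
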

    Consequently for any $x$ we can use Bonami's lemma (\cref{lem:bonami}) twice to bound the inner expectation by
    \[\E_{G\sim G(n,1/2)}[h^8(x,G)]\leq 9^{8d^2}\E_{G\sim G(n,1/2)}[h^4(x,G)]^2\leq 9^{16d^2}\E_{G\sim G(n,1/2)}[h^2(x,G)]^4.\]
    Now consider the function $s(x) = \E_{G}[h^2(x,G)]$ under the $q$-biased hypercube, that is, under $x\sim \ber(q)^{\otimes n}$. Since $h$ is symmetric under permutations $\pi\in S_n$, $s$ is also $S_n$-symmetric and has degree $\leq 8d = o(np)$, so we can apply \cref{lem:p-biased} to obtain $\E[s^4(x)\leq 8^{32d}\cdot \E[s^2(x)]^2$. Combining these two bounds,
    \begin{align*}
      \|h\|_{8,\Psi}^8&\leq 9^{16d^2}\cdot \E_x[s^4(x)]\\
      &\leq 9^{16d^2} 8^{32d}\cdot\E_x[s^2(x)]^2\\
      &= 9^{16d^2} 8^{32d}\cdot\E_x[\E_{G}[h^2(x,G)]^2]^2\\
      &\leq 9^{16d^2} 8^{32d}\cdot\E_{x}[\E_{G}[h^4(x,G)]]^2 = 9^{16d^2} 8^{32d}\cdot\|h\|_{4,\Psi}^8.\tag{Jensen's inequality}
    \end{align*}
    We will apply the Paley-Zygmund inequality (\cref{lem:paley-zygmund}) on $h^2$ to obtain anticoncentration of $h$.
    \begin{align*}
        \Pr_\Psi[|h|\geq \|h\|_{2,\Psi}/2]&\geq \Pr_\Psi[|h|\geq \|h\|_{4,\Psi}/2]\tag{Jensen's inequality}\\
        &= \Pr_\Psi[h^2\geq \|h^2\|_{2,\Psi}/4]\\
        &\geq \frac{\E_\Psi[h^4]^2}{4\E_\Psi[h^8]}\tag{\cref{lem:paley-zygmund}}\\
        &\geq \frac{1}{4\cdot 9^{16d^2} 8^{32d}}\geq (2c)^{d^2}.\tag{for $2c=(4\cdot 9^{16}\cdot 8^{32})^{-1}$}
    \end{align*}
\end{proof}

\section{Proof of Main Theorems for Planted Clique}\label{sec:application}

In this section we will prove our results on Planted Clique. Having already proved all the pieces in previous sections, this section will not introduce any new ideas; it will mostly involve invoking these pieces with an appropriate choice of parameters.

\subsection{Central Theorem}
We will begin by stating and proving a formal version of \cref{thm:perturbation}, which will imply all our other results.

\begin{theorem}\label{thm:perturbation-full}
    Assume the Planted Clique Hypothesis (\cref{conj:planted}). Let $0<\alpha<\beta$, and let $d\in \N$. Consider $\pla=G(n,1/2,n^{1/2-\alpha})$ and $\nul=G(n,1/2)$. Let $\pla'$ be another distribution over $\Omega$. Define $\pla^*=M^{\sym}(\pla')$, where $M$ is the permuted vertex-resampling Markov chain with noise rate $1-p$ for $p=n^{\alpha-\beta}$. For every randomized polynomial time test $A$,
    \[\algadv^{(\pla^*,\nul)}(A)\leq \var_{\nul}(\Pi_{\F_{\leq d}} f)^{1/2}\cdot\adv^{(\pla^*,\nul)}[\F_{\leq d}] + O(n^{-(\beta-\alpha)d'})\cdot \left\|\frac{d\pla'}{d\pla}\right\|_\infty,\]
    where $f(G) = \mathbb{E}_{A}[A(G)]$, $\var_{\nul}(\Pi_V f) = \|\Pi_V f - \E_{\nul}[\Pi_V f]\|_{2,\nul}^2$, and $d'$ is the minimum number of vertices in any simple graph with $d+1$ edges. In particular, we have $\var_{\nul}(\Pi_{\F_{\leq d}} f)^{1/2}\leq \|f\|_{2,\nul}\leq 1$ and $d'\geq \sqrt{d}$, implying the simplified statement
        \[\algadv^{(\pla^*,\nul)}(A)\leq \adv^{(\pla^*,\nul)}[\F_{\leq d}] + O(n^{-(\beta-\alpha)\sqrt{d}})\cdot \left\|\frac{d\pla'}{d\pla}\right\|_\infty.\]
\end{theorem}
\begin{proof}[Proof of \cref{thm:perturbation-full}]
    We will argue the contrapositive. Suppose there is a polynomial time computable test $A$ such that for infinitely many values of $n$,
    \begin{equation}\label{eqn:a-adv-application}
      \algadv^{(\pla^*,\nul)}(A)\geq \| \Pi_{\F_{\leq d}} f\|_{2,\nul}\cdot\adv^{(\pla^*,\nul)}[\F_{\leq d}]+\omega(n^{-(\beta-\alpha)d'})\cdot \left\|\frac{d\pla'}{d\pla}\right\|_\infty  
    \end{equation}
    In light of \cref{thm:hirahara}, we would like to conclude that there is a polynomial time computable test $B$ such that for infinitely many values of $n$, 
    \[\algadv^{(\pla,\nul)}(B)=\Omega(1),\]
    which would imply that the Planted Clique Hypothesis is false.
    To do this, we will apply \cref{thm:general-1} with an appropriate choice of parameters. We will set
    \[\eps=p^{d'}= n^{(\alpha-\beta)d'},\quad q=n^{2d},\quad \gamma = c^{((4d/(\beta-\alpha))^2}\]
    for $c$ as in \cref{lemma:planted-clique-hyp}. We will set $V=\F_{\leq d}$ to be the space of degree $\leq d$ polynomials. Let $T$ be the noise operator associated with $M^{\sym}$.
    
    \begin{fact}\label{fact:inclusion}
        For any $0<p<1$ and $d$, $\F_{\leq d}$ contains the top eigenspace $\lambda_{> \eps}(T)$ for $\eps = p^{d'}$, where $d'$ is the minimum number of vertices an any simple graph with $d+1$ edges.
    \end{fact}
    
    Let us verify all requirements of \cref{thm:general-1}.

    \begin{itemize}
        \item One can sample from $\nul$, and sample $H\sim M^{\sym}(G)$ in time $O(n^2) = O(q)$.
        \item $V=\F_{\leq d}$ is $\binom{n}{2}^d\leq q$-tractable.
        \item By \cref{lemma:planted-clique-hyp}, if $\|Tf\|_{2,\pla}\geq \eps/q\cdot \|f\|_{2,M^{\sym}(\pla)}\geq n^{-4d}\cdot \|f\|_{2,M^{\sym}(\pla)}$, we have that $Tf(\pla)$ is $(\gamma,0.1)$-anticoncentrated by definition of $\gamma$.
    \end{itemize}
    By \cref{eqn:a-adv-application}, we have $\algadv^{(\pla^*,\nul)}(A)\geq \adv^{(\pla^*,\nul)}(\F_{\leq d})+\omega(\delta)\cdot \left\|\frac{d\pla'}{d\pla}\right\|_\infty$, where $\delta=1600\gamma^{-1}\eps = O(p^{d'}) = O(n^{-(\beta-\alpha)d'})$. Therefore we can apply \cref{thm:general-1} to deduce the existence of an efficiently computable test $B$ achieving $\algadv^{(\pla,\nul)}(B)\geq \gamma=\Omega(1)$ for infinitely many $n$, as required.
\end{proof}

\subsection{Implications for Vanilla Planted Clique}

\cref{thm:main-1} follows directly from \cref{thm:perturbation-full} -- it is exactly equivalent to the simplified statement in the special case that $\pla' = \pla$. Next we will deduce \cref{cor:ldlr-value}.

\begin{proof}[Proof of \cref{cor:ldlr-value}]
  Let us set $d=1$. We have that $d'$, the minimum number of vertices in any simple graph with $d+1$ edges, equals $3$. The following claim is a simple calculation which we defer to \cref{sec:a3}.
    \begin{claim}\label{claim:ldlr-exact}
    Let $\pla=G(n,1/2,k)$ and $\nul=G(n,1/2)$ for $k=n^{1/2-\beta}$ for some $\beta>0$. We have $\adv^{(\pla,\nul)}[\F_{\leq d}] = (1+o_n(1))\cdot k^2/(\sqrt{2}n)$ for any constant $d\geq 1$.
    \end{claim}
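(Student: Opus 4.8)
## Proof Proposal for Claim~\ref{claim:ldlr-exact}

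The plan is to use Claim~\ref{claim:ldlr-expression}, which reduces computing $\adv^{(\pla,\nul)}[\F_{\leq d}]$ to evaluating $\sqrt{\sum_{0<|S|\leq d}\E_\pla[\chi_S]^2}$, where $S$ ranges over subsets of $\binom{[n]}{2}$ (i.e.\ sets of edges) and $\chi_S(G)=G_S$. So the first step is to compute $\E_\pla[\chi_S]$ for a fixed edge-set $S$. Recall that a sample $G\sim \pla$ is obtained by drawing $x\sim\ber(k/n)^{\otimes n}$ and a uniform $G'\sim G(n,1/2)$, then setting $G=G'\wedge\clique(x)$. For a single edge $e=\{i,j\}$, $\E[G_e]=\Pr[x_i=x_j=1]\cdot 1 + (1-\Pr[x_i=x_j=1])\cdot 0 = (k/n)^2$ (in $\{0,1\}$ coordinates); converting to $\pm1$ coordinates, $\E[G_e] = (k/n)^2$ as the bias, and for a product $G_S=\prod_{e\in S}G_e$ the contributions factor according to which vertices are forced into the clique. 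Concretely, writing $V(S)$ for the set of endpoints touched by $S$, one gets $\E_\pla[\chi_S] = (k/n)^{|V(S)|}$ when every connected component of the graph $(V(S),S)$ is a clique (so that ``all endpoints in the planted set'' is exactly the event making all these edges present), and more care is needed in general — but the dominant term will come from $S$ being a single edge.

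The second step is the asymptotic estimate. Summing over $S$ with $0<|S|\leq d$, group by the structure of $S$. The leading contribution is from $|S|=1$: there are $\binom{n}{2}$ single edges, each contributing $\E_\pla[\chi_S]^2 = (k/n)^4$, giving $\binom{n}{2}(k/n)^4 = (1+o(1))\cdot\frac{n^2}{2}\cdot\frac{k^4}{n^4} = (1+o(1))\frac{k^4}{2n^2}$. Then I would show every other term is lower order: for $|S|=2$ (two edges), the number of such $S$ is $O(n^4)$ but each forces $|V(S)|\in\{3,4\}$ vertices, contributing $(k/n)^{2|V(S)|}$, so the total is $O(n^4\cdot (k/n)^6) = O(k^6/n^2)=o(k^4/n^2)$ since $k=n^{1/2-\alpha}\to$ makes $k^2/n^2\cdot k^4/n^2 \ll k^4/n^2$... more simply, each higher term carries extra factors of $k^2/n \ll 1$. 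Since $d$ is a constant, there are only $O(1)$ ``shapes'' of $S$ to consider and the geometric-type decay in the number of edges dominates the polynomial growth in the count of sets. Hence $\sum_{0<|S|\leq d}\E_\pla[\chi_S]^2 = (1+o(1))\frac{k^4}{2n^2}$, and taking the square root gives $\adv^{(\pla,\nul)}[\F_{\leq d}] = (1+o(1))\frac{k^2}{\sqrt2\, n}$.

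The main obstacle, such as it is, is bookkeeping rather than anything deep: one must carefully compute $\E_\pla[\chi_S]$ for a general edge-set $S$ (including non-clique components, isolated structure, etc.) and then verify uniformly over the constantly many isomorphism types of $S$ with $|S|\leq d$ that all terms beyond $|S|=1$ are $o(k^4/n^2)$. The key quantitative input is that $k=n^{1/2-\alpha}$ forces $k^2/n = n^{-2\alpha} = o(1)$, so each additional edge in $S$ (which forces at least one or two additional clique vertices, hence at least two extra factors of $k/n$ in $\E_\pla[\chi_S]$, i.e.\ at least four extra factors in the square) costs a factor that beats the at most $n^{2}$ extra choices of where to put that edge. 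I would organize this as: (i) a lemma giving the exact formula $\E_\pla[\chi_S]=(k/n)^{\,|V(S)|}$ when $(V(S),S)$ is a disjoint union of cliques and $0$ or a smaller power otherwise; (ii) the $|S|=1$ computation; (iii) a crude union bound showing $\sum_{|S|\geq 2}\E_\pla[\chi_S]^2 = o(k^4/n^2)$; (iv) combine and take the square root.
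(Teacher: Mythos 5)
Your proposal is correct and follows essentially the same route as the paper's calculation in \cref{sec:a3}: the paper quotes the standard formula $\adv^{(\pla,\nul)}[\F_{\leq d}]^2=\sum_{0<|S|\leq d}(k/n)^{2|V(S)|}$ (which you re-derive; note it holds exactly for every nonempty $S$ with no clique-component caveat, since $\E[\chi_S\mid x]=\mathbf{1}[V(S)\subseteq\{i:x_i=1\}]$), isolates the $\binom{n}{2}(k/n)^4$ contribution of single edges, and bounds everything else by grouping according to $|V(S)|\geq 3$. One slip to fix: your intermediate bound $O(n^4\cdot(k/n)^6)=o(k^4/n^2)$ is false as written (it equals $O(k^6/n^2)\gg k^4/n^2$); the count must be paired with the matching exponent, $\sum_{v\geq 3}O(n^v)\cdot(k/n)^{2v}=O(k^6/n^3)=o(k^4/n^2)$ using $k^2/n=o(1)$, which is exactly the grouping-by-$|V(S)|$ accounting you describe in your final paragraph.
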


    Set $\alpha = \beta/10$ for some arbitrary $0<\beta < 1$. We will apply the strong bound in \cref{thm:perturbation-full} for $\pla'=\pla$. We have $\pla^* = G(n, 1/2, k)$ for $k=n^{1/2-\beta}$. Let $A$ be a randomized polynomial time test. Since $\pla^*$ and $\nul$ are both invariant under permuting the $n$ vertices of the graph, we can assume at no cost to $A$'s advantage that $A$ is also invariant under permuting the vertices of the input graph. Applying \cref{thm:perturbation-full}, we conclude that for any randomized polynomial time test $A$,
    \[\algadv^{(\pla^*, \nul)}(A)\leq \var_{\nul}(\Pi_{\F_{\leq 1}} f)^{1/2}\cdot \frac{k^2}{\sqrt{2}n} + O(n^{-0.9\cdot \beta\cdot 3}) = \var_{\nul}(\Pi_{\F_{\leq 1}} f)^{1/2}\cdot \frac{k^2}{\sqrt{2}n} + o\left(\frac{k^2}{\sqrt{n}}\right).\]
    To complete the proof, we will argue that $\var_{\nul}(\Pi_{\F_{\leq 1}} f)^{1/2} \leq \sqrt{\frac{2}{\pi}} + o_n(1)$.

    \begin{claim}\label{claim:linear-symmetric}
        Let $p:\{\pm 1\}^T\to [-1,1]$ be a bounded function over the $m$-dimensional boolean hypercube with Fourier decomposition $p=\sum_{S\subseteq T}\hat{p}_S\cdot \chi_S$. If its degree-1 part $p_{=1}=\sum_{i\in T}\hat{p}_{\{i\}}\chi_{\{i\}}$ is symmetric under permuting the $|T|$ coordinates, then $\|p_{=1}\|_{2,\nul}\leq \sqrt{2/\pi} + o_{|T|}(1) $ where $\nul = \unif(\{\pm 1\}^T)$.
    \end{claim}
    \begin{proof}
        By symmetry, we have $\hat{p}_{\{i\}} = \hat{p}_{\{j\}}$ for all $i,j\in T$. So, 
        \begin{align*}
            |\hat{p}_{\{j\}}| &= |T|^{-1}\cdot \left|\sum_{i\in T}\hat{p}_{\{i\}}\right|\\
            &= |T|^{-1}\cdot \left|\mathbb{E}_{x\sim \{\pm 1\}^T}[p(x)\cdot \sum_{i\in T}x_i]\right|\\
            &\leq |T|^{-1}\cdot \E_{x\sim \{\pm 1\}^T} \left|\sum_{i\in T}x_i\right|\tag{$p$ has range $[-1,1]$}
        \end{align*}

        By the Central Limit Theorem, we have that the expected value of $\left|\sum_{i\in T}x_i\right|$ is at most $(1+o_{|T|}(1))\cdot \sqrt{|T|}\cdot \E_{g\sim N(0,1)}[|g|] = (1+o_{|T|}(1)){}\sqrt{|T|}\cdot \sqrt{\frac{2}{\pi}}$. Therefore we have 
        \[\|p_{=1}\|_{2,\nul} = \sqrt{\sum_{i\in T}\hat{p}_{\{i\}}^2} \leq (1+o_{|T|}(1))\cdot \sqrt{\frac{2}{\pi}}.\]
    \end{proof}
    We will apply \cref{claim:linear-symmetric} to $p=f-\mathbb{E}_\nul[f]$ for $T = \binom{[n]}{2}$. Since $f$ is symmetric under permuting the vertices of the input graph, its degree-1 Fourier coefficients are indeed symmetric under permuting the $\binom{n}{2}$ potential edges -- for any two edges $e,e'\in \binom{[n]}{2}$, there is a permutation on vertices that sends $e$ to $e'$. Therefore, we get $\var_{\nul}(\Pi_{\F_{\leq 1}}f)^{1/2} = \|p_{=1}\|_{2,\nul}\leq \sqrt{2/\pi} + o_n(1)$. We have now proved the desired bound
    \[\algadv^{(\pla^*, \nul)}(A)\leq (1+o_n(1))\cdot \frac{k^2}{\sqrt{\pi}n}.\]
\end{proof}

\subsection{Hard-core Lemma style results}

Finally, we prove our hard-core lemma style results, beginning with \cref{thm:intro-moment-match}, which states that one can find a perturbation $\pla'$ of $\pla$ that is arbitrarily hard for low-degree polynomials. We will do this by invoking \cref{thm:general-2}.

\begin{proof}[Proof of \cref{thm:intro-moment-match}]
    We will simply apply \cref{thm:general-2} to $\pla = G(n,1/2,k)$ for $k=n^{1/2-\alpha}$ to $V=\F_{\leq d}$. Set $q(n) = n^{O(d)}$ and $c=\sqrt{3}^d=O(1)$. Let us check that the conditions of the \cref{thm:general-2} hold.
    \begin{itemize}
        \item One can sample from $\pla$ in time $O(n^2)\leq q(n)$.
        \item $V$ is indeed $\binom{n}{2}^d \leq q$-tractable.
        \item For hypercontractivity, we will use Bonami's lemma (\cref{lem:bonami}) on $f^2\in \F_{\leq 2d}$, which states that $\|f^2\|_{4,\nul}\leq 3^{d}\cdot\|f^2\|_{2,\nul}$ for $f\in \F_{\leq d}$. Let us bound
        \[ \E_\nul[f^8] = \|f^2\|_{4,\nul}^4\leq 3^{4d}\cdot \E_\nul[f^4]^2.\]
        So, $\|f\|_{8,\nul}\leq \sqrt{3}^d\cdot \|f\|_{4,\nul}$.
        \item We have by \cref{claim:ldlr-exact} that $\adv^{(\pla,\nul)}[\F_{\leq d}] = o_n(1) = o(c)$.
    \end{itemize}
    The statement of \cref{thm:general-2} then directly implies \cref{thm:intro-moment-match}.
\end{proof}

Finally, we show how to combine \cref{thm:perturbation-full,thm:intro-moment-match} to get \cref{thm:main-clique}.

\begin{proof}[Proof of \cref{thm:main-clique}]
Let $\alpha>0$, and set $\beta=2\alpha$. We will set $d = 10D^2/\alpha$ and $\delta = n^{-2D}$. Let us apply \cref{thm:intro-moment-match} to obtain $\pla'$ satisfying $\left\|\frac{d\pla'}{d\pla}\right\|_\infty = 1+o_n(1)$ and $\R^{(\pla',\nul)}(A)\leq \delta$. Applying \cref{thm:perturbation-full} to $\pla'$, we obtain $\pla^*$ so that for any efficient test $A$,
\[\algadv^{(\pla^*,\nul)}(A)\leq \adv^{(\pla^*,\nul)}[\F_{\leq d}] + O(n^{-2D})\leq \adv^{(\pla',\nul)}[\F_{\leq d}] + O(n^{-2D}) \leq \delta + O(n^{-2D}).\]
By definition of $\delta$, this is $O(n^{-2D}) = o(n^{-D})$ as required. Note that $\pla^*$ does not contain a large clique with probability $1$. Instead, we can show that $T(\pla)$ contains a clique of size, say $n^{1/2-10\beta}$ except with exponentially small probability $\exp(-O(n))$. Since $\|\tfrac{d\pla^*}{M^{\sym}(\pla)}\|_\infty\leq \|\tfrac{d\pla'}{\pla}\|_\infty=1+o_n(1)$, this implies that $\pla^*$ also contains a clique of size $n^{1/2-10\beta}$ with probability $1-\exp(-O(n))$. So the sampler can keep track of the sampled clique and use rejection sampling to ensure that the resulting graph has a large clique with probability $1$.

This results in a $\exp(-O(n))$ perturbation to $\pla^*$ in TV distance, which can only change the value of $\algadv^{(\pla^*,\nul)}(A)$ by $\exp(-O(n))$, and the theorem follows.
\end{proof}


\printbibliography
\appendix

\section{Proof of \cref{claim:approx-high-degree-part}}\label{sec:a2}

In this section we present the proof of \cref{claim:approx-high-degree-part}. Recall that our goal is to approximate $Tf_{-}(x)$, where $f(x) = \E_{A}[A(x)]$ is the average output of $A$. We will do this by computing $A$ on a polynomial number of inputs.

Recall that the function $f_+ = f-f_-$ is in the $q$-tractable (\cref{def:tractable-subspace}) subspace $V$, i.e., it can be written as $f_{+}=\sum_{0\leq i\leq \ell}c_i\cdot f_i$ for real-valued coefficients $c_i = \E_{\nul}[f_i\cdot f]$, where $\{f_1,\ldots, f_\ell\}$ is an orthonormal basis of $V$ that can be efficiently evaluated.

We begin by producing a large collection of independent samples $\mathbf{z}=\{z_1,\ldots, z_k\}$ from $\nul$. Define the function $g^\mathbf{z}(y) = A(y)-\sum_{0\leq i\leq \ell}f_i(y)\cdot \E_{j\sim [k]}[f_i(z_j)\cdot A(z_j)]$. We will argue that $g^z(y)$ is close to $f_{-}$. In the interest of conciseness, we will use very crude bounds.
\begin{align*}
    \E_{\mathbf{z}}[|g^{\mathbf{z}}(y) - f_{-}(y)|]&= \E_{\mathbf{z}}\left[\left|\sum_{0\leq i\leq \ell}f_i(y)\cdot (\E_{j\sim [k]}[f_i(z_j)\cdot A(z_j)] - \E_\nul[f_j\cdot f])\right|\right]\\
    &\leq \sum_{0\leq i\leq \ell}|f_i(y)|\cdot \E_\mathbf{z}\left[\left|\E_{j\sim [k]}[f_i(z_j)\cdot A(z_j)]-\E_\nul[f_i\cdot f]\right|\right]\\
    &\leq \sum_{0\leq i\leq \ell}|f_i(y)|\cdot \E_\mathbf{z}\left[\left(\E_{j\sim [k]}[f_i(z_j)\cdot A(z_j)]-\E_\nul[f_i\cdot f]\right)^2\right]^{1/2}\tag{Jensen's inequality}\\
    &= \sum_{0\leq i\leq \ell}|f_i(y)|\cdot \Var_\mathbf{z}\left[\E_{j\sim [k]}[f_i(z_j)\cdot A(z_j)]\right]^{1/2}\\
    &\leq k^{-1/2}\sum_{0\leq i\leq \ell}|f_i(y)|\cdot \Var_{z_1\sim \nul}\left[f_i(z_1)\cdot A(z_1)]\right]^{1/2}\\
    &\leq k^{-1/2}\cdot (\ell + 1)\cdot \max_{i}\|f_i\|_\infty^2\leq 2k^{-1/2}q^3.\tag{$\ell\leq q$, $\|f_i\|_\infty\leq q\cdot \|f_i\|_{2,\nul}= q$ by $q$-tractability}
\end{align*}

As long as $k\gg q^6/\eps^2$, this is at most $\eps/2$. Finally, we will approximate $Tf_-(x)$ by computing the empirical average of $g^\mathbf{z}(y)$ across a large number of independent samples $y\sim M(x)$. Concretely, sample another collection $\mathbf{y}=\{y_1,\ldots, y_k\}$ where $y_i\sim M(x)$ define $C(x) = C^{\mathbf{z},\mathbf{y}}(x) = \E_{j\sim [k]}[g^\mathbf{z}(y_j)]$.
For any $x\in \Omega$ and $\mathbf{z}$ we will bound
\begin{align*}\E_{\mathbf{y}}[|C(x)-Tg^\mathbf{z}(x)|] &\leq \E_{\mathbf{y}}[(C(x)-Tg^\mathbf{z}(x))^2]^{1/2} \\&= \Var_\mathbf{y}\left[\E_{j\sim [k]}[g^\mathbf{z}(y_j)]\right]^{1/2}\\
&=k^{-1/2}\Var_{y_1\sim M(x)}\left[\E_{j\sim [k]}[g^\mathbf{z}(y_1)]\right]^{1/2}\leq k^{-1/2}\|g^\mathbf{z}\|_\infty.\end{align*}
One can bound $\|g^\mathbf{z}\|_\infty\leq 2q^3$ for any $z$, so again this is at most $\eps/2$ for $k\gg q^6/\eps^2$. By the triangle inequality, we can conclude
\begin{align*}
  \E_{C}[|C(x)-Tf_{<\eps}(x)|]&\leq \E_{\mathbf{y},\mathbf{z}}[|C(x)-Tg^\mathbf{z}(x)|]+\E_{\mathbf{z}}[|Tf_{-}(x)-Tg^\mathbf{z}(x)|]\\
  &\leq \E_{\mathbf{y},\mathbf{z}}[|C(x)-Tg^\mathbf{z}(x)|]+\E_{y\sim M(x),\mathbf{z}}[|f_{-}(y)-g^\mathbf{z}(y)|]\\
  &\leq 2\cdot \eps/2 = \eps.
\end{align*}

\section{Details of Stochastic Gradient Descent Convergence}\label{sec:a1}

In this section we present the proof of \cref{claim:sgd}, which completes the proof of \cref{thm:general-2}. Let us begin by recalling the setup. We are trying to solve the following problem:
\[\min_{g=\sum_{i\in [\ell]}g_i f_i}\E_{x\sim \pla}[\sigma(1+g(x))].\]
Recall that the objective function is a convex function of the coefficients $g_1,\ldots, g_\ell$ of $g$. 
Throughout this section, we will use the notation $\|v\|_2 = \sqrt{\sum_{i\in [n]}v_i^2}$ exclusively for $n$-dimensional vectors. The goal is to find a point $g^* = \sum_i g_i^* f_i$ such that the $2$-norm of the gradient is bounded as follows.
\begin{equation}\label{eqn:grad-bound}
  \|\nabla_{\{g_1,\ldots, g_n\}}\E_{x\sim \pla}[\sigma(1+g(x))]\|_2\leq \delta/3.  
\end{equation}

We will use stochastic gradient descent to find an approximate optimizer. The bulk of this proof will be spent in verifying that the conditions for convergence of stochastic gradient descent are met. We will apply the following result.

\begin{theorem}[e.g. {\cite[Corollary~5.6]{garrigos2023handbook}}]\label{thm:sgd}
    Let $\mu$ be a distribution over differentiable convex functions $F:\R^n\to \R$ such that each function $F$ in the support of $\mu$ is $R$-smooth, i.e. for all $x,y\in \R^n$,
    \[\|\nabla F(x) - \nabla F(y)\|_2\leq R\cdot \|x - y\|_2.\]
    Say $\|\nabla F(x)\|_2\leq K$ for all $f$ in the support of $\mu$ and for all $x\in \R^n$.
    Suppose $\E_\mu[F]$ is minimized at $\hat{x}$. Assume that one can sample $F\sim\mu$, evaluate $F(x)$ and evaluate $\nabla F(x)$ in time $T$. Then there is a stochastic gradient descent based algorithm running in time $\poly(n, K, R, 1/\delta', T)$ that finds a point $x$ satisfying the following with high probability:
    \[\E_{F\sim \mu}[F(x)] \leq \E_{F\sim \mu}[F(\hat{x})] + \delta'.\]
\end{theorem}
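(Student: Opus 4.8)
The plan is to establish the textbook convergence guarantee for stochastic gradient descent on the convex objective $\Phi := \E_{F\sim\mu}[F]$. The algorithm is the standard one: fix a step size $\eta>0$, start at $x_0=0$, and for $t=0,1,\ldots,T-1$ independently sample $F_t\sim\mu$, form the stochastic gradient $\hat g_t := \nabla F_t(x_t)$, and update $x_{t+1} := x_t - \eta\,\hat g_t$; then output the average iterate $\bar x := \tfrac1T\sum_{t=0}^{T-1} x_t$. Each iteration costs one sample and one gradient evaluation, hence time $O(T)$ by hypothesis, so the total runtime is $\poly(T,n)$ times the per-sample cost. Two structural facts drive the analysis: $\Phi$ is convex, being an average of convex functions; and the stochastic gradient is conditionally unbiased, $\E[\hat g_t \mid x_t] = \E_{F\sim\mu}[\nabla F(x_t)] = \nabla\Phi(x_t)$, since $F_t$ is drawn independently of $x_t$ and the exchange of expectation and differentiation is valid because every $F$ in the support of $\mu$ is differentiable with gradient bounded by $K$ (hence $K$-Lipschitz, so dominated convergence applies).

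The heart of the proof is the one-step inequality for the potential $D_t := \|x_t - \hat x\|_2^2$. Expanding the square in the update rule and taking conditional expectation,
\[\E[D_{t+1}\mid x_t] \;=\; D_t - 2\eta\,\langle\nabla\Phi(x_t),\,x_t - \hat x\rangle + \eta^2\,\E[\|\hat g_t\|_2^2 \mid x_t] \;\le\; D_t - 2\eta\bigl(\Phi(x_t) - \Phi(\hat x)\bigr) + \eta^2 K^2,\]
where the middle term is bounded using convexity of $\Phi$ and the last using $\|\hat g_t\|_2\le K$. Summing over $t=0,\ldots,T-1$, telescoping the $D_t$ terms, and rearranging gives $\tfrac1T\sum_t \E[\Phi(x_t)-\Phi(\hat x)] \le \tfrac{\|x_0-\hat x\|_2^2}{2\eta T} + \tfrac{\eta K^2}{2}$, and by Jensen's inequality the left side upper-bounds $\E[\Phi(\bar x) - \Phi(\hat x)]$. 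Choosing $\eta = \|x_0-\hat x\|_2/(K\sqrt T)$ balances the two terms to $\|x_0-\hat x\|_2\,K/\sqrt T$, so $T = \Theta\bigl(\|x_0-\hat x\|_2^2 K^2/(\delta')^2\bigr)$ iterations suffice to get $\E[\Phi(\bar x)] \le \Phi(\hat x) + \delta'/2$.

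To upgrade this in-expectation bound to a high-probability one, I would run $m = O(\log(1/\rho))$ independent copies, obtaining candidates $x^{(1)},\ldots,x^{(m)}$; by Markov's inequality each satisfies $\Phi(x^{(i)}) - \Phi(\hat x)\le\delta'$ with probability $\ge\tfrac12$, so with probability $\ge 1-\rho$ at least one candidate is good. Since $\Phi$ cannot be evaluated exactly, selection is done by estimating the pairwise gaps $\Phi(x^{(i)}) - \Phi(x^{(j)}) = \E_{F\sim\mu}[F(x^{(i)}) - F(x^{(j)})]$: each summand has magnitude at most $K\|x^{(i)} - x^{(j)}\|_2$, which is $\poly(n)$-bounded because every iterate lies within $\eta K T$ of the origin, so by Hoeffding a $\poly(K,\|x^{(i)}-x^{(j)}\|_2,1/\delta',\log(m/\rho))$ number of i.i.d.\ samples estimates each gap to within $\delta'/10$; outputting the candidate that no other beats by more than $\delta'/10$ then yields a point with gap $O(\delta')$, which becomes $\le\delta'$ after rescaling $\delta'$. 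The one place where the clean statement above is mildly idealized — and the main point to be careful about — is that the iteration count, hence the runtime, genuinely depends on $\|x_0 - \hat x\|_2$, i.e.\ on a diameter bound for the region containing $\hat x$ (and, if such a bound is not supplied, one pays an extra logarithmic factor doing a geometric search over $\eta$, again selecting the best run by the estimation procedure above). In every application in this paper that quantity is $\poly(n)$ — indeed $O(1)$ for Planted Clique, via the coercivity bound $\E_\pla[\sigma(1+g)]\gtrsim 1+\eps\|g\|_{2,\nul}$ from \cref{claim:optim-bound} — so it is absorbed into the $\poly$ in the runtime. The smoothness parameter $R$ is not needed for this basic argument; it becomes relevant only for variance-reduced refinements that attain faster rates when the gradient noise is small, which is not the regime here.
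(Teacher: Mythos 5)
Your proof is correct, but note that the paper does not prove this statement at all: it is imported wholesale as Corollary~5.6 of the cited SGD handbook (Garrig\'os--Gower), so there is no internal argument to compare against. What you supply is the standard first-principles derivation — the one-step potential inequality $\E[D_{t+1}\mid x_t]\le D_t-2\eta(\Phi(x_t)-\Phi(\hat x))+\eta^2K^2$, telescoping, Jensen on the average iterate, and a repeat-and-select boost to high probability — and every step checks out, including the care you take with selection when $\Phi$ cannot be evaluated exactly. Two of your side remarks deserve emphasis. First, you correctly identify that the theorem as stated in the paper is mildly incomplete: the iteration count genuinely scales with $\|x_0-\hat x\|_2^2K^2/(\delta')^2$, and no bound on $\|\hat x\|$ appears among the hypotheses, so the claimed $\poly(n,K,R,1/\delta',T)$ runtime implicitly assumes a $\poly(n)$ diameter bound. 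Your justification that this holds in the paper's application (via the coercivity term $\eps\|g\|_{2,\nul}$ in \cref{claim:optim-bound}, which confines the minimizer to a ball of radius $O(1/\eps)$) is exactly the right way to discharge it. Second, you are right that $R$-smoothness is not needed for the in-expectation rate you prove; the paper needs $R$ only downstream, in the separate lemma converting a function-value gap into a gradient-norm bound. The only thing your write-up buys over the paper's citation is self-containedness; the only thing it costs is that the cited corollary presumably states the $\|x_0-\hat x\|_2$ dependence explicitly, which the paper's transcription elides and you had to reconstruct.
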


Let us describe what parameters we will instantiate \cref{thm:sgd} with and verify that its conditions are met. We set $n=\ell$, so our variables are the coefficients $g_1,\ldots, g_\ell$ of $g = \sum_{i\in [\ell]}g_i f_i$. We will set $\mu$ to be the following distribution over functions: sample a point $x\sim \pla$, and output the function $F$ that maps $(g_1,\ldots, g_\ell)$ to $\sigma(1+g(x))$.

Note that we can write $\nabla F(g) = \sigma'(1+g(x))\cdot \mathbf{f}(x)$, where $\mathbf{f}(x)$ is the $\ell$-dimensional vector $[f_1(x),\ldots, f_\ell(x)]$. We can indeed sample from $\mu$ and evaluate $F$ and $\nabla F$ in time $T:=\poly(q)$. Using the fact that $\sigma'(\cdot)\in [0,1]$ and $\|f_i\|_\infty \leq q$, we will bound
\[\|\nabla F(g)\|_2^2=  \sigma'(1+g(x))^2\cdot \|\mathbf{f}(x)\|^2 \leq \sum_{i\in [\ell]} f_i(x)^2\leq \ell q^2\leq q^3.\]
So $K$ can be set to $q^3$. Similarly, we can bound
\begin{align*}
    \|\nabla F(g) - \nabla F(h)\|_2 &= \|(\sigma'(1+g(x))- \sigma'(1+h(x)))\cdot \mathbf{f}(x)\|_2\\
    &\leq \ell q^3\cdot |\sigma'(1+h(x)) - \sigma'(1+g(x))|\\
    &\leq O(\delta^{-1}q^3)\cdot |h(x) - g(x)|\tag{Lipschitz constant of $\sigma'$ is $\max_{t}|\sigma''(t)|=O(\delta^{-1})$}\\
    &\leq O(\delta^{-1}q^3)\cdot\sum_{i\in \ell}f_i(x)\cdot |h_i - g_i|\\
    &\leq O(\delta^{-1} q^3)\cdot \|\mathbf{f}(x)\|_2\cdot  \|h - g\|_2.\tag{Cauchy-Schwarz}\\
    &\leq O(\delta^{-1} q^{4.5})\cdot \|h - g\|_2.
\end{align*}
As a result, we can set $R = O(\delta^{-1}q^{4.5})$. Applying \cref{thm:sgd}, we get an algorithm running in time $\poly(q,1/\delta,1/\delta')$ that finds a point $g^*$ such that with high probability, $\E_{\nul}[\sigma(1+g^*)] \leq \E_{\nul}[\sigma(1+\hat{g})] + \delta'$.

Recall that we wanted an upper bound on the gradient (\cref{eqn:grad-bound}). We will use another off-the-shelf result to bound this.


\begin{lemma}[e.g. {\cite[Lemma~2.28]{garrigos2023handbook}}]
    If $F:\R^d\to \R$ is convex and $R$-smooth, then for all $x\in \R^d$, $\frac{1}{2R}\|\nabla F(x)\|_2^2\leq F(x) - F(\hat{x})$.
\end{lemma}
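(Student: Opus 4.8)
The plan is to reduce the statement to the standard ``descent lemma'' for functions with Lipschitz-continuous gradient, then optimize over the step size of a single gradient step and compare against the global minimum. First I would establish the descent lemma: for all $x,y\in\R^d$,
\[
F(y)\le F(x)+\langle \nabla F(x),\,y-x\rangle+\frac{R}{2}\|y-x\|_2^2.
\]
This is a one-line consequence of the fundamental theorem of calculus: writing $F(y)-F(x)=\int_0^1\langle \nabla F(x+t(y-x)),\,y-x\rangle\,dt$ and subtracting $\langle\nabla F(x),\,y-x\rangle=\int_0^1\langle \nabla F(x),\,y-x\rangle\,dt$, the remaining integrand is $\langle \nabla F(x+t(y-x))-\nabla F(x),\,y-x\rangle$, which by Cauchy--Schwarz and $R$-smoothness is at most $Rt\,\|y-x\|_2^2$; integrating over $t\in[0,1]$ produces the $\tfrac{R}{2}\|y-x\|_2^2$ term. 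Note that this step uses only smoothness, not convexity.

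Next I would apply the descent lemma at the gradient step $y=x-\tfrac1R\nabla F(x)$. Then $\langle\nabla F(x),\,y-x\rangle=-\tfrac1R\|\nabla F(x)\|_2^2$ and $\tfrac R2\|y-x\|_2^2=\tfrac1{2R}\|\nabla F(x)\|_2^2$, so
\[
F(y)\le F(x)-\frac1R\|\nabla F(x)\|_2^2+\frac1{2R}\|\nabla F(x)\|_2^2=F(x)-\frac1{2R}\|\nabla F(x)\|_2^2.
\]
Since $\hat x$ is a global minimizer of $F$ (which, by convexity, exists whenever $F$ is bounded below, and coincides with the set of stationary points), we have $F(\hat x)\le F(y)$, hence $F(\hat x)\le F(x)-\tfrac1{2R}\|\nabla F(x)\|_2^2$. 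Rearranging yields $\tfrac1{2R}\|\nabla F(x)\|_2^2\le F(x)-F(\hat x)$, as claimed.

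I do not expect a genuine obstacle: the only ingredient with any content is the descent lemma, and it is itself immediate from the integral representation above. The hypothesis of convexity is not actually invoked in the chain of inequalities -- it serves only to make the reference to a global minimizer $\hat x$ meaningful and to match the setting in which the lemma is applied in the proof of \cref{claim:sgd}; the bound $\tfrac1{2R}\|\nabla F(x)\|_2^2\le F(x)-F(\hat x)$ holds for any $\hat x$ attaining the infimum of an $R$-smooth $F$.
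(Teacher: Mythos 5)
Your proof is correct: the descent lemma derived from the integral representation of $F(y)-F(x)$, applied at the single gradient step $y=x-\tfrac1R\nabla F(x)$ and combined with $F(\hat x)\le F(y)$, is exactly the standard argument behind the cited result, and your observation that convexity is only needed to speak of a global minimizer is also accurate. The paper does not prove this lemma but imports it from \cite{garrigos2023handbook}, so your argument matches the intended (textbook) proof.
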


We immediately get $\|\nabla F(g^*)\|_2\leq O(\delta^{-1}q^{4.5})\cdot \delta'$. Choosing $\delta'$ to be small enough, this is at most $\delta/4$. To complete the proof, note that rounding the coefficients of $g^*$ to $O(\log (q/\delta))$ bits of precision can change $\|\nabla F(g^*)\|_2$ by at most $\poly(\delta/q)$. Therefore we can evaluate $g^*$ in time $\poly(q, \log 1/\delta)$ while still guaranteeing that $\|\nabla F(g^*)\|_2\leq \delta/2$, completing the proof.

\section{Calculation of Low-Degree advantage for Planted Clique}\label{sec:a3}
Let $\nul=G(n,1/2)$ and $\pla=G(n,1/2,k)$ for $k=o(\sqrt{n})$. The goal of this section is to calculate $\adv^{(\pla,\nul)}[\F_{\leq d}]^2$ up to $1+o(1)$ factors. We begin with the well-known expression $\adv^{(\pla,\nul)}[\F_{\leq d}]^2=\sum_{|S|\leq d}(k/n)^{2\cdot |V(S)|}$ (e.g. \cite{Hopkins18thesis}, Lemma 2.4.1). Let us bound
    \begin{align*}
        \adv^{(\pla,\nul)}[\F_{\leq d}]^2&=\sum_{|S|\leq d}(k/n)^{2\cdot |V(S)|}\\
        &= \binom{n}{2}\cdot \frac{k^4}{n^4} + \sum_{1<|S|\leq d}(k/n)^{2\cdot |V(S)|}\\
        &\leq \frac{k^4}{2n^2}+\sum_{\mathcal{S}}n^{|V(\mathcal{S})|}(k/n)^{2\cdot |V(\mathcal{S})|},
    \end{align*}
    where $\mathcal{S}$ ranges over all $\leq d$-vertex graphs on at least $3$ vertices. Since the number of $d$-vertex graphs is bounded by $2^{\binom{d}{2}}=O(1)$ and $k=o(\sqrt{n})$, we have
    \[\frac{k^4}{2n^2}\leq \adv^{(\pla,\nul)}[\F_{\leq d}]^2 \leq  \frac{k^4}{2n^2}+ O\left(\frac{k^6}{n^3}\right)=(1+o(1))\cdot \frac{k^4}{2n^2}.\]

\end{document}